\newtheorem{thm}{Theorem}
\newtheorem{lemma}{Lemma}
\newtheorem{prop}{Proposition}
\theoremstyle{definition}
\newtheorem*{Remark}{Remark}
\newtheorem*{definition}{Definition}
\newtheorem*{ex}{Examples}
 \newlength\headseptemp
\newcommand{\Hm}[1]{\leavevmode{\marginpar{\tiny%
$\hbox to 0mm{\hspace*{-0.5mm}$\leftarrow$\hss}%
\vcenter{\vrule depth 0.1mm height 0.1mm width \the\marginparwidth}%
\hbox to 0mm{\hss$\rightarrow$\hspace*{-0.5mm}}$\\\relax\raggedright
#1}}}
\newcommand{\Z}{{\mathbb Z}}
\newcommand{\R}{{\mathbb R}}
\newcommand{\C}{{\mathbb C}}
\newcommand{\h}{{\mathbb H}}
\newcommand{\N}{{\mathbb N}}
\newcommand{\EE}{{\mathbb E}}
\newcommand{\PP}{{\mathbb P}}
\newcommand{\T}{{\mathbb T}}
\newcommand{\A}{{\mathcal A}}
\newcommand{\Sp}{{\mathbb S}}
\newcommand{\Lp}{{T}}
\newcommand{\Vis}{\mathrm{Vis}}
\newcommand{\Leb}{\mathrm{Leb}}
\newcommand{\clos}{{\mathrm {clos}\,}}
\newcommand{\per}{{\mathrm {per}}}
\newcommand{\qand}{{\quad\mathrm {and}\quad}}
\newcommand{\qqand}{{\qquad\mathrm {and}\qquad}}
\newcommand{\ka}{{\kappa}}
\newcommand{\al}{{\alpha}}
\newcommand{\be}{{\beta}}
\newcommand{\de}{{\delta}}
\newcommand{\Gm}{{\Gamma}}
\newcommand{\gm}{{\gamma}}
\newcommand{\ph}{{\varphi}}
\newcommand{\lm}{{\lambda}}
\newcommand{\te}{{\theta}}
\newcommand{\eps}{{\varepsilon}}
\newcommand{\om}{{\omega}}
\newcommand{\Om}{{\Omega}}
\newcommand{\si}{{\sigma}}
\newcommand{\ap}[1]{\left( #1\right)}
\newcommand{\ab}[1]{\left( #1\right)}
\newcommand{\as}[1]{\langle #1\rangle}
\newcommand{\set}[1]{\left\{ #1\right\}}
\newcommand{\ov}[1]{\overline{#1}}
\newcommand{\ow}[1]{\tilde{#1}}
\newcommand{\oh}[1]{\widehat{#1}}
\newcommand{\mo}[1]{\left\vert #1\right\vert}
\let\Im\undefined
\let\Re\undefined
\DeclareMathOperator{\Im}{Im}
\DeclareMathOperator{\Re}{Re}
\begin{document}
\title[AC spectrum for random operators on trees of finite cone type]{Absolutely continuous spectrum for random operators on trees of finite cone type}
\author{Matthias Keller}
\author{Daniel Lenz}
\author{Simone Warzel}
\keywords{}
\subjclass[2000]{}

\begin{abstract}We study the spectrum of random operators on a large class of trees. These trees have finitely many cone types and they can be constructed by a substitution rule. The random operators are perturbations of Laplace type operators either by random potentials or by random hopping terms, i.e., perturbations of the off-diagonal elements. We prove stability of arbitrary large parts of the absolutely continuous spectrum for sufficiently small but extensive disorder.
\end{abstract}

\maketitle
\section{Introduction and main results}
We study the stability of absolutely continuous spectrum of Laplace type operators on trees under random perturbations. The physical background is known as  Anderson localization \cite{An}. It deals with the question how the conductivity properties of  a quantum system change under the presence of disorder. Mathematically this translates into the study of the spectral measures of the corresponding operators, see the monographs \cite{CFKS,CL,Sto} for details and further reference.

While in one dimensional situations the spectrum turns immediately into pure point spectrum under the presence of a random potential (see \cite{CKM,GMP,KuS} and for one dimensional trees see also \cite{Br}), it is expected that some parts of the absolutely continuous spectrum are preserved from dimension three on. This is known as the extended states conjecture. However, it has only been  proven in the case of regular trees \cite{ASW,FHS2,Kl1,Kl2} and in strongly related models \cite{FHH,FHS3,FHS4,Hal,KlS}.

In this work we generalize the geometric setting from regular trees to a much larger class. These trees are often called trees of finite cone type or periodic trees and they are characterized by their underlying substitution type structure. In \cite{KLW}, we showed that Laplace type operators exhibit finitely many bands of purely absolutely continuous spectrum which is stable under certain small deterministic perturbations.  Here, we consider perturbations by small random potentials and hopping terms and prove stability of absolutely continuous spectrum. As regular trees are a special case, this work generalizes the main results of  \cite{ASW,FHS2,Kl1,Kl2}.

For the case of regular trees, there are three methods known to show stability of absolutely continuous spectrum under perturbations by small random potentials. The focus of all these methods lies on the analysis of the Green functions which  becomes a random variable under the presence of a random perturbations. (The Green functions are the diagonal matrix elements of the resolvent and their imaginary part converge weakly to the densities of the spectral measures in the absolutely continuous regime.)
Firstly, the method of Klein \cite{Kl1,Kl3}  uses  super symmetry and a Banach space version of the implicit function theorem to show that the moments of the Green functions are continuous for small disorder, both in the strength of the disorder and in the energy. Secondly, the method \cite{ASW} analyzes a Lyapunov exponent to show $L^1$-continuity of the expected value of the Green function at disorder zero.  Thirdly, Froese, Hasler and Spitzer \cite{FHS2} use a fixed point analysis and hyperbolic geometry to bound moments of the Green functions on a binary tree. Their method was later generalized to arbitrary regular trees \cite{Hal2}.

While our approach owes to \cite{FHS1,FHS2}, we develop an advanced scheme to deal with the  more complex geometric situation. This is also reflected in the results. For instance, all our estimates are explicit which could be  useful to obtain lower bounds on the magnitude of the disorder that still allows for absolutely continuous spectrum. Moreover, we prove a certain continuity for the moments of the Green functions in the strength of the disorder.

Recently,  a new method was developed which maps  the whole phase diagram and in particular the region of absolutely continuous spectrum using a moment-generating function of the Green function on a regular tree \cite{AW}. This new scheme covers the perturbative regime as a special case. However, at present it does not yield information about the purity of the absolutely continuous spectrum.

The present paper is based on the PhD thesis \cite{Kel}.


\subsection{The model}
Let us recall the definition of  the trees of finite cone type as it was given in \cite{Kel,KLW}.  Let a finite set $\A$ be given. We refer to its elements as \emph{labels}. Moreover, we need a map
\begin{align*}
    M:\A\times\A\to\N_0,\quad (k,l)\mapsto M_{k,l},
\end{align*}
which we call the \emph{substitution matrix}.
Let $\T$ be a rooted tree with root $o$ and vertex set $V$. We consider a labeling of the vertices be given which is a surjective map $V\to \A$. 
We say $\T$ is generated by the substitution matrix $M$ if for each vertex $x$ with label $k$ the number of forward neighbors of label $l$ is equal to $M_{k,l}$. Note that up to graph isomorphisms there are at most $\#\A$ different rooted trees generated by $M$. Each of them is completely determined by the label of the root (and of course  $M$).

We impose three more conditions on $M$:
\begin{itemize}
  \item [$\mathrm{(M0)}$] If $\A$ consists of only one element, then $M\geq 2$ \emph{(non one dimensional)}.
  \item [$\mathrm{(M1)}$] $M_{k,k}\geq 1$ for all $k\in\A$ \emph{(positive diagonal)}.
  \item [$\mathrm{(M2)}$] For all $k,l\in\A$ there is $n=n(k,l)\in\N$ such that $(M^n)_{k,l}\geq 1$ \emph{(irreducibility)}.
\end{itemize}
Note that in \cite{Kel,KLW} we assumed that the matrix is primitive instead of irreducible in $\mathrm{(M2)}$ (that is to ask that there is an $n$ such that the matrix elements $M^{n}$ are all positive). This was for convenience only since primitivity is implied by irreducibility provided that at least one diagonal entry is positive which is guaranteed by $\mathrm{(M1)}$.
The assumption $\mathrm{(M1)}$ means geometrically that each vertex has a forward neighbor of its own kind. On the other hand, $\mathrm{(M2)}$ means that vertices of every label are always found in the forward tree of every vertex.

There is a one-to-one relationship between trees of finite cone type, finite directed (not necessarily simple) graphs and trees generated by a substitution matrix. In particular, every tree of finite cone type is a directed cover of a finite directed graph and vice versa. On the other hand, every finite directed graph can be encoded by a substitution matrix and vice versa. Hence, given a finite directed graph every tree generated by the corresponding  substitution matrix is a directed cover of this graph which is therefore a tree of finite cone type.  For investigations of random walks  on such trees see \cite{Ly,NW}.

We study random perturbations of the operators $\Lp$ on $\ell^{2}(V)$ acting as
\begin{align*}
(\Lp\ph)(x)=\sum_{y\sim x}\ph(y) +v^{\per}(x)\ph(x),
\end{align*}
where $v^{\per}:V\to\R$ is a label invariant potential, i.e., the values agree on vertices of the same label.
In \cite{Kel,KLW}, we studied the spectral theory of these operators (in \cite{Kel} a more general class of operators and in \cite{KLW} only the adjacency matrix to avoid certain technicalities). It is shown there that the spectrum $\si(\Lp)$ of $\Lp$ is purely absolutely continuous and consists of finitely many intervals.


Let $(\Om,\PP)$ be a probability space and let
$$(v,\te):\Om\times V\to(-1,1) \times(-1,1), \quad (\om,x)\mapsto (v_{x}^{\om},\te_{x}^{\om}),$$
be a measurable function that satisfies the
following two assumptions:
\begin{itemize}
\item [$\mathrm{(P1)}$] For all $x,y\in V$ the random variables $(v_x,\te_{x})$ and $(v_y,\te_{y})$ are independent if the forward trees of $x$ and $y$ do not intersect.
\item [$\mathrm{(P2)}$] For all $x,y\in V$ that share the same label the restrictions of the random variables $(v,\te)$ to the isomorphic forward trees of $x$ and $y$ are identically distributed.
\end{itemize}
Here, we say that two random variables $X$, $Y$ defined on two isomorphic subgraphs $G_{X}$, $G_{Y}$ are identically distributed if for every label invariant graph isomorphism $\psi$ between $G_{X}$ and $G_{Y}$ the random variables $X$ and $Y\circ\psi$ are identically distributed. We will illustrate these conditions by some examples in Section~\ref{s:stability}.

The random variables $\te_{x}$, $x\in V$, induce  random variables $\te(x,y)$ on the edges of the tree, via $\te(x,\dot x)=\te(\dot x,x)=\te_{x}$, where $\dot x$ is the unique predecessor of $x$ with respect to the root. 

Given a coupling constant $\lm\geq0$, we consider the random operators $H^{\lm,\om}$ on $\ell^{2}(V)$ acting as
\begin{align*}
(H^{\lm,\om}\ph)(x)=
\sum_{y\sim x}\big(\ph(y)+\lm \te^{\om}(x,y)\ph(y)\big)+ v^{\per}(x)\ph(x) +\lm  v_{x}^{\om}\ph(x).
\end{align*}

For $\te\equiv0$, we get the Schr\"odinger operator $\Lp+\lm  v_{x}^{\om}$. On the other hand, for $v\equiv 0$ we get an operator with random hopping terms. This random hopping term model is also studied under the name first passage percolation. There, a  graph is equipped with random  edge weights which are interpreted as passage times and then transit times are studied, see  \cite{Kes} for a survey. Here, we focus on the spectral properties of the corresponding operators rather than transit times.



\subsection{Main results}
We study the model introduced in the previous section. Our main result states that arbitrary large parts of the absolutely continuous spectrum of $\Lp$ are stable under perturbations by sufficiently small random potentials.

\begin{thm} \label{main1}{There exists a finite set $\Sigma_{0}\subset\si(\Lp)$ such that for all compact $I\subset\si(\Lp)\setminus \Sigma_0$ there is $\lm$ such that
$H^{\lm,\om}$ has almost surely purely absolutely continuous spectrum in $I$ for all  $(v,\te)$ satisfying $\mathrm{(P1)}$ and $\mathrm{(P2)}$.}
\end{thm}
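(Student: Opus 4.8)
\medskip
\noindent\emph{Proof strategy.} The plan is to run the geometric (hyperbolic) fixed-point scheme of \cite{FHS1,FHS2,Hal2} in the present multi-type setting, with the deterministic analysis of \cite{KLW} as the essential input. First I would set up the recursion for the \emph{forward Green functions}: for $z\in\h$ and $x\in V$ put $\Gm^{\lm,\om}_x(z)=\as{\de_x,(H^{\lm,\om}_x-z)^{-1}\de_x}\in\h$, where $H^{\lm,\om}_x$ is the restriction of $H^{\lm,\om}$ to the forward subtree rooted at $x$. Expanding in $\de_x$ gives, for $x$ of label $k$ with set of forward neighbours $S_x$,
\begin{align*}
\Gm^{\lm,\om}_x(z)=\frac{1}{\,v^{\per}(x)+\lm v^{\om}_x-z-\sum_{y\in S_x}\ab{1+\lm\te^{\om}(x,y)}^2\,\Gm^{\lm,\om}_y(z)\,}\,.
\end{align*}
By $\mathrm{(P1)}$ and $\mathrm{(P2)}$ the law $\nu^{\lm}_k(z)\in\mathcal P(\h)$ of $\Gm^{\lm,\om}_x(z)$ depends only on the label $k$, and $(\nu^{\lm}_k(z))_{k\in\A}$ is a fixed point of the map on $\prod_k\mathcal P(\h)$ induced by this recursion --- the unique one, since $\Im z>0$ makes the recursion a strict $d_\h$-contraction. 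The diagonal resolvent element $G^{\lm,\om}_{xx}(z)$ arises from the $\Gm^{\lm,\om}_y(z)$, $y\sim x$, by one more step of the same kind, and keeping one forward term while discarding the (non-negative-imaginary-part) backward term yields $\Im G^{\lm,\om}_{xx}(z)\le(1-\lm)^{-2}(\Im\Gm^{\lm,\om}_{y_0}(z))^{-1}$ for any forward neighbour $y_0$ of $x$. By the standard passage from such moment bounds to purely absolutely continuous spectrum (Fatou along $\eta\downarrow0$ together with weak $L^s$-compactness for some $s>1$; cf.\ \cite{Kl1,FHS2}), it then suffices to produce a finite set $\Sigma_0\subset\si(\Lp)$ such that for every compact $I\subset\si(\Lp)\setminus\Sigma_0$ there are $\lm>0$ and $s>1$ with
\begin{align*}
\sup_{E\in I,\ 0<\eta<1}\ \EE\big[(\Im\Gm^{\lm,\om}_x(E+i\eta))^{-s}\big]<\infty\qquad(x\text{ of label }k,\ k\in\A),
\end{align*}
uniformly over all $(v,\te)$ satisfying $\mathrm{(P1)}$, $\mathrm{(P2)}$. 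Since $H^{\lm,\om}$ is uniformly bounded, for each fixed $\eta>0$ all these Green functions lie in a compact subset of $\h$, so all the moments below are a~priori finite; the entire issue is uniformity as $\eta\downarrow0$.

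Next I would invoke the deterministic backbone from \cite{KLW}. At $\lm=0$ the recursion collapses to the finite system $\gm_k(z)=(v^{\per}_k-z-\sum_{l}M_{k,l}\gm_l(z))^{-1}$, $k\in\A$, and \cite{KLW} shows that under $\mathrm{(M0)}$--$\mathrm{(M2)}$, for $E\in\si(\Lp)\setminus\Sigma_0$ the solution $\gm_k(E+i\eta)$ converges, uniformly on compacta, to a limit $\gm_k(E+i0)\in\h$ with $\Im\gm_k(E+i0)>0$, while the linearisation of the recursion at this fixed point is a strict contraction of $\prod_k\h$ in a suitable (Perron--Frobenius weighted) product metric. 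This is the only place the geometry of the tree genuinely enters, and it fixes $\Sigma_0$: it is the finite set formed by the band edges of $\si(\Lp)$ --- where $\Im\gm_k\to0$ and the contraction degenerates --- together with the finitely many interior energies already excluded in \cite{KLW}.

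The heart of the matter is the perturbative propagation of this contraction through the randomness. Each map occurring in the recursion is a composition of: hyperbolic isometries ($\zeta\mapsto-1/\zeta$ and $\zeta\mapsto\zeta-E$); the reflection-and-shift $S\mapsto v^{\per}(x)+\lm v^{\om}_x-z-S$, which strictly contracts $d_\h$ because $\Im z>0$; the addition $(\zeta_1,\dots,\zeta_n)\mapsto\zeta_1+\dots+\zeta_n$ over the $n=\#S_x\ge2$ forward neighbours (also a strict $d_\h$-contraction, since adding a point of positive imaginary part shrinks $d_\h$); and the multipliers $\zeta\mapsto(1+\lm\te)^2\zeta$, which are isometries up to an $O(\lm)$ error. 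With reference points $\om_k:=\gm_k(E+i0)$ I would then use the Lyapunov functions $\Phi_k(\zeta)=a_k\,(\cosh d_\h(\zeta,\om_k))^{s}$, with weights $a_k>0$ taken from the previous paragraph and $s>1$ close to $1$; these are convex and increasing along geodesics and satisfy $(\Im\zeta)^{-s}\le c\,(1+\Phi_k(\zeta)/a_k)$. Combining the convexity with Jensen's inequality (to integrate out $(v,\te)$), with the contraction coming from the branching, and with the deterministic contraction above (which persists under the $O(\lm)$ random perturbation once $\lm$ is small), one should obtain an inequality of the form
\begin{align*}
\EE\big[\Phi_k(\Gm^{\lm,\om}_x(z))\big]\ \le\ \rho\,\max_{l\,:\,M_{k,l}\ge1}\EE\big[\Phi_l(\Gm^{\lm,\om}_y(z))\big]\ +\ C\qquad(k\in\A),
\end{align*}
with $\rho<1$ and $C<\infty$ uniform over $E\in I$, $0<\eta<1$ and all admissible $(v,\te)$. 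Since $(\nu^{\lm}_k(z))_k$ is a fixed point of the recursion, taking the maximum over $k$ then gives $\max_k\EE[\Phi_k(\Gm^{\lm,\om}_x(z))]\le C/(1-\rho)$, which is the bound on $\EE[(\Im\Gm^{\lm,\om}_x)^{-s}]$ required above.

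I expect this last estimate to be the main obstacle. In the one-type case of \cite{FHS2} a single, unweighted function suffices, but for the coupled system of labels one must engineer the weights $a_k$ and the exponent $s$ so that the $O(\lm)$-perturbed recursion contracts \emph{all} the $\Phi_k$ at once --- matching the Perron--Frobenius data of $M$ from \cite{KLW} to the convexity of $\cosh d_\h$ --- and, crucially, so that $\rho<1$ and $C<\infty$ remain uniform down to the real axis, which is exactly what breaks down on $\Sigma_0$. With the estimate in hand one concludes in the usual way: it yields $\sup_{E\in I,\,0<\eta<1}\EE[(\Im G^{\lm,\om}_{xx}(E+i\eta))^{s}]<\infty$ for every $x\in V$; Fatou along a sequence $\eta_j\downarrow0$ together with weak $L^s(I)$-compactness shows that the spectral measure of each $\de_x$ is, $\PP$-almost surely, absolutely continuous on the interior of $I$; intersecting these full-measure events over all $x\in V$ and using that $\{\de_x\}_{x\in V}$ spans $\ell^2(V)$, we conclude that $H^{\lm,\om}$ has $\PP$-a.s.\ purely absolutely continuous spectrum in the interior of $I$, hence in $I$ itself after enlarging $I$ slightly within $\si(\Lp)\setminus\Sigma_0$. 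As $\lm$ depends only on $I$ and $\Sigma_0$ only on $M$ and $v^{\per}$, Theorem~\ref{main1} follows.
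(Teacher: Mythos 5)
Your overall frame -- control the hyperbolic distance of the random forward Green functions to the unperturbed fixed point, propagate label-wise moment bounds through the recursion with Perron--Frobenius weights, and finish with a Klein-type criterion -- is indeed the skeleton of the paper's argument (Theorem~\ref{main3} plus Proposition~\ref{p:T}). But the central estimate, your inequality $\EE[\Phi_k]\le\rho\max_l\EE[\Phi_l]+C$ with $\rho<1$ uniform on $I+i(0,1]$, is asserted rather than proved, and the mechanisms you offer for it do not deliver it. First, \cite{KLW} does not provide that ``the linearisation of the recursion at the fixed point is a strict contraction'' at real energies; this is false already for the regular tree, where the one-step map $\zeta\mapsto-1/(z+k\zeta)$ with $z=E$ real is a M\"obius self-map of $\h$, hence a hyperbolic isometry, so its linearisation has hyperbolic norm one. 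Second, the contractions you do invoke -- strictness of $S\mapsto v-z-S$ because $\Im z>0$, and strictness of adding a point of positive imaginary part -- are strict but not uniform, and they degenerate precisely in the limit $\eta\downarrow0$ that the whole theorem is about; nothing in your sketch produces a $\rho<1$ that survives down to the real axis. The genuine source of contraction here (as in \cite{FHS2}) is stochastic, not deterministic: since the forward Green functions feeding the recursion are independent and identically distributed within each label, one may average over label-preserving permutations, and the averaged contraction coefficient is uniformly $\le 1-\de$ unless the inputs are ``aligned'' (comparable $\gm$-distances, comparable imaginary parts, equal arguments), which forces the distance to be small. The paper spends Sections~3--5 on exactly this: the two-step expansion using $\mathrm{(M1^*)}$ (Proposition~\ref{p:expansion}), the uniform contraction estimate for the permutation-averaged coefficient (Proposition~\ref{p:ka}), proved by a three-case analysis (refined Jensen error, geometric-versus-arithmetic mean, argument bounds), and the resulting vector inequality with the stochastic matrix $P$ and its Perron--Frobenius eigenvector (Proposition~\ref{p:VI}). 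Your Lyapunov-function step is the stand-in for all of that, and you yourself flag it as ``the main obstacle''; as it stands the proposal has a gap precisely there.

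Two smaller points: the two-step structure (choosing $o'$ via $\mathrm{(M1^*)}$) is not cosmetic -- forward neighbours of different labels cannot be compared in a single step, which is why a one-step Lyapunov inequality indexed only by labels is delicate; and to obtain \emph{purely} absolutely continuous spectrum \emph{in} $I$ one must also rule out that $\Im G_{x}(E+i\eta)$ degenerates on a positive-measure subset of $I$ (so that $I\subseteq\si(H^{\lm,\om})$ almost surely), which the paper gets from the quantitative closeness to the unperturbed Green function and which your inverse-moment bound would need to be supplemented to give.
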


\begin{Remark}
(a) The theorem includes regular trees as a special case. Therefore, it generalizes the results of \cite{ASW,FHS2,Kl1} to a much more general class of trees. The potentials treated in \cite{ASW,FHS2} are allowed to be unbounded, (provided that certain bounds on the moments exist). For the sake of brevity, we restrict ourselves to the case of bounded potentials. Moreover,
in \cite{ASW} the potentials are allowed to be weakly correlated. However, this method does not yield purity of the absolutely continuous spectrum.

(b) We will actually prove similar result on trees where we replace $\mathrm{(M1)}$ by a weaker condition $\mathrm{(M1^*)}$, see Theorem~\ref{main3}.  For these trees we show that absolutely continuous spectrum is preserved in the subset of  $\si(\Lp)$, where we can guarantee positivity and continuity of the Green functions.

(c) In many cases, $\Sigma_{0}$ is explicitly given by the boundary points of $\si(\Lp)$ and possibly the point $0$, (for details see \cite{Kel,KLW}).

(d) The validity of the theorem does not depend on the particular choice of $\Lp$. In particular, $\Lp$ can be any nearest neighbor operator that is invariant with respect to the labeling of the tree as introduced in  \cite{Kel}.
\end{Remark}

The paper is structured as follows: In the next section we introduce the basic quantities of our analysis. We state a theorem that implies Theorem~\ref{main1}. Moreover, we discuss the strategy of proof. The proof relies on three crucial estimates which are then proven in Sections~\ref{s:TSE},~\ref{s:UC} and~\ref{s:VI}. In Section~\ref{s:proofs} we finally prove the theorem.


\section{Continuity of the Green function at low disorder}
\subsection{Basic properties of the Green function}

Let $\T$ be a rooted tree and let $H$ be a  nearest neighbor operator on $\ell^{2}(V)$, i.e., there is a symmetric function $t:V\times V\to\R$ such that $t(x,y)$ is non-zero if and only if $x$ and $y$ are adjacent  and $v:V\to\R$ such that $H$ acts as
\begin{align*}
(H\ph)(x)=\sum_{y\sim x}t(x,y) \ph(x)+v(x)\ph(x).
\end{align*}
We additionally assume that $H$ is bounded and self adjoint.
Let $\h$ denote the complex upper half plane, i.e., $\h=\{z\in\C\mid \Im z>0\}$. Let $\mu_{x}$ be the spectral measure of $H$ with respect to the characteristic function $\de_{x}$ of $x$. We let  the Green function  at a vertex $x$ be given by the Borel transform of $\mu_{x}$, i.e.,
\begin{align*}
G_{x}(z,H):=\int_{\si(H)}\frac{1}{t-z}d\mu_{x} =\as{\de_{x},(H-z)^{-1}\de_{x}}, \qquad z\in\h.
\end{align*}
It is well known that the Green function is an analytic map from $\h$ to $\h$.

A rooted tree has a natural ordering of the vertices with respect to their distance to the root in terms of the natural graph metric. We say a vertex lies in the $n$-sphere, if it is has distance $n$ to the root. Moreover, we denote the forward neighbors of a vertex $x$ by $S_{x}$ and the forward tree of $x$ by $\T_{x}$ with vertex set $V_{x}$.

It will be convenient to study the Green function of truncated operators $H_{\T_{x}}$, which are the restriction of $H$ to $\ell^{2}(V_x)$ considered as a subset of $\ell^{2}(V)$.
We denote the Green function of $H_{\T_x}$ at $x$ by
\begin{align*}
\Gm_{x}(z,H):=G_{x}(z,H_{\T_{x}}),\qquad z\in\h.
\end{align*}
We  call $\Gm_{x}$ the truncated Green functions, whenever a clear distinction from $G_{x}$ is needed.
It is well known that the truncated Green functions satisfy the following recursion relation
\begin{align}\label{e:rec}\tag{$\clubsuit$}
-\frac{1}{\Gm_{x}(z,H)}=z-v(x)+\sum_{y\in S_{x}}|t(x,y)|^2\Gm_{y}(z,H),\quad z\in\h.
\end{align} This formula is a direct consequence of a twofold application of the resolvent formula, for references see \cite{ASW,Kel,KLW,Kl1}.

In the following let  $\T$ be a tree of finite cone type. Clearly, $\Gm_{x}(\cdot,\Lp)=\Gm_{y}(\cdot,\Lp)$, whenever $x$ and $y$  have the same label. In this case, \eqref{e:rec} becomes a finite system of equations.


We next introduce a subset of $\si(\Lp)$ on which we prove stability of absolutely continuous spectrum.
\begin{definition}
Let $\mathcal{U}$ be the system of all  open sets $U\subset \R$ such that for each $x\in V$
 the function $\h\to\h,\; z\mapsto \Gm_{x}(z,\Lp)$ can uniquely be extended to a  continuous
 function from  $\h\cup U$ to $ \h$ and set
\begin{align*}
\Sigma:=\bigcup_{U\in \mathcal{U}} U.
\end{align*}
\end{definition}
As a consequence,
for all $E\in\Sigma$  and $x\in  V$ the limits  $\Gm_{x}(E,\Lp)=\lim_{\eta\to0}\Gm_{x}(E+i\eta,\Lp)$
exist, are continuous functions in $E$ and $\Im \Gm_{x}(E,\Lp)>0$. Since the measures $\Im G_{x}(E+i\eta,\Lp)dE$ converges weakly to the spectral measure $\mu_{x}$, we have $\Sigma\subset \si_{\mathrm{ac}}(\Lp_x)$ for all $x\in V$ and, therefore, $\Sigma\subset \si_{\mathrm{ac}}(\Lp)$.

Indeed, for the trees satisfying $\mathrm{(M0)}$, $\mathrm{(M1)}$, $\mathrm{(M2)}$, we know the following, see  \cite[Theorem~3.1]{Kel}, \cite[Theorem~6]{KLW}.

\begin{prop}\label{p:T} Let $\T$ be a tree generated by a substitution matrix satisfying $\mathrm{(M0)}$, $\mathrm{(M1)}$, $\mathrm{(M2)}$. Then, the set $\Sigma$ consists of finitely many intervals and
$$\clos\Sigma=\si_{\mathrm{ac}}(\Lp)=\si(\Lp).$$
\end{prop}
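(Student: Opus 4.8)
\emph{Proof plan.}
The plan is to use the finite-cone-type structure to reduce the recursion \eqref{e:rec} to a finite system of equations and to study its solution as a vector of algebraic functions of $z$. Because $\Gm_{x}(\cdot,\Lp)$ depends only on the label of $x$, write $\Gm_{k}$, $k\in\A$, for the common value; then \eqref{e:rec} reads
\begin{align*}
\frac{1}{\Gm_{k}(z)}+z-v^{\per}(k)+\sum_{l\in\A}M_{k,l}\,\Gm_{l}(z)=0,\qquad k\in\A .
\end{align*}
For $z\in\h$ this system has a unique solution with all components in $\h$, namely the truncated Green functions, so each $\Gm_{k}$ is an algebraic function of $z$; being also bounded and Herglotz, it extends continuously to $\h\cup\R$ and analytically across $\R$ off a finite set $F$ of branch points.

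The analytic engine is a Perron--Frobenius bound. Taking imaginary parts at $z\in\h$ gives, with $A(z):=\diag(|\Gm_{k}(z)|^{2})M$,
\begin{align*}
\big(\id-A(z)\big)\big(\Im\Gm_{k}(z)\big)_{k\in\A}=(\Im z)\,\big(|\Gm_{k}(z)|^{2}\big)_{k\in\A}.
\end{align*}
The matrix $A(z)$ is nonnegative and, by $\mathrm{(M1)}$--$\mathrm{(M2)}$, primitive (it has the zero pattern of $M$); since both sides above are entrywise strictly positive, its Perron eigenvalue is $<1$, hence $\diag(\Gm_{k}(z)^{2})M$ has spectral radius $<1$ and the Jacobian of the system is invertible on $\h$. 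Letting $z$ approach a real $E$ at which $\Im\Gm_{k}(E)>0$ for all $k$, a Wielandt-type argument keeps the Jacobian invertible (were it not, all $\Gm_{k}(E)$ would have to be real), so the holomorphic implicit function theorem continues $\Gm$ analytically past $E$ with $\Im\Gm_{k}>0$ nearby. Thus $\Sigma$ equals the open set $\{E\in\R:\Im\Gm_{k}(E+i0)>0\text{ for all }k\}$.

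Finiteness rests on a dichotomy: at $E\in\R\setminus F$, if $\Im\Gm_{j}(E)=0$ for one $j$, then taking imaginary parts of the $j$-th equation forces $\Im\Gm_{l}(E)=0$ for all $l$ with $M_{j,l}>0$, hence for all $l$ by irreducibility. Combining this with the Perron--Frobenius structure one shows that on each maximal open interval on which all $\Gm_{k}$ extend continuously the imaginary parts are either all positive (a band, contained in $\Sigma$) or all zero (a gap), and that there are only finitely many bands; hence $\Sigma$ is a finite union of open intervals. A further standard step passes from the $\Gm_{k}$ to the full Green functions $G_{x}(\cdot,\Lp)$ — again bounded, algebraic and Herglotz with no real poles — so that the gaps carry no spectrum at all; this gives $\clos\Sigma=\si(\Lp)$, and together with $\Sigma\subset\si_{\mathrm{ac}}(\Lp)\subset\si(\Lp)$ (noted above) also $\si_{\mathrm{ac}}(\Lp)=\si(\Lp)$.

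The main obstacle is not the implicit function step but the global control of the algebraic functions $\Gm_{k}$ (and $G_{x}$) along the real line: that they have only finitely many boundary singularities, that these are genuine band edges rather than punctures inside bands, and that no eigenvalues lie in the gaps. This is precisely where $\mathrm{(M0)}$--$\mathrm{(M2)}$ enter, via the primitivity and Perron--Frobenius structure of the substitution matrix; the details are carried out in \cite[Theorem~3.1]{Kel} and \cite[Theorem~6]{KLW}.
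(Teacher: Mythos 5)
Your outline is consistent with how this proposition is actually established: the paper itself gives no proof of Proposition~\ref{p:T} but quotes it from \cite[Theorem~3.1]{Kel} and \cite[Theorem~6]{KLW}, and your reduction to the finite polynomial system for the label-dependent truncated Green functions, the Herglotz/algebraicity argument and the Perron--Frobenius/Wielandt invertibility step is precisely the strategy of those references. Since you defer the same hard global steps (finitely many band edges, absence of spectrum in the gaps) to the same citations the paper relies on, this is essentially the same approach as the paper's.
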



\subsection{A stability result}\label{s:stability}

We will  prove stability of absolutely continuous spectrum inside of $\Sigma$ for a more general class of trees than introduced in the previous section. We replace $(M1)$ by the following weaker assumption:
\begin{itemize}
\item [$\mathrm{(M1^*)}$] For each $k\in\A$ there is $k'\in\A$ with $M_{k,k'}\geq1$ such that for each $l\in\A$ with $M_{k,l}\geq1$ we have $M_{k',l}\geq1$.
\end{itemize}
This has the following consequence for the trees generated by a substitution matrix satisfying this assumption: For each vertex $x$ there is a vertex $x'$ in $S_x$ such that each label found in $S_{x}$ can  also be found in $S_{x'}$. We will then say that the forward neighbor $x'$  of $x$ is chosen with respect to $\mathrm{(M1^*)}$.

As we assume that the tree is not one-dimensional, the assumption $\mathrm{(M1*)}$ implies that each vertex has at least two forward neighbors. Clearly,  $\mathrm{(M1)}$ implies $\mathrm{(M1^*)}$.


For a probability space $(\Om,\PP)$ and an integrable function $f$ on $\Om$ we denote its expected value by
\begin{align*}
    \EE(f)=\int_{\Om }f(\om)d\PP(\om).
\end{align*}
Theorem~\ref{main1} is a consequence of Proposition~\ref{p:T} and the following theorem. A proof will be given in Section~\ref{s:proofs}.

\begin{thm} \label{main3}{Let $\T$ be a rooted tree such that the forward trees of all vertices from a certain sphere on are generated by  substitution matrices that satisfy $\mathrm{(M0)}$, $\mathrm{(M1^*)}$ and $\mathrm{(M2)}$.
For all compact $I\subset\Sigma$ and $p>1$, there is $\lm_0=\lm_0(I)>0$ and $C_{x}:[0,\lm_0)\to[0,\infty)$ for each $x\in V$ with $\lim_{\lm\to0}C_{x}(\lm)=0$ such that
\begin{align*}
\sup_{z\in I+i(0,1]} \EE\ab{|G_{x}(z,H^{\lm})-G_{x}(z,\Lp)|^{p}} \leq C_{x}(\lm)
\end{align*}
for all $\lm\in[0,\lm_0)$ and all $(v,\te)$ satisfying $\mathrm{(P1)}$ and $\mathrm{(P2)}$.
In particular,
$H^{\lm,\om}$ has almost surely purely absolutely continuous spectrum in $I$ for all $\lm\in[0,\lm_0)$.}
\end{thm}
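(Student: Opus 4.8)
The plan is to follow the fixed-point/hyperbolic-geometry philosophy of \cite{FHS1,FHS2}, but adapted to the vector-valued recursion that the finite-cone-type structure imposes. The central object is the map $\Phi_z$ on tuples $(\Gm_k)_{k\in\A}$ (one truncated Green function per label) induced by the recursion \eqref{e:rec}: at disorder zero $\Gm(\cdot,\Lp)$ is the unique fixed point of $\Phi_z$ in $\h^{\#\A}$, and for $E\in\Sigma$ this fixed point extends continuously to the real axis with $\Im\Gm_k(E,\Lp)>0$. The first step is to set up a metric space in which this fixed point is attracting and stable under perturbation. I would equip $\h$ with the hyperbolic metric $d_\h$ and consider, on a neighborhood of the deterministic fixed point, a weighted version of $\max_k d_\h(\cdot,\cdot)$ or of a Lipschitz/hyperbolic-derivative norm; the recursion being a composition of M\"obius-type maps and sums, each building block is a weak contraction for $d_\h$, and the key point (which is exactly where $\mathrm{(M0)}$, $\mathrm{(M1^*)}$ and $\Sigma$ enter) is that passing through one step of the recursion is a \emph{strict} contraction on compact subsets of $I\subset\Sigma$, with a contraction factor $\rho_I<1$ uniform in $z\in I+i(0,1]$. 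This uniform contraction is the deterministic input and should be quoted or re-derived from \cite{KLW,Kel}.

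Next I would introduce the random Green functions. For a vertex $x$ in a high enough sphere, $\Gm_x(z,H^{\lm,\om})$ is a function of the restriction of $(v,\te)$ to the forward tree $\T_x$; by $\mathrm{(P1)}$ and $\mathrm{(P2)}$ its distribution depends only on the label of $x$, and the recursion \eqref{e:rec} becomes a \emph{distributional} fixed point equation: writing $\nu_k$ for the law of $\Gm_x$ when $x$ has label $k$, the family $(\nu_k)$ is a fixed point of a map $\mathcal{T}^{\lm}_z$ on tuples of probability measures, built by (i) plugging independent samples $\Gm_y$, $y\in S_x$, distributed according to the $\nu_{l}$'s, into the right-hand side of \eqref{e:rec}, and (ii) adding the local perturbation $-\lm v_x$ and rescaling the branch weights by $(1+\lm\te)$. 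The plan is to run a contraction argument for $\mathcal{T}^{\lm}_z$ on a Wasserstein-type distance $W_p$ built from the hyperbolic metric: the deterministic contraction of the previous paragraph, together with the fact that summing independent branches and composing with $-1/(\cdot)$ does not expand $W_p$ (convexity of $W_p$ under mixtures and Jensen), shows $\mathcal{T}^{\lm}_z$ is a $\rho_I$-contraction up to an error term of size $O(\lm)$ coming from the perturbation; hence it has a unique fixed point $(\nu_k^{\lm,z})$ near the deterministic $\delta$-measure $\delta_{\Gm_k(z,\Lp)}$, at Wasserstein distance $O(\lm/(1-\rho_I))$, uniformly in $z\in I+i(0,1]$. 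This controls $\EE(d_\h(\Gm_x(z,H^\lm),\Gm_x(z,\Lp))^p)$, and then a standard argument — the hyperbolic distance bounds the ordinary distance once $\Im$ is bounded below, and the moment bound must be promoted from $\Gm_x$ (truncated) to $G_x$ (full) via a finite number of extra applications of the resolvent identity, absorbing the finitely many initial spheres — yields the stated bound $C_x(\lm)\to0$. Finitely many moments of $G_x(z,H^\lm)$ being bounded uniformly as $\Im z\downarrow0$ is precisely the hypothesis of the Simon--Wolff / Klein-type criterion that gives almost surely purely absolutely continuous spectrum in $I$.

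The technical heart, and the main obstacle, is establishing the uniform hyperbolic contraction through one full step of the \emph{vector}-valued recursion on $I\subset\Sigma$, with an explicit factor $\rho_I<1$, and then checking that the perturbation genuinely produces only an $O(\lm)$ additive error in $W_p$ rather than an uncontrolled distortion of the metric near the boundary of $\Sigma$ (where $\Im\Gm_k$ may be small and the hyperbolic metric blows up). Concretely one must show the derivative of $\Phi_z$ at the deterministic fixed point, measured in the hyperbolic metrics at source and target, has norm $\le\rho_I<1$; because the off-diagonal structure of $M$ couples the labels, this is a statement about a non-negative matrix (the matrix of hyperbolic contraction factors weighted by $|M_{k,l}|$ and by $|t|^2$) having spectral radius $<1$, and $\mathrm{(M0)}$, $\mathrm{(M1^*)}$, $\mathrm{(M2)}$ are exactly what forces every row to contribute a genuine contraction and the Perron eigenvalue to stay below $1$ on compacta of $\Sigma$. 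I expect this is the step that requires the three separate estimates the introduction promises in Sections~\ref{s:TSE}, \ref{s:UC} and \ref{s:VI} — respectively a one-step contraction (telescoping) estimate, a uniform-continuity/boundedness estimate for the deterministic objects on $I$, and a variance/perturbation estimate quantifying the $O(\lm)$ error — and the bookkeeping of how these feed into the Wasserstein contraction. Everything else (weak convergence of $\Im G_x(E+i\eta)\,dE$ to $\mu_x$, the passage from truncated to full Green functions over finitely many spheres, and the final spectral-averaging conclusion) is routine once the uniform moment bound is in hand.
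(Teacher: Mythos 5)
There is a genuine gap, and it sits at the very center of your argument: the claim that one step of the recursion \eqref{e:rec} is a \emph{deterministic} strict contraction in the hyperbolic metric, with a factor $\rho_I<1$ uniform on $I+i(0,1]$, to be ``quoted or re-derived from \cite{KLW,Kel}''. No such statement holds, and it is not what those references provide. Already in the one-label (regular tree) case with $z=E$ real, the map $g\mapsto -1/(E+kg)$ is a composition of hyperbolic isometries of $\h$, so along the diagonal the recursion does not contract at all; more generally each building block (M\"obius inversion, real translation, summation of branches) is only a \emph{weak} contraction for $\gm$ (respectively $d_\h$), exactly as recorded in Lemma~\ref{l:Z}, where the contraction quantities $c_x$ are bounded by $1$ but not uniformly below $1$. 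Consequently your Wasserstein fixed-point scheme, whose strict factor $\rho_I$ is supposed to come from this deterministic derivative bound, collapses: plugging independent branches into \eqref{e:rec} and composing with $-1/(\cdot)$ gives factor $1$, not $\rho_I<1$, and the $O(\lm)$ perturbation term can then no longer be absorbed. The actual source of contraction in this problem is stochastic, not deterministic: one gains strictly less than $1$ only after averaging the $p$-th moments over label-invariant permutations of the (independent, identically distributed within a label) branch variables, and the gain comes from three mechanisms — the error term in Jensen's inequality when the $\gm$-distances on the forward sphere have different magnitudes, the arithmetic--geometric mean gap $Q_{x,y}<1$, and the decorrelation of arguments $\cos\al_{x,y}<1$ — which is precisely the content of the averaged coefficient $\ka_o^{(p)}$ and the uniform contraction estimate, Proposition~\ref{p:ka}, in Section~\ref{s:UC}. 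Your reading of Sections~\ref{s:TSE}, \ref{s:UC}, \ref{s:VI} as ``telescoping, boundedness on $I$, and an $O(\lm)$ variance estimate'' misidentifies this: Section~\ref{s:UC} is not a continuity estimate for deterministic objects but the replacement for the contraction you assumed.

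A second, related problem: even granting some contraction, you only posit it ``on a neighborhood of the deterministic fixed point'' and then look for a distributional fixed point near $\delta_{\Gm_k(z,\Lp)}$. Without an a priori bound keeping the random truncated Green functions in that neighborhood uniformly as $\Im z\downarrow 0$, this is circular. The paper avoids this by proving the contraction estimate \emph{globally}, for all $g\in\h^{S_{o,o'}}$ outside a ball $B_{R(\lm)}$ whose radius tends to $0$ with $\lm$ (Proposition~\ref{p:ka}), and by feeding it into a componentwise vector inequality $\EE\gm\le(1-\de)P\,\EE\gm+C(\lm)$ (Proposition~\ref{p:VI}) that is closed up by the Perron--Frobenius left eigenvector of the label-indexed stochastic matrix $P$ — here your intuition that $\mathrm{(M1^*)}$, $\mathrm{(M2)}$ enter through irreducibility of a nonnegative matrix on labels is qualitatively on target, but the matrix is $P$, built from the unperturbed weights $p_x$, and the eigenvector argument needs $P^{\top}u=u$ with the \emph{same} factor $(1-\de)$, not a spectral radius strictly below $1$. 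The closing steps you call routine (passing from $\gm$-moments to Euclidean moments, from truncated $\Gm_x$ to full $G_x$ over finitely many spheres, and Klein's limiting-absorption criterion) do match the paper's Section~\ref{s:proofs}, but they rest on the moment bound that your contraction argument, as it stands, does not deliver.
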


\begin{Remark}
(a)  The statement about convergence of the Green functions in Theorem~\ref{main3} is stronger than the results obtained in \cite{ASW} and \cite{FHS2} for regular trees. In \cite[Theorem~1.4]{Kl1} an even stronger statement is found for regular trees.

(b) Many examples of directed or universal covers of finite graphs are covered by the assumptions of Theorem~\ref{main3}. For example, the universal cover of every finite (not necessary simple) graph, where the minimal vertex degree is at least three  and where every vertex has loop, can be seen to satisfy the assumptions.
\end{Remark}

Let us close this section by giving some examples for random operators that satisfy $\mathrm{(P1)}$ and $\mathrm{(P2)}$.

\begin{ex} Let $\T$ be a tree that satisfies the assumption of Theorem~\ref{main3}.

(a) Let $v_{x}$, $x\in V$, be independent  random variables. Then $(v,0)$ satisfies $\mathrm{(P1)}$. If additionally $v_{x}$ and $v_{y}$ are  identically distributed whenever $x$ and $y$ have the same label, then, also $\mathrm{(P2)}$ is satisfied.

(b) Let $w_{x}$,  $x\in V$, be independent and  identically distributed random variables. Moreover, let $k$ be the maximal number of forward neighbors of vertices in $\T$. Then, the potential given by
\begin{align*}
    v_{x}=\sum_{y\in\T_{x}}\frac{1}{k^{d(x,y)+1}}w_{y},
\end{align*}
where $d(\cdot,\cdot)$ is the natural graph distance, together with $\te\equiv0$ satisfies $\mathrm{(P1)}$ and $\mathrm{(P2)}$.

(c) We assign an additional distinct label to the root and let $v^{\per}=-\deg$, where $\deg$ is the vertex degree.
For independent and  identically distributed random variables $\te_{x}$,  $x\in V$,   and $v_{x}=\te_{x}+\sum_{y\in S_{x}}\te_{y} =\sum_{y\sim x} \te(x,y)$, the pair $(v,\te)$ satisfies $\mathrm{(P1)}$ and $\mathrm{(P2)}$. Then, $H^{\lm,\om}$ is a weighted Laplacian with random edge weight, i.e.,
\begin{align*}
(H^{\lm,\om}\ph)(x)=\sum_{y\sim x} (1+\lm\te^{\om}(x,y))(\ph(y)-\ph(x)).
\end{align*}
\end{ex}


\subsection{Strategy of the proof}

The overall strategy of the proof of Theorem~\ref{main3} is inspired by \cite{FHS2}, however, the actual steps are very  different. This is on the one hand due to the fact that our geometric situation is essentially more complicated. On the other hand, our aims are somewhat higher since we show continuity of the moments of the Green function in the coupling constant.

In order to control the Green function of the perturbed operator, we measure its distance to the unperturbed Green function by means of a hyperbolic semi-metric (i.e., a symmetric and positive definite function, which not necessarily satisfies the triangle inequality).
It is given by a function $\gm:\h\times\h\to[0,\infty)$
\begin{align*}
    \gm(g,h)=\frac{|g-h|^{2}}{\Im g\Im h}.
\end{align*}
The function $\gm$ is related to the standard hyperbolic metric $d_{\h}$ of the upper half plane via $d_{\h}(g,h)=\cosh^{-1}(\tfrac{1}{2}\gm(g,h)+1)$.
This approach was introduced in \cite{FHS1,FHS2} and adapted  in \cite{Kel,KLW}. For more details on $\gm$ see \cite[Section~2.3]{Kel}.

Let us sketch the key ideas for the proof of Theorem~\ref{main3}. For $\lm\geq0$, $z\in\h$, we look at the random variable $\gm_{x}$  given by
\begin{align*}
{\gm_{x}^{\om}}={\gm{\ap{\Gm_x(z,H^{\lm,\om}),\Gm_x(z,\Lp)}}},\quad \om\in\Om.
\end{align*}
We use the recursion relation \eqref{e:rec} to expand this random variable at the root vertex $x=o$ in terms of its values on the next two spheres: Letting $S_{o,o'}=S_{o'}\cup S_{o}\setminus\{o'\}$, for $o'$ chosen with respect to $\mathrm{(M1^*)}$, we find non random \emph{weights} $p_{x}$ satisfying $\sum_{x\in S_{o,o'}} p_{x}=1$ and random \emph{contraction quantities } $c_{x}\leq1$ such that
\begin{align*}
\gm_{o}\leq (1+ C)\sum_{x\in S_{o,o'}}p_{x}c_{x}\gm_{x}+C
\end{align*}
with non random constants $C$ satisfying $C\to0$ as $\lm\to0$. This formula shows that the distance of the Green functions $\gm_{o}$ at vertex the $o$ can be estimated by a convex combination of the distances $\gm_x$ in $S_{o,o'}$ which are multiplied by the contraction quantities $c_x$.
We refer to this as a \emph{two step expansion estimate} since it is proven by first expanding $o$ in terms of $S_{o}$ and secondly expanding $o'$ in terms of $S_{o'}$ which then yields the sum over $S_{o,o'}$. For the details see Section~\ref{s:TSE}.  We will use the fact that  the random variables $\Gm_{x}^{\lm}:=( \Gm_x(z, H^{\lm}))$  are identically distributed for all $x\in V$ that carry the same label. Moreover, for all $x$ and $y$ with non intersecting forward trees they are independent. Hence, we can interchange the random variables $\Gm_{x}^{\lm}$ and $\Gm_{y}^{\lm}$ in the expected value whenever $x$ and $y$ have the same label and their forward trees do not intersect. We introduce the set $\Pi$ of  permutations of $S_{o,o'}$ that leave the label  of a vertex invariant. In Section~\ref{s:UC}, we show that an averaged contraction coefficient $\ka_{o}^{(p)}$, $p>1$, given by
\begin{align*}
\ka_{o}^{(p)}=\frac{\frac{1}{|\Pi|}\sum_{\pi\in\Pi} \ab{\sum_{x\in S_{o,o'}} p_{x}(c_{x}\circ\pi)(\gm_{x}\circ\pi)}^{p}} {\frac{1}{|\Pi|}\sum_{\pi\in\Pi} \sum_{x\in S_{o,o'}} p_{x}(\gm_{x}\circ\pi)^{p}}
\end{align*}
satisfies
$$\ka_{o}^{(p)}\leq 1-\de_0,$$
whenever some $\gm_{x}$ is large and $\lm$ is sufficiently small. The constant $\de_0>0$ does not dependent  on $z$ and $\lm$ (as long as $\Re z\in I$ for some compact  $I\subset\Sigma$ and $\lm$ is sufficiently small). We refer to this as a \emph{uniform contraction estimate} and it is found in Proposition~\ref{p:ka}, Section~\ref{s:UC}. This terminology comes from the fact that \eqref{e:rec} can be expressed as a map from the truncated Green functions of $S_{o}$ to the one at $o$ and the one of $S_{o'}$ to the one at $o'$. The composition as a map from $\h^{S_{o,o'}}$ to $\h$ combined with the averaging over the permutations in $\Pi$ can be then thought to be an contraction. 

We denote by $o(j)$ the root of the tree with root label $j\in\A$ and we consider the vector
 $(\gm_{o(j)})_{j\in\A}$. The $p_x$'s in the
two step expansion estimate are actually functions of  the unperturbed Green functions and depend therefore continuously on $z$ in $\h\cup\Sigma$. They give rise to a stochastic matrix $P:{\A\times\A}\to[0,\infty)$ and we will show, using the two step expansion and the uniform contraction estimate, that for  sufficiently small $\lm$ the  \emph{vector inequality}
\begin{align*}
\EE{\big((\gm_{o(j)}^{p})\big)}
&\leq (1-\de)P\EE{\big((\gm_{o(j)}^{p})\big)}+(C_j)
\end{align*}
holds,
with $\de>0$ independent of $z$ and $\lm$ and $C_j\to0$ as $\lm\to0$, $j\in\A$. Of course, the inequality is understood componentwise. It will be proven in Section~\ref{s:VI}.

Now, by the Perron-Frobenius theorem there is a positive left eigenvector $u$ of $P$ such that $P^{\top} u=u$, which also depends continuously on $z$. Hence,
\begin{align*}
\as{u,\EE\big((\gm_{o(j)}^{p})\big)} \leq(1-\de)\as{u,\EE\big((\gm_{o(j)}^{p})\big)}+C,
\end{align*}
where $\as{\cdot,\cdot}$ is the standard scalar product in $\R^{\A}$.
This yields immediately that the $j$-th component of the vector $\EE((\gm_{o(j)}^{p}))$ is smaller than $c/(u_j\de)$.

From this inequality we derive bounds on the moments of the Euclidean distance of the two Green functions as stated in the first statement of Theorem~\ref{main3}. The second statement of Theorem~\ref{main3} about preservation of the absolutely continuous spectrum then follows  from a variation of the limiting absorption principle.



\section{The two step expansion estimate}\label{s:TSE}
The aim of this section is to expand the $\gm$-distance of the  perturbed and the unperturbed Green function. The expansion yields a convex combination of the distances on the forward spheres which are multiplied by contraction quantities.

The proof of the estimate involves two steps. Firstly, we prove an inequality to deal with small linear perturbation of one argument in $\gm$. Secondly, we prove a one step expansion estimate. Finally, we come to the two step expansion estimate, Proposition~\ref{p:expansion}.


\subsection{Linear perturbation estimate}

\begin{lemma}\label{l:ti}
Let $a,b\in(-1,1)$, $h\in\h$. Then for all $g\in\h$ and $\lm\in[0,1]$
\begin{align*}
\gm((1+\lm a)g+\lm b,h)\leq(1+c_0(\lm,h))\gm(g,h)+c_{0}(\lm,h)
\end{align*}
where $c_0(\lm,h)=-1+(1+\lm |a|)(1+ 2\lm\ab{2|a||h|+ |b|}/{\Im h})^{2}$. In particular, the function $c_0$ is independent of $g$ and $\lim_{\lm\to0}c_0(\lm,h)=0$ for fixed $h$.
\end{lemma}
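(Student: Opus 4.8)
The plan is to exploit directly the explicit formula $\gm(g,h)=\frac{|g-h|^2}{\Im g\,\Im h}$. First I would write the perturbed argument as $g_\lm := (1+\lm a)g+\lm b$ and compare $\gm(g_\lm,h)$ with $\gm(g,h)$ by separately controlling the numerator and the denominator. For the numerator, $|g_\lm - h| \le |g-h| + \lm|a|\,|g| + \lm|b|$; the awkward term is $\lm|a|\,|g|$, which is not directly comparable to $|g-h|$ because $g$ can be large even when $g-h$ is small only if $h$ is large. So I would bound $|g| \le |g-h| + |h|$ to get $|g_\lm - h| \le (1+\lm|a|)|g-h| + \lm(|a|\,|h| + |b|)$. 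For the denominator, $\Im g_\lm = (1+\lm a)\Im g \ge (1-\lm|a|)\Im g$, which could be problematic since $1-\lm|a|$ can be small; but since $a\in(-1,1)$ and $\lm\le 1$ we only lose a factor, and in fact the clean move is to bound $\Im g_\lm \ge \Im g / (1+\lm|a|)$ when $a\ge 0$... more robustly, I would use $\Im g_\lm \ge \Im g\,(1-\lm|a|)$ and absorb. Actually, re-examining the target constant $c_0(\lm,h) = -1 + (1+\lm|a|)\,(1 + 2\lm(2|a||h|+|b|)/\Im h)^2$, I see the $(1+\lm|a|)$ factor (not $(1+\lm|a|)^2/(1-\lm|a|)$ or similar) must come from a one-sided estimate, so presumably one writes $\Im g_\lm = (1+\lm a)\Im g$ and, in the regime $a<0$, uses instead a direct comparison; I would check both signs of $a$ and take whichever bound gives the stated form, most likely splitting $\gm(g_\lm,h) = \frac{|g_\lm-h|^2}{(1+\lm a)\Im g\,\Im h}$ and handling $\frac{1}{1+\lm a}$ via $\frac{1}{1+\lm a}\le 1+\lm|a|$ — valid because $\lm|a|<1$ so $(1+\lm a)(1+\lm|a|) = 1 + \lm(|a|+a) - \lm^2|a|^2 \ge 1$ when $a\ge 0$, and when $a<0$ one has $1+\lm a = 1-\lm|a|$ and $(1-\lm|a|)(1+\lm|a|) = 1-\lm^2|a|^2 \le 1$, so that bound holds for all $a\in(-1,1)$.

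The core computation, then, is:
\begin{align*}
\gm(g_\lm,h) &= \frac{|g_\lm - h|^2}{\Im g_\lm\,\Im h}
\le (1+\lm|a|)\,\frac{\big((1+\lm|a|)|g-h| + \lm(|a||h|+|b|)\big)^2}{\Im g\,\Im h}.
\end{align*}
To turn the squared sum into the claimed form I would pull out $(1+\lm|a|)^2$ from the first term inside, or rather estimate $(1+\lm|a|)|g-h| + \lm(|a||h|+|b|) \le (1+\lm|a|)\big(|g-h| + \lm(|a||h|+|b|)\big)$, square, and then use the elementary inequality $(X+Y)^2 \le (1+\eps)X^2 + (1+\eps^{-1})Y^2$ — but that introduces a free $\eps$, which is not what the statement has. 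Instead, the statement's right-hand side is affine in $\gm(g,h)$ with the \emph{same} coefficient $1+c_0$ on both $\gm(g,h)$ and the additive constant, which is the signature of the bound $(X+Y)^2 \le (1+t)(X^2+1)\cdot\text{something}$... Concretely I expect the right manipulation is: set $Y = \lm(|a||h|+|b|)/\sqrt{\Im g\,\Im h}$ and $X = |g-h|/\sqrt{\Im g\,\Im h} = \sqrt{\gm(g,h)}$, so $\gm(g_\lm,h) \le (1+\lm|a|)^3 (X+Y)^2$, then use $Y^2 \le$ (a multiple of) $1$ only if $\Im g$ is bounded below — which it need not be. This is the one genuine subtlety.

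The resolution, and the step I expect to be the main obstacle, is handling the factor $\Im g$ in the denominator of $Y^2$, since $\Im g$ can be arbitrarily small. The trick must be that the \emph{same} small $\Im g$ also appears (to a larger power, via $|g-h|^2 \ge (\Im h - \Im g)^2$ or rather via the fact that if $\Im g$ is small then $\gm(g,h)$ is already large because $\Im h$ is fixed) — so one does a case distinction, or one uses $\gm(g,h) \ge \frac{(\Im g - \Im h)^2}{\Im g\,\Im h} \ge \frac{\Im h}{\Im g} - 2$, giving $\frac{1}{\Im g} \le \frac{\gm(g,h)+2}{\Im h}$. Plugging this into $Y^2 = \lm^2(|a||h|+|b|)^2/(\Im g\,\Im h) \le \lm^2(|a||h|+|b|)^2(\gm(g,h)+2)/(\Im h)^2$ converts the additive error into something affine in $\gm(g,h)$, and similarly the cross term $2XY = 2|g-h|\,Y/\sqrt{\Im g \Im h}$; after collecting, $2|a||h|+|b|$ appears (the $2|a||h|$ rather than $|a||h|$ coming precisely from combining the $|a||h|$ in the numerator estimate with another $|a||h|$-type contribution from the cross term or from $|g|\le|g-h|+|h|$ used once more), and matching coefficients yields exactly $c_0(\lm,h) = -1 + (1+\lm|a|)(1+2\lm(2|a||h|+|b|)/\Im h)^2$. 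The independence of $c_0$ from $g$ is then manifest from the formula, and $\lim_{\lm\to 0}c_0(\lm,h) = -1 + 1\cdot 1 = 0$ is immediate. I would present this as: (i) numerator estimate with $|g|\le|g-h|+|h|$; (ii) denominator estimate $\Im g_\lm \ge \Im g/(1+\lm|a|)$; (iii) the bound $1/\Im g \le (\gm(g,h)+2)/\Im h$; (iv) assemble and read off $c_0$.
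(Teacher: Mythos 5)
Your plan is, in substance, a one-shot version of the paper's argument: the paper also starts from $|f_{\lm}(g)-h|\le |g-h|+\lm(|a||g|+|b|)$ together with $|g|\le|g-h|+|h|$, but instead of your step (iii) it splits into the two cases $|g-h|\ge \Im h/2$ (where the relative perturbation $\lm(|a||g|+|b|)/|g-h|$ is bounded by $\lm(3|a||h|+2|b|)/\Im h$ and is absorbed multiplicatively) and $|g-h|\le \Im h/2$ (where $\Im g\ge \Im h/2$ controls the additive terms). Your substitute, $1/\Im g\le(\gm(g,h)+2)/\Im h$, is correct, since $\gm(g,h)\ge (\Im g-\Im h)^2/(\Im g\,\Im h)\ge \Im h/\Im g-2$, and the bookkeeping you leave as ``matching coefficients'' does in fact close: writing $X^2=\gm(g,h)$ and $u=\lm(|a||h|+|b|)/\Im h$, one gets a coefficient $(1+\lm|a|)\bigl((1+\lm|a|)+u\bigr)^2$ in front of $\gm(g,h)$ and a constant of the same type, and both are dominated by $(1+\lm|a|)\bigl(1+2\lm(2|a||h|+|b|)/\Im h\bigr)^2$ because $|a|\le |a||h|/\Im h$; the cross term needs one extra elementary bound such as $X\sqrt{X^2+2}\le X^2+1$. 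So the strategy is sound, but in a final write-up this computation must actually be carried out rather than asserted.

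The genuine gap is your denominator step (ii). For $a<0$ one has $\Im\bigl((1+\lm a)g\bigr)=(1-\lm|a|)\Im g$, and the inequality you invoke, $\tfrac{1}{1+\lm a}\le 1+\lm|a|$, is exactly backwards there: your own computation $(1-\lm|a|)(1+\lm|a|)=1-\lm^2|a|^2\le 1$ shows $\tfrac{1}{1-\lm|a|}\ge 1+\lm|a|$, not $\le$. The honest factor for $a<0$ is $(1-\lm|a|)^{-1}$, which is not bounded by $1+\lm|a|$ and blows up as $\lm|a|\to1$; indeed, with the constant exactly as stated the estimate degenerates in that corner (take $b=0$, $g=h$, $\lm=1$, $a\to-1$: the left-hand side is of order $(1+\lm a)^{-1}$ while the right-hand side stays bounded for fixed $h$). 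To be fair, the paper's own proof carries the factor $(1+|a|\lm)$ on the left of its displays and glosses over the same sign issue, and the point is immaterial for the way the lemma is used, namely only in the limit $\lm\to0$: the clean repair is to prove the bound with $(1-\lm|a|)^{-1}$ in place of $(1+\lm|a|)$ in $c_0$ (which still tends to $0$ as $\lm\to0$), or to restrict to, say, $\lm|a|\le 1/2$. But as written, your justification of (ii) asserts an inequality that your own displayed identity refutes, so the argument does not go through for $a<0$ without this correction.
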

\begin{proof}
Denote $f_{\lm}(g)=(1+\lm a)g+\lm b$.
We  distinguish two cases: If $|g-h|\geq \Im h/2$, then,  using $|g|\leq|g-h|+|h|$,  we get
$$(1+|a|\lm)\gm(f_{\lm}(g),h)\leq \ab{1+\lm\frac{|a||g|+ |b|}{|g-h|}}^2 \gm(g,h)\leq \ab{1+\lm{\frac{3|a||h|+2|b|}{\Im h}}}^2 \gm(g,h).$$
If, on the other hand, $|g-h|\leq \Im h/2$, then $\Im g\geq\Im h/2$. We obtain
\begin{align*}
(1+|a|\lm)\gm(f_{\lm}(g),h)&\leq \gm(g,h)+\frac{2\lm(|a||g|+|b|)|g-h|+\lm^{2}(|a||g|+|b|)^2}{\Im g\Im h}\\
&\leq \gm(g,h)-1+
\ab{1+\lm\frac{4|a||h|+2|b|}{\Im h}}^{2}.
\end{align*}
The statement now follows from the definition of $c_0$.
\end{proof}

\subsection{A one step expansion estimate}\label{ss:OSE}

We consider a tree  $\T$ generated by a substitution matrix that satisfies $\mathrm{(M0)}$ and $\mathrm{(M1^*)}$.
For a vertex $o$ let $o'$ be a forward neighbor of $o$ chosen  with respect to $\mathrm{(M1^*)}$. (We may think of $o$ as the root of $\T$.) Let
$$S_{o,o'}:=S_{o'}\cup S_{o}\setminus \{o'\}.$$
For a subset $W\subseteq V$ and a vector $g\in \h^{V}$ we write $g_{W}$ for the  restriction of $g$ to $W$.
Next, we define functions of  vectors in $\h^{S_{o,o'}}$. These vectors are usually denoted by $g$ and they will later play the role of $\Gm_{S_{o,o'}}(z,H^{\lm,\om})$.  Having the recursion relation of the truncated Green functions \eqref{e:rec} in mind, we define for $g\in\h^{S_{o,o'}}$, $z\in\h$, $w,w',\vartheta\in(-1,1)$,
\begin{align*}
g_{o'}:=g_{o'}(z,w',g)&:=-\frac{1}{z-v^{\per}(o')-w'+\sum_{x\in S_{o'}}g_{x}},\\
g_{o'}:=g_{o'}(z,w,w',\vartheta,g)&:=-\frac{1}{z-v^{\per}(o)-w +(1+\vartheta)g_{o'}(z,w',g)+\sum_{x\in S_{o}\setminus\{o'\}}g_{x}}.
\end{align*}
We will often keep the dependence of $g_o$ and $g_{o'}$ on $z$, $w$, $w'$, $\vartheta$ implicit. Putting $\ow \Gm_{x}^{\lm,\om} :=(1+\lm \te_{x}^{\om})\Gm_x(z,H^{\lm,\om})$, $x\in{S_{o,o'}}$, we see that
\begin{align*}
\Gm_{o'}(z,H^{\lm,\om})&=g_{o'}(z,\lm v_{o'}^{\om},\ow\Gm_{S_{o,o'}}^{\lm,\om})\qand
\Gm_{o'}(z,H^{\lm,\om})=g_{o}(z,\lm v_{o}^{\om}, \lm v_{o'}^{\om},\lm \te_{o'}^{\om}, \ow\Gm_{S_{o,o'}}^{\lm,\om}).
\end{align*}
For a vertex $x\in\{o\}\cup S_{o}\cup S_{o'}$,
we define $\gm_{x}:\h^{S_{o,o'}}\to[0,\infty)$ by
\begin{align*}
    \gm_{x}(g):=\gm(g_{x},\Gm_{x}(z,\Lp)), \qquad g\in \h^{S_{o,o'}},
\end{align*}
where for $x=o$ and $x=o'$ we use the definition of $g_{o}$ and $g_{o'}$ above. We write $\Gm_{x}=\Gm_{x}(z,\Lp)$ for short and for $x,y\in S_{i}$, $i\in\{o,o'\}$, we let the functions $q_{y}, Q_{x,y}, \cos\al_{x,y}:\h^{S_{o,o'}}\to\R$ be given by
\begin{align*}
q_{y}(g)&:=\frac{\Im g_{y}}{\sum_{u\in S_{i}}\Im g_{u}},\\
Q_{x,y}(g)&:=\frac{\sqrt{\Im g_{x}\Im g_{y}\Im \Gm_{x}\Im \Gm_{y}\gm_{x}(g)\gm_{y}(g)}}{\frac{1}{2}(\Im g_{x}\Im \Gm_{y}\gm_{y}(g)+\Im g_{y}\Im \Gm_{x}\gm_{x}(g))},\\
\cos\al_{x,y}(g)&:=\cos\arg(g_{x}-\Gm_{x})\ov{(g_{y}-\Gm_{y})},
\end{align*}
provided that $g_{x}\neq \Gm_{x}$, $g_{y}\neq \Gm_{y}$. Otherwise, we put $Q_{x,y}(g)=\cos\al_{x,y}(g)=0$. The functions $\gm_{x}$, $Q_{x,y}$, $\al_{x,y}$ depend all on the unperturbed Green function and thus on $z$. We suppress this dependance to ease notation.

The next lemma is the one step expansion formula. A similar statement can be found in \cite[Lemma 5]{KLW}.

\begin{lemma}\label{l:OSE} (One step expansion estimate)
Let $I\subset \Sigma$ be  compact. Then there exist $c_0:[0,\infty)\to[0,\infty)$ with $\lim_{\lm\to0}c_0(\lm)=0$ such that for all $g\in \h^{S_{o,o'}}$, $z\in I+i[0,1]$, $\lm\in[0,1]$, $w,w',\vartheta\in(-\lm,\lm)$, $i\in\{o,o'\}$,
\begin{align*}
{\gm\ap{g_{i},h_{i}}}
&\leq(1+c_0(\lm))\sum_{x\in S_{i}} \frac{\Im \Gm_{x}(z,\Lp)}{\sum_{u\in S_{i}}\Im \Gm_{u}(z,\Lp)} \ab{\sum\limits_{y\in S_i} q_{y}(g)Q_{x,y}(g)\cos\al_{x,y}(g)}\gm_{x}(g) +c_0(\lm).
\end{align*}
\end{lemma}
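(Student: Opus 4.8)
My plan is to reduce the estimate to three facts and then assemble them. \emph{Fact A} (geometry of $\h$): for $s>0$ the dilation $g\mapsto sg$ and the inversion $g\mapsto-1/g$ preserve $\gm$, and for $c$ with $\Im c\ge0$ the translation $g\mapsto g+c$ is non-expanding, $\gm(g+c,h+c)\le\gm(g,h)$; each is a one-line computation. \emph{Fact B} is Lemma~\ref{l:ti} itself, together with the remark that, since $\gm$ is symmetric, the same bound covers an affine perturbation sitting in the second argument. I would also record once and for all that, for a compact $I\subset\Sigma$, the unperturbed truncated Green functions $\Gm_x=\Gm_x(\cdot,\Lp)$ of the finitely many relevant labels extend continuously to $\h\cup\Sigma$ with $\Im\Gm_x>0$ there (Proposition~\ref{p:T} and the discussion before it), so that $\sup_{z\in I+i[0,1]}|\Gm_x|<\infty$ and $\inf_{z\in I+i[0,1]}\Im\Gm_x>0$; hence the quantity $c_0(\lm,h)$ of Lemma~\ref{l:ti} is dominated by a single $c_0(\lm)\to0$ whenever $h$ ranges over these $\Gm_x$ or over fixed finite combinations of them (in particular the unperturbed denominators $z-v^{\per}(i)+\sum_{x\in S_i}\Gm_x$), uniformly in $z\in I+i[0,1]$.

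\emph{Fact C} is the algebraic core: for $g_1,\dots,g_n,h_1,\dots,h_n\in\h$, writing $a_x=g_x-h_x$ and using the functions $Q_{x,y},\cos\al_{x,y}$ of the text with the $\Gm$'s replaced by $h$'s, one has the exact identity
\begin{align*}
\gm\Big(\sum_x g_x,\ \sum_x h_x\Big)=\sum_x\frac{\Im h_x}{\sum_u\Im h_u}\Big(\sum_y\frac{\Im g_y}{\sum_u\Im g_u}\,Q_{x,y}\cos\al_{x,y}\Big)\gm(g_x,h_x).
\end{align*}
I would prove it by expanding $\big|\sum_x a_x\big|^2=\sum_{x,y}|a_x||a_y|\cos\al_{x,y}$, substituting $|a_x|^2=\gm(g_x,h_x)\,\Im g_x\,\Im h_x$, and observing that $Q_{x,y}$ is \emph{designed} so that the two contributions of an unordered pair $\{x,y\}$, $x\neq y$, telescope:
\begin{align*}
\Im h_x\,\Im g_y\,Q_{x,y}\,\gm(g_x,h_x)+\Im h_y\,\Im g_x\,Q_{y,x}\,\gm(g_y,h_y)=2\,|a_x||a_y|,
\end{align*}
a short computation using $Q_{x,y}=Q_{y,x}$; the diagonal reproduces $|a_x|^2$ since $Q_{x,x}=1$, $\cos\al_{x,x}=1$; and since $\cos\al_{x,y}=\cos\al_{y,x}$ this reassembles $\big|\sum_x a_x\big|^2/\big((\sum_u\Im g_u)(\sum_u\Im h_u)\big)$. (The later sections use only the AM--GM bound $Q_{x,y}\le1$; here the statement is an equality, and replacing the inner signed sum by its modulus merely enlarges the right-hand side, producing the form in the lemma.)

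Now the assembly. For $i=o'$: $g_{o'}=-1/\big((z-v^{\per}(o'))+(\sum_{x\in S_{o'}}g_x-w')\big)$ and $\Gm_{o'}=-1/\big((z-v^{\per}(o'))+\sum_{x\in S_{o'}}\Gm_x\big)$, so by Fact~A (invert, then drop the non-expanding shift $z-v^{\per}(o')$) and Fact~B (Lemma~\ref{l:ti} with $a=0$, $\lm b=-w'$, $|w'|<\lm$, applied against the \emph{controlled} vector $\sum_{x\in S_{o'}}\Gm_x$) one gets $\gm(g_{o'},\Gm_{o'})\le(1+c_0(\lm))\,\gm\big(\sum_{x\in S_{o'}}g_x,\sum_{x\in S_{o'}}\Gm_x\big)+c_0(\lm)$, and Fact~C over $S_{o'}$ followed by $\sum_y q_yQ_{x,y}\cos\al_{x,y}\le|\sum_y q_yQ_{x,y}\cos\al_{x,y}|$ is the claim. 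For $i=o$: $g_o=-1/\big((z-v^{\per}(o))-w+(1+\vartheta)g_{o'}+\sum_{x\in S_o\setminus\{o'\}}g_x\big)$ and $\Gm_o=-1/\big((z-v^{\per}(o))+\Gm_{o'}+\sum_{x\in S_o\setminus\{o'\}}\Gm_x\big)$. Inverting (Fact~A) rewrites $\gm(g_o,\Gm_o)$ as the $\gm$-distance of the two denominators; since the unperturbed denominator is controlled, Lemma~\ref{l:ti} removes $-w$ with uniform cost $(1+c_0(\lm))$, $+c_0(\lm)$, and inverting back, $\gm(g_o,\Gm_o)\le(1+c_0(\lm))\,\gm\big(\widehat g_o,\Gm_o\big)+c_0(\lm)$ with $\widehat g_o=-1/\big((z-v^{\per}(o))+(1+\vartheta)g_{o'}+\sum_{x\in S_o\setminus\{o'\}}g_x\big)$ (here $1+\vartheta>0$, so $(1+\vartheta)g_{o'}\in\h$ and, as $|S_o|\ge2$, all sums are over nonempty sets of points in $\h$). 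Fact~A again (invert, drop the shift $z-v^{\per}(o)$) bounds $\gm(\widehat g_o,\Gm_o)$ by $\gm\big((1+\vartheta)g_{o'}+\sum_{x\in S_o\setminus\{o'\}}g_x,\ \Gm_{o'}+\sum_{x\in S_o\setminus\{o'\}}\Gm_x\big)$, which is the left-hand side of Fact~C for the pair $\big((1+\vartheta)g_{o'},\Gm_{o'}\big)$ at the vertex $o'$ and $(g_x,\Gm_x)$ at $x\in S_o\setminus\{o'\}$; the identity then produces exactly the weights $\Im\Gm_x/\sum_{u\in S_o}\Im\Gm_u$ and the quantities $q_{o'}(g),Q_{o',y}(g),\cos\al_{o',y}(g),\gm_{o'}(g)$ of the statement, read with the forward value $(1+\vartheta)g_{o'}$ that actually enters the recursion for $g_o$. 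Taking moduli of the inner sums finishes.

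The step I expect to be the crux is Fact~C: seeing that the definition of $Q_{x,y}$ is precisely the one for which the double sum $\sum_{x,y}|a_x||a_y|\cos\al_{x,y}$ regroups — with no inequality — into the asserted convex combination of the distances $\gm(g_x,h_x)$ with the stated non-random weights; this is the small miracle that makes the whole contraction scheme work. The only other point requiring care is uniformity: every occurrence of $c_0(\lm)$ must be controlled simultaneously in $z$ over the compact window $I$ and in the entirely arbitrary vector $g\in\h^{S_{o,o'}}$. The $z$-uniformity is exactly Fact~A's standing remark (Proposition~\ref{p:T} and the definition of $\Sigma$), since each $c_0$ depends on $z$ only through finitely many unperturbed Green functions and fixed linear combinations of them. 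The $g$-uniformity is built into the argument: the real shifts $-w,-w'$ are removed by Lemma~\ref{l:ti} \emph{against the controlled unperturbed denominators} rather than against the uncontrolled perturbed ones, so no $\Im g_{o'}$ or $|g_{o'}|$ ever enters an error constant; and the factor $1+\vartheta$ is disposed of purely algebraically through Fact~C, using only that it is a positive scalar.
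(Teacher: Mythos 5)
Your proposal is correct and follows essentially the same route as the paper: the paper's proof consists precisely of your Fact~A (invariance of $\gm$ under inversion and real translation, monotonicity under adding $i\Im z$), your Fact~C (the exact rearrangement identity that the definitions of $Q_{x,y}$ and $\cos\al_{x,y}$ are built for), and Lemma~\ref{l:ti}, with the uniformity of $c_0$ in $z$ and $g$ obtained, as you do, from continuity of the unperturbed truncated Green functions on the compact set $I+i[0,1]$. Your assembly for $i=o$ --- removing the shift $w$ via Lemma~\ref{l:ti} against the controlled unperturbed denominator and feeding the forward value $(1+\vartheta)g_{o'}$ into the identity --- merely makes explicit what the paper's one-line ``combining these two facts and Lemma~\ref{l:ti}'' leaves implicit.
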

\begin{Remark}
The weights $q_{y}$ add up to one, i.e., we have $\sum_{y\in S_{i}}q_{y}=1$. Moreover, the quantity $Q_{x,y}$ is a quotient of a geometric and an arithmetic mean and therefore takes values between $0$ and $1$. As also $-1\leq\cos\al_{x,y}\leq1$, we conclude that  $-1\leq\sum q_{y}Q_{x,y}\cos\al_{x,y}\leq 1$.
\end{Remark}

\begin{proof}
By a direct calculation it can be seen that for $\xi,\zeta\in\h$ and $z\in \h$
\begin{align*}
\gm\Big({-\frac{1}{z+\xi},-\frac{1}{z+\zeta}}\Big) =\gm(z+\xi,z+\zeta) =\gm(i\Im z+\xi,i\Im z+\zeta)\leq\gm(\xi,\zeta).
\end{align*}
Moreover, given $\xi_x,\zeta_x\in \h$, $x\in S_{o}\cup S_{o'}$, we calculate directly
\begin{align*}
\Big|{\sum_{x\in S_{i}}\xi_x-\sum_{x\in S_{i}}\zeta_x}\Big|^{2}&=\sum_{x,y\in S_{i}}\cos\al_{x,y}|\xi_{x}-\zeta_{x}||\xi_{y}-\zeta_{y}|\\
&=\sum_{x,y\in S_{i}}\cos\al_{x,y}\ab{\Im\xi_{x}\Im\zeta_{x}\Im\xi_{y}\Im\zeta_{y} \gm(\xi_{x},\zeta_{x})\gm(\xi_{y},\zeta_{y})}^{\frac{1}{2}}\\
&=\sum_{x\in S_{i}}\Im \zeta_{x} \Big({\sum_{y\in S_{i}} \Im \xi_{y}\cos \al_{x,y}Q_{x,y}}\Big)\gm(\xi_{x},\zeta_{x})
\end{align*}
for $i\in\{o,o'\}$.
Hence, combining these two facts  and Lemma~\ref{l:ti} we get the statement. The constant $c_0$ depends only on $\lm$ and the unperturbed truncated Green functions.
As these are continuous functions  from $\Sigma\cup\h$ to $\h$, their imaginary parts are uniformly larger than zero and their absolute value is uniformly bounded on $I+i[0,1]$. Hence, $c_{0}$  depends only on $\lm $ and $I$.
\end{proof}


\subsection{Statement and proof of the two step expansion estimate}

Based on the one step expansion estimate we make the following definitions: For $z\in I+i[0,1]$ and $w\in\R$, we define the \emph{contraction quantities}  $c_{x}:\h^{S_{o,o'}}\to[-1,1]$ for  $x\in S_{o}$ by
\begin{align*}
c_{x}(g)&:=\sum\limits_{y\in S_o} q_{y}(g)Q_{x,y}(g)\cos\al_{x,y}(g)
\end{align*}
and for $x\in S_{o'}$ by
\begin{align*}
c_{x}(g)&:=\Big({\sum\limits_{y\in S_{o}} q_{y}(g)Q_{x,y}(g)\cos\al_{x,y}(g)}\Big)\Big( {\sum\limits_{y\in S_{o'}} q_{y}(g)Q_{o',y}(g)\cos\al_{o',y}(g)}\Big).
\end{align*}
By the remark after Lemma~\ref{l:OSE} the contraction quantities $c_{x}$ take values in $[-1,1]$.
The parameters $z$ and $w$ enter via $g_{o'}$ and $z$ enters also via $\Gm_{x}(z,\Lp)$ into $Q$ and $\al$. We define the  \emph{weights} as functions of the unperturbed Green functions for $x\in S_{o}$
\begin{align*}
p_{x}:=
q_{x}(\Gm_{S_{o,o'}}(z,\Lp)) =\frac{\Im \Gm_{x}(z,\Lp)}{\sum_{y\in S_{o}}\Im \Gm_{y}(z,\Lp)}
\end{align*}
and for $x\in S_{o'}$
\begin{align*}
p_{x}:=q_{o'}(\Gm_{S_{o,o'}}(z,\Lp))q_{x}(\Gm_{S_{o,o'}}(z,\Lp))=
\frac{\Im \Gm_{o'}(z,\Lp)\Im \Gm_{x}(z,\Lp)}{\big(\sum_{y\in S_{o}}\Im \Gm_{y}(z,\Lp)\big)^{2}}.
\end{align*}
Clearly, $\sum_{x\in S_{o,o'}}p_{x}=1$.

The following proposition is the two step expansion.
\medskip

\begin{prop}\label{p:expansion}(Two step expansion estimate)
Let $I\subset \Sigma$ be compact. Then there exist $c:[0,\infty)\to[0,\infty)$ with $\lim_{\lm\to0}c(\lm)=0$ such that for all $g\in \h^{S_{o,o'}}$, $z\in I+i[0,1]$,    $\lm\in[0,1]$, $w,w',\vartheta\in (-\lm, \lm)$
  $${\gm\ap{g_{o},\Gm_{o}(z,\Lp)}}
\leq(1+c(\lm))\sum_{x\in S_{o,o'}}p_{x}c_{x}(g)\gm_{x}(g)+c(\lm),$$
where $g_{o}=g_{o}(z,w,w',\vartheta,g)$ and $g_{o'}=g_{o'}(z,w',g)$.
\end{prop}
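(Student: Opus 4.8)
The plan is to combine the one step expansion estimate (Lemma~\ref{l:OSE}) with itself, applied first at the vertex $o$ and then at the vertex $o'$. Concretely, I would first apply Lemma~\ref{l:OSE} with $i=o$ to the vector $g\in\h^{S_{o,o'}}$; since $S_o$ contains $o'$ together with the remaining forward neighbors of $o$, this yields
\begin{align*}
\gm(g_o,\Gm_o)\leq(1+c_0(\lm))\Big[p_{o'}c_{o'}^{(o)}(g)\,\gm_{o'}(g)+\sum_{x\in S_o\setminus\{o'\}}p_x\,c_x(g)\,\gm_x(g)\Big]+c_0(\lm),
\end{align*}
where $c_x(g)=\sum_{y\in S_o}q_y(g)Q_{x,y}(g)\cos\al_{x,y}(g)$ and I abbreviate by $c_{o'}^{(o)}(g)$ the analogous coefficient at $x=o'$, and the weights $p_x$ are exactly $q_x(\Gm_{S_{o,o'}})$ because $\gm_x$ is evaluated against the unperturbed Green function in the prefactor $\Im\Gm_x/\sum_u\Im\Gm_u$. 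Note the coefficients $c_x(g)$ for $x\in S_o\setminus\{o'\}$ are already in the final form, so these terms match the claimed right-hand side directly (the weights $p_x$ for such $x$ coincide with the definition given before the proposition).

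The second step handles the term $\gm_{o'}(g)$. Here I would apply Lemma~\ref{l:OSE} again, now with $i=o'$, to control $\gm_{o'}(g)=\gm(g_{o'}(z,w',g),\Gm_{o'})$ in terms of $\{\gm_x(g)\}_{x\in S_{o'}}$, obtaining
\begin{align*}
\gm_{o'}(g)\leq(1+c_0(\lm))\sum_{x\in S_{o'}}\frac{\Im\Gm_x}{\sum_{u\in S_{o'}}\Im\Gm_u}\Big(\sum_{y\in S_{o'}}q_y(g)Q_{x,y}(g)\cos\al_{x,y}(g)\Big)\gm_x(g)+c_0(\lm).
\end{align*}
Substituting this into the first inequality and multiplying through, the coefficient in front of $\gm_x(g)$ for $x\in S_{o'}$ becomes, up to a factor $(1+c_0(\lm))^2$, the product
\begin{align*}
p_{o'}\cdot c_{o'}^{(o)}(g)\cdot\frac{\Im\Gm_x}{\sum_{u\in S_{o'}}\Im\Gm_u}\cdot\Big(\sum_{y\in S_{o'}}q_y(g)Q_{x,y}(g)\cos\al_{x,y}(g)\Big).
\end{align*}
The key bookkeeping point is that $p_{o'}\cdot(\Im\Gm_x/\sum_{u\in S_{o'}}\Im\Gm_u)$ equals exactly the weight $p_x$ defined for $x\in S_{o'}$ before the proposition, namely $\Im\Gm_{o'}\Im\Gm_x/(\sum_{y\in S_o}\Im\Gm_y)^2$ — wait, this needs $p_{o'}=\Im\Gm_{o'}/\sum_{y\in S_o}\Im\Gm_y$ and the stated $p_x$ for $x\in S_{o'}$ to be $q_{o'}(\Gm)\,q_x(\Gm)$ where the second factor's denominator is $\sum_{u\in S_{o'}}\Im\Gm_u$; so in fact the definition in the paper should read with that denominator, and modulo that the identification is immediate. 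Likewise $c_{o'}^{(o)}(g)\cdot(\sum_{y\in S_{o'}}q_y Q_{o',y}\cos\al_{o',y})$ is precisely the contraction quantity $c_x(g)$ defined for $x\in S_{o'}$, since the factor $c_{o'}^{(o)}(g)=\sum_{y\in S_o}q_y(g)Q_{o',y}(g)\cos\al_{o',y}(g)$.

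Finally I would collect the error terms: the constant multiplying the whole sum is $(1+c_0(\lm))^2=1+c(\lm)$ with $c(\lm)\to0$, and the additive errors assemble as $c_0(\lm)+(1+c_0(\lm))\,p_{o'}\,|c_{o'}^{(o)}(g)|\,c_0(\lm)\leq c_0(\lm)+(1+c_0(\lm))c_0(\lm)=:c(\lm)$, again tending to $0$ as $\lm\to0$, uniformly for $z\in I+i[0,1]$ by the same compactness argument as in Lemma~\ref{l:OSE} (the unperturbed truncated Green functions are continuous on $\Sigma\cup\h$, hence have imaginary parts bounded below and moduli bounded above on $I+i[0,1]$). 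One should also absorb the prefactor $\Im\Gm_{o'}$-type bounds, but these are uniformly controlled on $I+i[0,1]$ for the same reason. The only genuine subtlety — and the step I expect to be the main obstacle — is the careful algebraic verification that the composed coefficients reorganize exactly into $p_x c_x(g)$ with the $p_x$, $c_x$ as pre-defined, in particular getting the denominators in the definitions of $p_x$ and $q_y$ to line up correctly (the paper's displayed formula for $p_x$, $x\in S_{o'}$, must be read with denominator $(\sum_{y\in S_o}\Im\Gm_y)(\sum_{u\in S_{o'}}\Im\Gm_u)$ rather than the square, or equivalently the $q_x$ factor refers to $S_{o'}$); once the indexing is pinned down, the rest is a routine two-line composition of the two one step estimates.
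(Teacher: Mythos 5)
Your proof is correct and is essentially the paper's own proof: the paper also obtains the two step estimate simply by composing the one step expansion at $o$ with the one at $o'$, so that the weights multiply to $p_{x}=q_{o'}(\Gm_{S_{o,o'}}(z,\Lp))\,q_{x}(\Gm_{S_{o,o'}}(z,\Lp))$ and the two sums of $q_{y}Q\cos\al$ multiply to $c_{x}$, with the two factors $(1+c_0(\lm))$ and the additive errors absorbed into $c(\lm)$ by the same compactness of $I+i[0,1]$. Your reading of the denominator in the displayed formula for $p_{x}$, $x\in S_{o'}$, is the intended one: the operative definition is the product $q_{o'}q_{x}$ (which is exactly what the composition produces and what makes $\sum_{x\in S_{o,o'}}p_{x}=1$), the displayed square $\big(\sum_{y\in S_{o}}\Im\Gm_{y}\big)^{2}$ agreeing with it only when $o'$ carries the same label as $o$, e.g.\ under $\mathrm{(M1)}$.
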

\begin{proof}
The statement follows directly from combining the statements for $o$ and $o'$ in the one step expansion estimate above.
\end{proof}


\section{The uniform contraction estimate}\label{s:UC}

In this section we define an averaged contraction coefficient and show that it is uniformly smaller than one. The definitions are inspired by the fact that the Green functions on two vertices are identically distributed if their labels coincide.


Let $\T$  again be a tree generated by a substitution matrix that satisfies $\mathrm{(M0)}$ and $\mathrm{(M1^*)}$. For a given vertex $o$ let $o'$ be a forward neighbor of $o$  chosen with respect to $\mathrm{(M1^*)}$.

We start with some simple, but important, facts that will later help to explain how we prove
the uniform contraction estimate.

\begin{lemma}\label{l:Z} Let $p\geq1$, $z\in\h$, $w\in\R$ and $g\in\h^{S_{o,o'}}$. Then
\begin{align*}
\Big({\sum_{x\in S_{o,o'}}p_{x}c_{x}(g)\gm_{x}(g)}\Big)^p
\leq \sum_{x\in S_{o,o'}}p_{x}\gm_{x}(g)^p
\leq \max_{x\in S_{o,o'}}\gm(g_{x},h_{x})^p, \end{align*}
Moreover, the following are equivalent
\begin{itemize}
\item [(i.)]   $\sum_{x\in S_{o,o'}}p_{x}c_{x}(g)\gm_{x}(g)=\sum_{x\in S_{o,o'}}p_{x}(g)\gm_{x}(g)$.
\item [(ii.)]  $c_{x}=1$ for all $x\in S_{o,o'}$.
\item [(iii.)]  $Q_{x,y}=1$ and $\cos\al_{x,y}=1$ for all $x,y\in S_{i}$ and $i\in\{o,o'\}$.
\end{itemize}
\begin{proof}
The statements follow from the Jensen inequality and the basic properties of the quantities such as that the $p_{x}$'s sum up to one and the $c_{y}$'s are bounded by one. \end{proof}\end{lemma}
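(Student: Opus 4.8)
The plan is to reduce everything to elementary, well-known inequalities about convex combinations, using only the structural properties of the quantities $p_x$, $c_x$, $\gm_x$ that have already been established. Recall that $\sum_{x\in S_{o,o'}} p_x = 1$ with $p_x\geq 0$, that $|c_x(g)|\leq 1$ by the remark following Lemma~\ref{l:OSE}, and that $\gm_x(g)=\gm(g_x,h_x)\geq 0$.

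For the first chain of inequalities, I would argue as follows. Since $|c_x(g)|\leq 1$ and $p_x\gm_x(g)\geq 0$, we have
\[
\sum_{x\in S_{o,o'}}p_{x}c_{x}(g)\gm_{x}(g)\leq \sum_{x\in S_{o,o'}}p_{x}\gm_{x}(g).
\]
Now apply Jensen's inequality (equivalently, convexity of $t\mapsto t^p$ on $[0,\infty)$ for $p\geq 1$) with respect to the probability weights $(p_x)_{x\in S_{o,o'}}$ to the right-hand side:
\[
\Big(\sum_{x\in S_{o,o'}}p_{x}\gm_{x}(g)\Big)^p \leq \sum_{x\in S_{o,o'}}p_{x}\gm_{x}(g)^p.
\]
Combining these two steps (the left-hand side of the first being nonnegative, so raising to the $p$-th power preserves the inequality) gives the first asserted inequality. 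The second asserted inequality, $\sum_{x\in S_{o,o'}}p_{x}\gm_{x}(g)^p\leq \max_{x}\gm(g_x,h_x)^p$, is immediate since each $\gm_x(g)^p=\gm(g_x,h_x)^p$ is bounded above by the maximum and the $p_x$ sum to one.

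For the equivalence, I would prove (i.)$\Rightarrow$(ii.)$\Rightarrow$(iii.)$\Rightarrow$(i.), or rather set up (iii.)$\Leftrightarrow$(ii.) and (ii.)$\Leftrightarrow$(i.) separately. The implication (iii.)$\Rightarrow$(ii.) is a direct computation: if $Q_{x,y}=\cos\al_{x,y}=1$ for all $x,y\in S_i$, then $c_x(g)=\sum_{y\in S_o}q_y(g)\cdot 1\cdot 1 = 1$ for $x\in S_o$ (since $\sum_{y\in S_o}q_y=1$), and similarly $c_x(g)=1\cdot 1=1$ for $x\in S_{o'}$. Conversely, (ii.)$\Rightarrow$(iii.): for $x\in S_o$, $c_x(g)=\sum_{y\in S_o}q_y(g)Q_{x,y}(g)\cos\al_{x,y}(g)=1$ is a convex combination of quantities $Q_{x,y}\cos\al_{x,y}\in[-1,1]$ equalling $1$, which forces $Q_{x,y}(g)\cos\al_{x,y}(g)=1$ for every $y$ with $q_y(g)>0$ — and since all imaginary parts are positive, every $q_y(g)>0$; as both factors lie in $[-1,1]$ this forces $Q_{x,y}=1$ and $\cos\al_{x,y}=1$. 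The same argument handles $i=o'$ using $c_{o'}\cdot(\text{second bracket})$-type bounds, noting $o'\in S_o$. Finally (ii.)$\Leftrightarrow$(i.): given (ii.), (i.) is trivial; conversely, $\sum p_x c_x\gm_x = \sum p_x\gm_x$ with $c_x\leq 1$ and $p_x\gm_x\geq 0$ forces $c_x=1$ whenever $p_x\gm_x>0$.

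The one genuinely delicate point I anticipate is the "$\Leftarrow$" direction of (ii.)$\Leftrightarrow$(i.): the equality $\sum p_x c_x\gm_x=\sum p_x\gm_x$ only forces $c_x=1$ on the support of $p_x\gm_x$, i.e. where both $p_x>0$ and $\gm_x(g)>0$; if some $\gm_x(g)=0$ then $c_x$ is unconstrained there (and indeed, when $g_x=\Gm_x$ the convention sets various $Q$'s and $c_x$'s to $0$). So the clean equivalence as stated presumably relies on the implicit standing assumption $g_x\neq\Gm_x$ for all $x$ (under which $\gm_x(g)>0$ and all the $q_y(g)>0$), or one reads (i.)--(iii.) as statements modulo this genericity. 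I would state this assumption explicitly at the start of the equivalence part and then the argument above goes through verbatim; this is the step to be careful about, everything else being routine convexity bookkeeping.
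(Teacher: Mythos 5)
Your argument is correct and is essentially the paper's own (one-sentence) proof spelled out: the bound $c_{x}\leq 1$ together with Jensen's inequality for the probability weights $p_{x}$ gives the chain of inequalities, and convex-combination rigidity (all $q_{y}>0$, all factors in $[-1,1]$) gives the equivalences. The degeneracy caveat you flag (the case $g_{x}=\Gm_{x}$, where the conventions force $Q_{x,y}=c_{x}=0$ so (i.) can hold while (ii.), (iii.) fail) is a genuine imprecision in the statement, but the paper's proof glosses over it as well, so your treatment matches and is if anything more careful.
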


Since the Jensen's inequality is generically strict, (i), (ii), (iii) are not equivalent to  equality in the first inequality of the lemma for $p>1$. This fact will be exploited in the proof of the uniform contraction estimate.


\subsection{Label invariant permutation and an averaged contraction coefficient}

Let $o\in V$. We introduce permutations of $S_{o,o'}$ that leave the label invariant.
\medskip

\begin{definition}\label{d:Pi}(Label invariant permutations) For $o\in V$ we define the set of \emph{label invariant permutations } $\Pi:=\Pi(S_{o,o'})$ of $S_{o,o'}$ by
\begin{align*}
\Pi:=\{\pi:S_{o,o'}\to S_{o,o'}\mid \mbox{bijective and $\pi(x)$ and $x$ carry the same label for all $x\in S_{o,o'}$ }\}.
\end{align*}
\end{definition}

For given $g\in\h^{S_{o,o'}}$, $\pi\in\Pi$ the composition of $g$ and $\pi$ is of course given by $g\circ\pi=(g_{\pi(x)})_{x\in S_{o,o'}}$. By the symmetry of the tree, we  get for unperturbed truncated Green function
$$\Gm_{S_{o,o'}}(z,\Lp)=\Gm_{S_{o,o'}}(z,\Lp)\circ\pi$$
for all $\pi\in\Pi$. As $\Gm_{x}(z,H^{\lm})$ and $\Gm_{y}(z,H^{\lm})$ are independent and identically distributed for $x,y\in S_{o,o'}$ that carry the same label, we see that
\begin{align*}
\EE(f(\Gm_{S_{o,o'}}(z,H^{\lm})))=    \EE(f(\Gm_{S_{o,o'}}(z,H^{\lm})\circ\pi))
\end{align*}
for any integrable function $f$ and all $\pi\in\Pi$.

For a function $f$ on $\h^{S_{o,o'}}$, we define $$f^{(\pi)}(g):=f(g\circ \pi).$$
This convention gives for example
\begin{align*}
g_{x}^{(\pi)}
=\left\{\begin{array}{ll}
g_{\pi(x)} &: x\in S_{o,o'}, \\
(g_{o'}\circ\pi)(z,w,g)=g_{o'}(z,w,g\circ\pi) &: x=o', \\
\end{array}
\right.
\end{align*}
and
\begin{align*}
\gm_{x}^{(\pi)}(g)
=\left\{\begin{array}{ll}
\gm(g_{\pi(x)},\Gm_{x}(z,\Lp)) &: x\in S_{o,o'}, \\
\gm(g_{o'}\circ\pi ,\Gm_{o'}(z,\Lp)) &: x=o'. \\
\end{array}
\right.
\end{align*}

\begin{definition}(Averaged contraction coefficient) Let $p\geq1$, $z\in\h$, $w\in\R$.
We define the function $\ka_o^{(p)}:=\ka_{o}^{(p)}(z,w,\cdot):\h^{S_{o,o'}}\to\R$, called the \emph{averaged contraction coefficient}, by
\begin{align*}
\ka_o^{(p)}(g)&:=\frac{\sum_{\pi\in\Pi}
\ab{\sum_{x\in S_{o,o'}}p_{x}c_{x}^{(\pi)}(g)\gm_{x}^{(\pi)}(g)}^p} {\sum_{\pi\in\Pi} \sum_{x\in S_{o,o'}}p_{x}c_{x}^{(\pi)}(g){\gm_{x}^{(\pi)}(g)}^{p}}.
\end{align*}
\end{definition}

The non-negativity of $\ka_{o}^{(p)}$ is an easy consequence of the two step expansion estimate. By Lemma~\ref{l:Z}, we also get that $\ka_o^{(p)}\leq 1$.
The aim  of this section is to prove
$\ka_o^{(p)}\leq 1-\de$ under suitable conditions as stated below.\medskip

\begin{prop}\label{p:ka}(Uniform contraction estimate)
Let $I\subset \Sigma$ be compact and $p>1$. There exist $\de_o=\de_o(I,p)>0$, $\lm_o=\lm_o(I)>0$ and $R_{o}:[0,\lm_o]\to[0,\infty)$ with $\lim_{\lm\to 0}R_{o}(\lm)=0$ such that for all $\lm\in[0,\lm_o]$
\begin{align*}
\sup_{z\in I+i[0,1]}\sup_{w\in[-\lm,\lm]}\sup_{g\in \h^{S_{o,o'}}\setminus B_{R_o(\lm)}}\ka_{o}^{(p)}(z,w,g)\leq 1-\de_o,
\end{align*}
where $B_r:=\{g\in\h^{S_{o,o'}}\mid \gm(g_{x},\Gm_{x}(z,\lm))\leq r,\mbox{ for all } x\in S_{o,o'}\}$ for $r\geq0$.
\end{prop}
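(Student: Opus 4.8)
The plan is to prove the estimate $\ka_o^{(p)}(z,w,g)\leq 1-\de_o$ by a compactness argument on the relevant parameter space, combined with a limiting analysis as $\lm\to0$. The key point is that by Lemma~\ref{l:Z} we already know $\ka_o^{(p)}\leq 1$ pointwise, and we have the chain of equivalences (i)--(iii) describing \emph{exactly} when the numerator and denominator of $\ka_o^{(p)}$ agree (at $\lm=0$, where $c_x=c_x^{(\pi)}$ and the denominator simplifies). Since for $p>1$ the Jensen inequality that produces the first inequality of Lemma~\ref{l:Z} is strict unless all the $\gm_x^{(\pi)}(g)$ (over the various $\pi$ and $x$) are equal, the only way to have $\ka_o^{(p)}(g)=1$ at $\lm=0$ is to have $Q_{x,y}=1$, $\cos\al_{x,y}=1$ for all $x,y$ in the same sphere \emph{and} all the $\gm_x^{(\pi)}(g)$ equal to a common value. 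I want to show this forces $g$ to be close to the unperturbed fixed point, i.e.\ $g\in B_r$ for small $r$, which is excluded by hypothesis.

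\textbf{Key steps.} First, fix $p>1$ and compact $I\subset\Sigma$. I would like to work on the set of configurations with $\sup_{x}\gm_x(g)$ bounded above, but of course $g$ ranges over an unbounded set, so the first reduction is to note the $\ka_o^{(p)}(g)$ is invariant under scaling that leaves the $\gm$-data unchanged; more precisely, by the first inequality of Lemma~\ref{l:Z} the contribution of configurations with some very large $\gm_x$ is controlled, and one can reduce to $g$ in a region where $R_o(\lm)\leq\gm_x(g)\leq$ (some large constant) — actually the cleanest route is: suppose the conclusion fails, then there are sequences $\lm_n\to0$, $z_n\in I+i[0,1]$, $w_n\in[-\lm_n,\lm_n]$, $g^{(n)}\in\h^{S_{o,o'}}\setminus B_{R_o(\lm_n)}$ with $\ka_o^{(p)}(z_n,w_n,g^{(n)})\to1$. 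Normalize by dividing the $\gm$-distances appropriately; passing to subsequences, $z_n\to z_\infty\in\clos(I)\subset\Sigma$ (using Proposition~\ref{p:T} / that $\Gm_x$ extends continuously), $w_n\to0$, and the rescaled data $(\gm_x(g^{(n)}))$, $(Q_{x,y})$, $(\cos\al_{x,y})$ converge (these take values in compact sets $[0,\infty]$, $[0,1]$, $[-1,1]$ — one has to handle the possibility that some $\gm_x\to\infty$ or $\gm_x\to0$ separately, using again Lemma~\ref{l:Z} to see that a diverging $\gm_x$ together with a bounded one makes $\ka_o^{(p)}$ bounded away from $1$ because of strict Jensen). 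In the limit, the numerator equals the denominator, and since at $\lm_\infty=0$ one has $c_x=c_x^{(\pi)}$ depending only on the $Q$'s and $\al$'s, the equality case of Jensen plus Lemma~\ref{l:Z}(i)$\Leftrightarrow$(iii) forces $Q_{x,y}=1$, $\cos\al_{x,y}=1$ for all $x,y\in S_i$, $i\in\{o,o'\}$, and all $\gm_x^{(\pi)}(g)$ equal. Then $Q_{x,y}=\cos\al_{x,y}=1$ together with the definitions means $g_x-\Gm_x$ and $g_y-\Gm_y$ are positive multiples of each other with the right ratio; chasing the recursion \eqref{e:rec} for $g_o$ and $g_{o'}$ with $w=w'=\vartheta=0$ shows the common value of $\gm_x(g)$ must in fact be $0$, i.e.\ $g=\Gm_{S_{o,o'}}(z_\infty,\Lp)$ — this is the heart of the matter and uses that \eqref{e:rec} at $z\in\Sigma$ is a strict contraction in the $\gm$-metric unless one is already at the fixed point. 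That contradicts $g^{(n)}\notin B_{R_o(\lm_n)}$ once $R_o$ is chosen appropriately (here one needs the continuity $\Gm_x(z,\lm)\to\Gm_x(z,0)$ as $\lm\to0$ to compare the balls $B_r$ defined with $\Gm_x(z,\lm)$ to those defined with $\Gm_x(z,\Lp)$; since $w_n\to0$ the perturbation is small and this is where the function $R_o(\lm)\to0$ gets pinned down).

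\textbf{Main obstacle.} The delicate part is the limiting/compactness argument where the $\gm_x(g^{(n)})$ are not a priori bounded and may degenerate (to $0$ or $\infty$) along a subsequence in only \emph{some} coordinates. One has to split $S_{o,o'}$ into the coordinates that stay comparable and the rest, and argue that any genuine disparity between coordinates — either a ratio $\gm_x/\gm_y$ tending to $0$ or $\infty$, or a $Q_{x,y}$ or $\cos\al_{x,y}$ staying bounded away from $1$ — produces, via the strictness of Jensen in Lemma~\ref{l:Z} for $p>1$, a quantitative gap $\ka_o^{(p)}\leq 1-\de$; this is exactly the ``averaging over $\Pi$ beats the worst case'' mechanism alluded to after Lemma~\ref{l:Z}, but making it quantitative and uniform in $z\in I$ requires care. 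A secondary technical point is keeping track of the $\lm$-dependence: the contraction quantities $c_x^{(\pi)}$ in the denominator of $\ka_o^{(p)}$ can be slightly negative for $\lm>0$, so one first establishes the estimate at $\lm=0$ on the compact ``shell'' $\{R\leq\gm_x(g)\leq R'\}$ by the above contradiction argument, and then extends to small $\lm>0$ by continuity of all the ingredients in $(\lm,z,w,g)$, absorbing the error into $\de_o/2$ and defining $R_o(\lm)$ to shrink as $\lm\to 0$ so that the excluded ball $B_{R_o(\lm)}$ (defined via $\Gm_x(z,\lm)$) always contains a fixed $\gm$-neighborhood of $\Gm_{S_{o,o'}}(z,\Lp)$.
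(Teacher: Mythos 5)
Your strategy has two genuine gaps, and they sit exactly at the load-bearing points. First, the step you yourself call ``the heart of the matter'' is justified by a false claim: for $z$ with $\Im z=0$ (the relevant case, since $z\in I+i[0,1]$ with $\eta\downarrow 0$ and $I\subset\Sigma\subset\R$) the map $\xi\mapsto -1/(z-v+\xi)$ is a $\gm$-\emph{isometry}, and the summation step is merely non-expansive, with equality precisely when the deviations $g_x-\Gm_x$ are aligned and the $Q$'s equal one. So \eqref{e:rec} is \emph{not} ``a strict contraction unless at the fixed point'', and the equality configuration ($Q_{x,y}=\cos\al_{x,y}=1$, all $\gm_x$ equal) does not by itself force $g=\Gm_{S_{o,o'}}(z,\Lp)$. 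What actually excludes it (outside a small ball) is the angle mechanism of Proposition~\ref{p:al>0}: from $g_{o'}-\Gm_{o'}=g_{o'}\Gm_{o'}(\tau_{o'}-w)$ with $\tau_{o'}=\sum_{y\in S_{o'}}(g_y-\Gm_y)$, the output deviation is rotated by $\arg(g_{o'}\Gm_{o'})$, which on $I\subset\Sigma$ is bounded away from $0$ and $\pi$ by the constant $\de_0>0$; combined with a label-invariant permutation that moves a vertex with large deviation into $S_{o'}$ (this is where $\mathrm{(M1^*)}$ and the average over $\Pi$ are indispensable, cf.\ Lemma~\ref{l:sum}), this yields $\cos\al^{(\pi)}_{o',y}\leq\cos\de_0<1$ for all $y\in S_o$. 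Your proposal never supplies this mechanism, so the limit point of your compactness argument need not be the fixed point and the contradiction does not materialize.

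Second, the function $R_o(\lm)$ cannot be taken as given at the start of a contradiction argument: its existence \emph{and its rate} are part of the conclusion, and the rate is dictated by the same angle mechanism. The rotation argument only survives the random shift $w\in[-\lm,\lm]$ if $|\tau_{o'}|\geq\frac{1+\de_0}{\de_0}\lm$, which is why the paper sets $R_o(\lm)=\eta_1^{-1}\big(\frac{1+\de_0}{\de_0}\lm\big)$ via Lemma~\ref{l:eps1}; the excluded ball must shrink no faster than this, and just outside a too-small ball $w$ dominates $\tau_{o'}$ and no uniform contraction is to be expected. Your sequences $g^{(n)}\notin B_{R_o(\lm_n)}$ presuppose the very object to be constructed. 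Finally, the compactness step is soft where uniformity is needed: $g$ ranges over a non-compact set, and the degenerations ``some $\gm_x/\gm_y\to0$'' and ``some $\Im g_x/\Im g_y\to0$ with comparable $\gm$'s'' are exactly the paper's quantitative Cases~1 and~2 (the refined Jensen estimate of Proposition~\ref{p:invisgm} and the geometric/arithmetic-mean bound on $Q$ in Proposition~\ref{p:invisIm}); listing these as possible sources of a gap is not the same as proving that one of them must fire with a single $\de_o$ uniform in $z\in I+i[0,1]$, $w$, $\lm$, and a subsequential limit of normalized data need not arise from any $g\in\h^{S_{o,o'}}$, so the equality discussion of Lemma~\ref{l:Z} cannot be applied to it directly. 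In short, the skeleton points at the right ingredients, but the two steps that carry the proof — the uniform angle bound of Proposition~\ref{p:al>0} and the induced construction of $R_o(\lm)$ — are missing, and the justification offered in their place is incorrect.
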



\subsection{Strategy of the proof of uniform contraction}\label{ss:kaformula}
The proof of Proposition~\ref{p:ka} involves a thorough analysis of the quantities which enter the definition of $\ka_o^{(p)}$.

Before we discuss the structure of the proof let us fix some notation. For the remainder of the section we always assume that $I\subset\Sigma$ is compact and $z$ is an element of $ I+i[0,1]$. For short, we denote
\begin{align*}
    \Gm_{x}=\Gm_{x}(z,\Lp),\qquad x\in S_{o}\cup S_{o'}.
\end{align*}
Furthermore, we denote by $g$ always a vector in $\h^{S_{o,o'}}$ and if not stated otherwise $\pi$ denotes a label invariant permutation. Moreover,  $w$ always denotes an  real number.

Let us discuss the strategy of the proof. Our aim is to show that there is $\de>0$ such that $\ka_o^{(p)}\leq 1-\de$ outside of a compact set.
Lemma~\ref{l:Z} indicates two ways to prove uniform contraction.
The first one is to estimate
$$\Big({\sum_{x\in S_{o,o'}}p_{x}c_{x}^{(\pi)}(g)\gm_{x}^{(\pi)}(g)}\Big)^p\leq \Big({\sum_{x\in S_{o,o'}}p_{x}\gm_{x}^{(\pi)}(g)}\Big)^p\leq (1-\de)\sum_{x\in S_{o,o'}}p_{x}\gm_{x}^{(\pi)}(g)^p,$$
where the $(1-\de)$ is squeezed out of the error term of the Jensen inequality. Hence, $\de$ will be greater than zero whenever the $\gm_{x}^{(\pi)}(g)$ are of sufficiently different magnitude.
Secondly, one can try to find suitable $x$, $y$ and  $\pi$ such that the contraction quantities $Q_{x,y}^{(\pi)}(g)$ or $\cos\al_{x,y}^{(\pi)}(g)$ are less than one. Then, the corresponding $c_{x}^{(\pi)}(g)$ is smaller than one which implies $\ka_o^{(p)}(g)<1$.
However,  for uniform contraction we need more. Even if we manage to show uniform bounds for the contraction quantities $\cos\al_{x,y}^{(\pi)}(g)$ or $Q_{x,y}^{(\pi)}(g)$, (i.e., that one of them is less or equal to $c$ for some $c<1$), two problems can occur that make this contraction `invisible'. The first problem is that the weight $q_{y}^{(\pi)}(g)$ might  not be bounded from below and thus the contraction quantity $c_{x}^{(\pi)}(g)$ can become arbitrary close to one. The second problem is that even if one has $c_{x}^{(\pi)}\leq c$ for some $c<1$ the quantity $p_x{\gm_{x}^{(\pi)}(g)}^{p}/\sum_{\pi}\sum_{y} p_{y}{\gm_{y}^{(\pi)}(g)}^{p}$ might become so small such that $\ka_{o}^{(p)}(g)$ becomes arbitrary close to one. Our strategy to quantify these problems is to introduce the notion of visibility in Subsection~\ref{s:Vis}.

We prove Proposition~\ref{p:ka} by distinguishing three cases for $g\in\h^{S_{o,o'}}$.

In Case~1, Subsection~\ref{ss:Case1}, we look at the case where $\gm_{x}(g)/\max_{y\in S_{o,o'}}\gm_y(g)$ is very small for some $x\in S_{o,o'}$. This implies that $p_x\gm_{x}(g)/\sum_{\pi}\sum_{y}p_{y}\gm_{y}^{(\pi)}(g)^{p}$ is very small. In this case we get uniform contraction from the error term of Jensen's inequality.

In Case~2, Subsection~\ref{ss:Case2}, we assume that we are not in Case~1, but that there is $\pi\in\Pi$ and $x\in S_{o'}$ such that $\Im g_{x}^{(\pi)}/\max_{y\in S_{o'}}\Im g_{y}^{(\pi)}$ is very small. Hence, $q_{x}^{(\pi)}(g)$ is very small and we exploit that $Q_{x,y}(g)$ is the quotient of a geometric and an arithmetic mean.

For Case~3, Subsection~\ref{ss:Case3}, we prove that there are always $\pi$, $x$, $y$ such that $Q_{x,y}^{(\pi)}(g)\cos\al_{x,y}^{(\pi)}(g)$ is uniformly smaller than one. Assuming that we are not in the Case~1 or~2, none of the problems mentioned above occurs and we conclude  the uniform contraction estimate in this case.

Finally,  in Subsection~\ref{ss:kaproof}, we put the pieces together.


\subsection{Visibility}\label{s:Vis}

As discussed above, it makes only sense to look for those $\cos\al_{x,y}^{(\pi)}(g)$ and $Q_{x,y}^{(\pi)}(g)$, where the corresponding weights are uniformly bounded from below. Otherwise, the contraction might become `invisible'. This is quantified below by the sets of visible vertices.
\medskip

\begin{definition}\label{d:Vis}(Visibility)
For $g\in \h^{S_{o,o'}}$ and $\eps>0$, we define the set of vertices in $S_{o,o'}$ that are \emph{visible with respect to $\gm$ } by
\begin{align*}
\Vis_{\gm}(g,\eps)&:=\Big\{x\in S_{o,o'}\mid \min_{y\in S_{o,o'}}\frac{\gm_x(g)}{\gm_y(g)}>\eps\Big\}
\end{align*}
and the  set of vertices in $S_{i}$, $i\in\{o,o'\}$, that are \emph{visible with respect to the imaginary parts} by
\begin{align*}
\Vis_{\Im}^{i}(g,\eps)&:=\Big\{x\in S_{i}\mid \min_{y\in S_{i}}\frac{\Im g_x}{\Im g_y}>\eps\Big\},
\end{align*}
where $g_{o'}$ is defined in Section~\ref{ss:OSE}.
\end{definition}

Let us remark some simple facts. Firstly,
for $0<\eps'\leq\eps$ we have
$\Vis_{\gm}(g,\eps)\subseteq\Vis_{\gm}(g,\eps')$ and $\Vis_{\Im}^{i}(g,\eps)\subseteq\Vis_{\Im}^{i}(g,\eps')$, $i\in\{o,o'\}$.
Secondly, we have $\Vis_{\gm}(g,\eps)=\emptyset$ and/or $\Vis_{\Im}^{i}(g,\eps)=\emptyset$ for $i\in\{o,o'\}$ if and only if $\eps\geq1$. Finally, if $\Vis_{\gm}(g,\eps)=S_{o,o'}$ then $\Vis_{\gm}(g\circ\pi,\eps)=S_{o,o'}$ for all $\pi\in\Pi$.

We now prove two lemmas that demonstrate how we put the concept of visibility into action.
We start by visibility with respect to the imaginary parts.
\medskip

\begin{lemma}\label{l:using_alQ<1}
Let $\eps>0$, $c\in[0,1)$. There is $\de=\de(\eps,c)>0$ such that if $g\in\h^{S_{o,o'}}$ satisfies
\begin{align*}
Q_{\ow x,\ow y}^{(\pi)}(g)\cos\al_{\ow x,\ow y}^{(\pi)}(g)\leq c,
\end{align*}
for some $\pi\in\Pi$, $i\in\{o,o'\}$, $\ow x\in S_{i}$ and  $\ow y\in \Vis_{\Im}^{i}(g\circ\pi,\eps)$,
then it follows that
\begin{align*}
c_{\ow x}^{(\pi)}(g)\leq 1-\de.
\end{align*}
\begin{proof}
Note that, for all $\pi\in\Pi$, $i\in\{o,o'\}$ and $x\in\Vis_{\Im}^{i}(g\circ\pi,\eps)$, we have
\begin{align*}
q_{x}^{(\pi)}(g)
=\ab{\sum_{y\in S_{i}}\frac{\Im g_{y}^{(\pi)}}{\Im g_{x}^{(\pi)}}}^{-1}
\geq\frac{\eps}{|S_{i}|}.
\end{align*}
Let $\ow x,\ow y\in S_{i}$ be such that $Q_{\ow x,\ow y}^{(\pi)}(g)\cos\al_{\ow x,\ow y}^{(\pi)}(g)\leq c$.
We get by estimating $Q_{\ow x,y}^{(\pi)}(g)\cos\al_{\ow x,y}^{(\pi)}(g) \leq 1$, for all $y\neq \ow y$, that
\begin{align*}
c_{\ow x}^{(\pi)}(g)&\leq \sum_{y\in S_{i}\setminus\{\ow y\}}q_{y}^{(\pi)}(g)+q_{\ow y}^{(\pi)}(g)Q^{(\pi)}_{\ow x, \ow y}(g)\cos\al^{(\pi)}_{\ow x, \ow y}(g)\\
&=1-q_{\ow y}^{(\pi)}(g)\big(1-Q^{(\pi)}_{\ow x, \ow y}(g)\cos\al^{(\pi)}_{\ow x,\ow y}(g)\big)\leq1-\de,
\end{align*}
where we chose $\de=(1-c)\eps/\min_{i\in\{o,o'\}}|S_{i}|$.
\end{proof}
\end{lemma}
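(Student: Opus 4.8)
The plan is to leverage the visibility hypothesis on $\ow y$ to bound the weight $q_{\ow y}^{(\pi)}(g)$ from below by a constant depending only on $\eps$, and then to peel the $\ow y$-term off the convex combination that makes up $c_{\ow x}^{(\pi)}(g)$. The only consequence of $\ow y\in\Vis_{\Im}^{i}(g\circ\pi,\eps)$ that is needed is that
\[
q_{\ow y}^{(\pi)}(g)=\frac{\Im g_{\ow y}^{(\pi)}}{\sum_{u\in S_i}\Im g_u^{(\pi)}}=\Big(\sum_{u\in S_i}\frac{\Im g_u^{(\pi)}}{\Im g_{\ow y}^{(\pi)}}\Big)^{-1}>\frac{\eps}{|S_i|},
\]
since every ratio $\Im g_u^{(\pi)}/\Im g_{\ow y}^{(\pi)}$ is smaller than $1/\eps$ by the definition of $\Vis_{\Im}^i$.

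Recall next from the remark following Lemma~\ref{l:OSE} that each term $Q_{\ow x,y}^{(\pi)}(g)\cos\al_{\ow x,y}^{(\pi)}(g)$ lies in $[-1,1]$ and that the nonnegative weights $q_y^{(\pi)}(g)$, $y\in S_i$, sum to one. For $\ow x\in S_o$ one has $c_{\ow x}^{(\pi)}(g)=\sum_{y\in S_o}q_y^{(\pi)}(g)Q_{\ow x,y}^{(\pi)}(g)\cos\al_{\ow x,y}^{(\pi)}(g)$, and estimating $Q_{\ow x,y}^{(\pi)}(g)\cos\al_{\ow x,y}^{(\pi)}(g)\leq1$ for every $y\neq\ow y$ while retaining the $\ow y$-term yields
\[
c_{\ow x}^{(\pi)}(g)\leq 1-q_{\ow y}^{(\pi)}(g)\big(1-Q_{\ow x,\ow y}^{(\pi)}(g)\cos\al_{\ow x,\ow y}^{(\pi)}(g)\big)\leq 1-\frac{\eps(1-c)}{|S_o|}.
\]
For $\ow x\in S_{o'}$ the contraction quantity is the analogous $S_{o'}$-sum multiplied by the factor $\sum_{y\in S_o}q_y^{(\pi)}(g)Q_{o',y}^{(\pi)}(g)\cos\al_{o',y}^{(\pi)}(g)$, whose modulus is at most one, so the same estimate carries over to the product. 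Choosing $\de:=\eps(1-c)/\max_{i\in\{o,o'\}}|S_i|$ then proves the lemma.

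The step I expect to require the most care is the case $\ow x\in S_{o'}$: since $Q_{o',y}^{(\pi)}\cos\al_{o',y}^{(\pi)}$ may be negative, the product of the two contraction factors must be examined by sign — when it is nonpositive the conclusion is immediate, and when it is positive one checks that the factor controlled by the visibility estimate still keeps the product below $1-\de$. Apart from this, the proof is a single convexity inequality combined with the defining inequality of $\Vis_{\Im}^i$, so it should be short.
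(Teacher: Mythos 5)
Your argument for the core estimate is the same as the paper's: the visibility hypothesis gives $q_{\ow y}^{(\pi)}(g)\geq \eps/|S_i|$, and peeling the $\ow y$-term off the convex combination $\sum_{y\in S_i}q_y^{(\pi)}(g)Q_{\ow x,y}^{(\pi)}(g)\cos\al_{\ow x,y}^{(\pi)}(g)$ while bounding all other terms by one yields $1-q_{\ow y}^{(\pi)}(g)\bigl(1-Q_{\ow x,\ow y}^{(\pi)}(g)\cos\al_{\ow x,\ow y}^{(\pi)}(g)\bigr)\leq 1-\eps(1-c)/|S_i|$. This settles the case $i=o$, where $c_{\ow x}$ \emph{is} that single sum, and your choice $\de=\eps(1-c)/\max_{i}|S_i|$ is the safe normalization (the paper writes $\min$).

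The place where care is needed is exactly the one you flag, namely $\ow x\in S_{o'}$, where $c_{\ow x}^{(\pi)}$ is a \emph{product} of two averaged factors, one over $S_o$ and one over $S_{o'}$, only one of which contains the controlled pair $(\ow x,\ow y)$. Your nonpositive and both-positive subcases are fine, but your claim that in the remaining subcase ``the factor controlled by the visibility estimate still keeps the product below $1-\de$'' is not something one can check from the hypotheses: if both factors are negative, the product equals the product of the moduli, and the upper bound $B\leq 1-q_{\ow y}^{(\pi)}(1-c)$ on the controlled factor $B$ gives no control of $|B|$. One would need a lower bound of the form $B\geq -(1-\de)$, which does not follow: the hypothesis $Q_{\ow x,\ow y}^{(\pi)}\cos\al_{\ow x,\ow y}^{(\pi)}\leq c$ is compatible with that term being $-q_{\ow y}^{(\pi)}$, and the diagonal term $+q_{\ow x}^{(\pi)}$ is not bounded below because only $\ow y$, not $\ow x$, is assumed visible; likewise nothing in the hypotheses bounds the modulus of the other factor away from one. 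So the one delicate step you identify is left open in your proposal. To be fair, the paper's own proof does not address it either: it writes $c_{\ow x}^{(\pi)}\leq \sum_{y\neq\ow y}q_y^{(\pi)}+q_{\ow y}^{(\pi)}Q_{\ow x,\ow y}^{(\pi)}\cos\al_{\ow x,\ow y}^{(\pi)}$ outright, which silently discards the second factor and is only justified when the peeled factor is nonnegative, i.e.\ it treats $c_{\ow x}$ as if it were the single sum in both cases. Your write-up is thus the same argument with the genuine difficulty made explicit but not resolved; resolving it requires input beyond the lemma's stated hypotheses (for instance, sign or argument information of the kind produced in the context where the lemma is applied).
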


The next lemma shows how to use visibility with respect to $\gm$.\medskip

\begin{lemma}\label{l:using_c<1}
Let  $\eps>0$, $c\in[0,1)$. There is $\de=\de(\eps,c)>0$ such that if $g\in\h^{S_{o,o'}}$ satisfies
\begin{align*}
    \Vis_{\gm}(g,\eps)=S_{o,o'}
\qqand
c_{x}^{(\pi)}(g)\leq c,
\end{align*}
for some $\pi\in\Pi$ and  $x\in S_{o,o'}$, then it follows that
\begin{align*}
\ka_o^{(p)}(z,w,g)\leq 1-\de.
\end{align*}
\begin{proof}
As  $p_{x}$, $x\in S_{o,o'}$, are functions of the imaginary parts of the unperturbed Green functions, they are uniformly larger than zero for all $z$ in $I+i[0,1]$ (as $I$ is a compact subset of $\Sigma$). We conclude the existence of $\eps'>0$ such that for all $\pi\in\Pi$ and $x\in\Vis_{\gm}(g\circ\pi,\eps)$
\begin{align*}
\frac{p_{x}{\gm_{x}^{(\pi)}(g)}^p} {\sum_{\pi}\sum_{y}p_{y}{\gm^{(\pi)}_{y}(g)}^{p}}  =\ab{\sum_{\pi\in\Pi} \sum_{y\in S_{o,o'}}\frac{p_{y}}{p_{x}} \ab{\frac{\gm_{y}^{(\pi)}(g)}{\gm_{x}^{(\pi)}(g)}}^p}^{-1}
\geq\eps^p\ab{\sum_{\pi\in\Pi} \sum_{y\in S_{o,o'}}\frac{p_{y}}{p_{x}}}^{-1}\geq \eps'.
\end{align*}
Let $\pi\in\Pi$ and $x\in S_{o,o'}$  be chosen according to the assumption (which implies, in particular $x\in \Vis_{\gm}(g\circ\pi,\eps)$).
We get by estimating $c_{y}^{(\pi)}(g)\leq1$ for $y\neq  x$, $c_{x}^{(\pi)}(g)\leq c$, Jensen's inequality and the previous estimate
\begin{align*}
\Big(\sum_{x\in S_{o,o'}}p_{x}\gm^{(\pi)}_{x}(g)c_{x}^{(\pi)}(g)\Big)^p &\leq {\Big(\sum_{y\in S_{o,o'}\setminus\{{x}\}} p_{y}\gm_{y}^{(\pi)}(g) +p_{{x}}^{(\pi)}\gm_{{x}}(g)c\Big)}^{p}\\
&\leq \sum_{y\in S_{o,o'}\setminus\{{x}\}} p_{y}\gm_{y}^{(\pi)}(g)^p +p_{{x}}\gm_{{x}}^{(\pi)}(g)^{p}c^p\\
&= \sum_{y\in S_{o,o'}} p_{y}\gm_{y}^{(\pi)}(g)^{p} -p_{{x}}\gm_{{x}}^{(\pi)}(g)^{p}(1-c^p)\\
&\leq\sum_{y\in S_{o,o'}} p_{y}\gm_{y}^{(\pi)}(g)^{p}
-\eps'(1-c^p)\sum_{\ow\pi\in\Pi}\sum_{y\in S_{o,o'}} p_{y}\gm_{y}^{(\ow\pi)}(g)^{p}.
\end{align*}
We set $\de=\eps'(1-c^p)$. Applying the basic  estimates of Lemma~\ref{l:Z}, for all $\ow \pi\in\Pi$, $\ow \pi\neq\pi$ and using the inequality above, we get
\begin{align*}
\ka_{o}^{(p)}(g) \leq\frac{{ \sum_{\ow\pi\neq\pi}\sum_{y\in S_{o,o'}} p_{y}\gm_{y}^{(\ow\pi)}(g)^{p} +\Big(\sum_{x\in S_{o,o'}} p_{x}\gm^{(\pi)}_{x}(g)c_{x}^{(\pi)}(g)\Big)^p }}{\sum_{\ow\pi\in\Pi}\sum_{y\in S_{o,o'}}  p_{y}\gm_{y}^{(\ow\pi)}(g)^{p} } \leq 1-\de.
\end{align*}
\end{proof}
\end{lemma}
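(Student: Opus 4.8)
The statement to be established is the following: given $\eps>0$ and $c\in[0,1)$, if $g\in\h^{S_{o,o'}}$ satisfies $\Vis_{\gm}(g,\eps)=S_{o,o'}$ and $c_{x}^{(\pi)}(g)\leq c$ for some fixed $\pi\in\Pi$ and $x\in S_{o,o'}$, then $\ka_o^{(p)}(z,w,g)\leq 1-\de$ for a $\de=\de(\eps,c)>0$. The plan is to show that a single term in the numerator of $\ka_o^{(p)}$ — namely the one indexed by the given $\pi$ — is reduced by a definite multiplicative factor relative to the corresponding term in the denominator, while all other terms are controlled by the trivial bounds of Lemma~\ref{l:Z}; summing then yields the claimed gap $\de$.

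First I would record the key lower bound: since the weights $p_y$, $y\in S_{o,o'}$, are continuous functions of $\Im\Gm_y(z,\Lp)$ on the compact set $I+i[0,1]\subset\Sigma\cup\h$ where these imaginary parts stay uniformly positive, the ratios $p_y/p_x$ are uniformly bounded above. Combined with $\Vis_{\gm}(g,\eps)=S_{o,o'}$ — which forces $\gm_y^{(\pi)}(g)/\gm_x^{(\pi)}(g)\leq 1/\eps$ for all $y$ and all $\pi$, because the visibility condition is permutation-stable — one gets
\begin{align*}
\frac{p_{x}\,\gm_{x}^{(\pi)}(g)^p}{\sum_{\ow\pi\in\Pi}\sum_{y\in S_{o,o'}}p_{y}\,\gm_{y}^{(\ow\pi)}(g)^{p}}\ \geq\ \eps'
\end{align*}
for some $\eps'=\eps'(\eps,I,p)>0$ independent of $z$, $w$ and $g$. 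Next, for the distinguished permutation $\pi$, I estimate $c_y^{(\pi)}(g)\leq 1$ for all $y\neq x$ and $c_x^{(\pi)}(g)\leq c$, then apply Jensen's inequality to $(\sum_x p_x\gm_x^{(\pi)}(g)c_x^{(\pi)}(g))^p$ to pull the contraction through: the result is $\sum_{y}p_y\gm_y^{(\pi)}(g)^p$ diminished by $p_x\gm_x^{(\pi)}(g)^p(1-c^p)$, and the lower bound above lets me replace this last quantity by $\eps'(1-c^p)\sum_{\ow\pi}\sum_y p_y\gm_y^{(\ow\pi)}(g)^p$. Finally, bounding the remaining numerator terms $(\sum_x p_x c_x^{(\ow\pi)}\gm_x^{(\ow\pi)})^p\leq\sum_y p_y\gm_y^{(\ow\pi)}(g)^p$ for $\ow\pi\neq\pi$ via Lemma~\ref{l:Z}, the numerator of $\ka_o^{(p)}(g)$ is at most $(1-\eps'(1-c^p))$ times its denominator, so $\de:=\eps'(1-c^p)$ works.

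I do not expect a genuine obstacle here; the lemma is essentially bookkeeping once the uniform lower bound $\eps'$ is in hand. The one point requiring a little care is ensuring that $\eps'$ is genuinely uniform in $z$, $w$ and $g$: the $z$-uniformity rests on the compactness of $I$ inside the open set $\Sigma$ (Proposition~\ref{p:T} and the definition of $\Sigma$ guarantee the requisite positivity of $\Im\Gm_y$), the $w$-independence is immediate because $w$ enters only through $g_{o'}$ which is already absorbed into the variable $g$, and the $g$-uniformity is exactly what the hypothesis $\Vis_{\gm}(g,\eps)=S_{o,o'}$ buys us. With these observations the proof is just the chain of inequalities sketched above.
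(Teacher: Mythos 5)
Your proposal is correct and follows essentially the same route as the paper's proof: the uniform lower bound $\eps'$ from compactness of $I$ and the visibility hypothesis, the Jensen-type estimate on the single term indexed by the distinguished $\pi$ with $c_y^{(\pi)}\leq 1$ for $y\neq x$ and $c_x^{(\pi)}\leq c$, the trivial bound of Lemma~\ref{l:Z} for the remaining permutations, and $\de=\eps'(1-c^p)$.
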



\subsection{Case~1: A Jensen type inequality}\label{ss:Case1}
In this section we will prove the statement that was discussed as Case~1 in Subsection~\ref{ss:kaformula}.
We will need the following auxiliary constant
\begin{equation*}\label{e:c1}
c_1:=c_1(I):=\Big({\max_{z\in I+i[0,1]}\max_{x,y\in S_{o,o'}}\frac{(1-p_{x})}{p_{y}}}\Big)^{-1}. \end{equation*}
Since $p_{x}\in(0,1)$ and $p_{x}$ depend continuously on $z$ and $I\subseteq\Sigma$ is compact, $c_1$ is finite and positive.

\medskip

\begin{prop}\label{p:invisgm}
Let $p>1$, $\eps\in(0,c_1)$.  There is $\de=\de(\eps)>0$  such that if  $g\in \h^{S_{o,o'}}$ satisfies
\begin{align*}
\Vis_{\gm}(g,\eps)\neq S_{o,o'},
\end{align*}
then it follows that
\begin{align*}
\ka_o^{(p)}(z,w,g)\leq 1-\de.
\end{align*}
\end{prop}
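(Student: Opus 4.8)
The hypothesis $\Vis_{\gm}(g,\eps)\neq S_{o,o'}$ means there is some vertex $x_0\in S_{o,o'}$ with $\gm_{x_0}(g)\leq\eps\,\gm_{y_0}(g)$ for some $y_0$, i.e.\ $x_0$ carries a much smaller $\gm$-value than some other vertex. The idea, following Case~1 as sketched in Subsection~\ref{ss:kaformula}, is to extract the contraction purely from the error term in Jensen's inequality: if the numbers $\gm_x^{(\pi)}(g)$ spread over sufficiently different magnitudes, then $\big(\sum_x p_x \gm_x^{(\pi)}(g)\big)^p$ is strictly less than $\sum_x p_x \gm_x^{(\pi)}(g)^p$ by a definite factor $(1-\de)$. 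I would first record the elementary convexity fact: for nonnegative reals $a_x$ and weights $p_x$ summing to $1$, and for any two indices $x_0,y_0$,
\begin{align*}
\Big(\sum_{x}p_x a_x\Big)^{\!p}\leq \sum_x p_x a_x^p - \eta\Big(\sum_x p_x a_x^p\Big),
\end{align*}
with $\eta=\eta(p,p_{x_0},p_{y_0},\eps)>0$ whenever $a_{x_0}\leq \eps\, a_{y_0}$; this is a compactness/continuity statement about the strict convexity of $t\mapsto t^p$ on the simplex, and the auxiliary constant $c_1$ together with the uniform positivity of the $p_x$ on the compact set $I+i[0,1]$ ensures $\eta$ can be chosen uniformly in $z$ and in which pair $(x_0,y_0)$ realizes the failure of visibility.

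Next I would apply this with $a_x=\gm_x^{(\pi)}(g)$ for the distinguished permutation $\pi=\id$ (for which the failure of visibility is given), combined with the two step expansion bound $\big(\sum_{x}p_x c_x(g)\gm_x(g)\big)^p\leq \big(\sum_x p_x\gm_x(g)\big)^p$ from Lemma~\ref{l:Z}. This yields
\begin{align*}
\Big(\sum_{x\in S_{o,o'}}p_x c_x(g)\gm_x(g)\Big)^{\!p}\leq (1-\eta)\sum_{x\in S_{o,o'}}p_x\gm_x(g)^p.
\end{align*}
For every other $\pi\in\Pi$ I only need the trivial bound $\big(\sum_x p_x c_x^{(\pi)}(g)\gm_x^{(\pi)}(g)\big)^p\leq \sum_x p_x\gm_x^{(\pi)}(g)^p$, again from Lemma~\ref{l:Z}. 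Summing over all $\pi\in\Pi$ in the numerator of $\ka_o^{(p)}$ and comparing against the denominator, the one extra factor $(1-\eta)$ coming from $\pi=\id$ survives as a genuine reduction, giving $\ka_o^{(p)}(z,w,g)\leq 1-\de$ with $\de=\eta/|\Pi|$ (roughly). Here I use that, by the symmetry of the tree, the denominator $\sum_\pi\sum_x p_x c_x^{(\pi)}(g)\gm_x^{(\pi)}(g)^p$ is comparable to $\sum_\pi\sum_x p_x\gm_x^{(\pi)}(g)^p$ and in particular the $\pi=\id$ term $\sum_x p_x\gm_x(g)^p$ is a definite fraction of the whole sum---this needs the observation that permuting vertices of equal label only permutes the summands within each label-class, so no single permutation's contribution is negligible.

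\textbf{Main obstacle.} The delicate point is making $\eta$ \emph{uniform}: the failure of visibility only asserts $\gm_{x_0}(g)/\gm_{y_0}(g)\leq\eps$, but $\gm_{y_0}(g)$ itself can be arbitrarily large (indeed $g$ ranges over all of $\h^{S_{o,o'}}$), so one must check the Jensen gap $\eta$ depends only on the \emph{ratio} $\eps$ and on the weights $p_x$, not on the absolute scale of the $\gm_x(g)$. This is true because the inequality $\big(\sum p_x a_x\big)^p\leq(1-\eta)\sum p_x a_x^p$ is homogeneous of degree $p$ in $a$, so one may normalize $\max_x a_x=1$ and then invoke continuity of both sides on the compact set $\{a\in[0,1]^{S_{o,o'}}: \max a_x=1,\ \min_{x,y}a_x/a_y\leq \eps\}$, on which the strict inequality (from strict convexity of $t\mapsto t^p$, using $p>1$) is attained with a positive gap. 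The constraint $\eps<c_1$ is precisely what guarantees the constrained set is nonempty yet the gap does not degenerate; verifying this bookkeeping, and that $\de$ can further be taken independent of $z\in I+i[0,1]$ via compactness of $I$ and continuity of $z\mapsto p_x$, is the only real work.
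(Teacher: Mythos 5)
Your proposal is correct and follows the same overall strategy as the paper: when visibility with respect to $\gm$ fails, the contraction is squeezed out of the error term in Jensen's inequality, with the trivial bound $c_x^{(\pi)}\le 1$ used elsewhere. Two differences in execution are worth noting. First, where you obtain the uniform Jensen gap $\eta$ by homogeneity plus a compactness/continuity argument on the normalized set $\{a:\max_x a_x=1,\ \min_x a_x\le \eps\}$, the paper proves an explicit refinement of Jensen's inequality for $t\mapsto t^p$ (Lemma~\ref{l:Jensen2}, via a Taylor formula with integral remainder), which yields the quantitative bound $\de_p\ge c_2\ap{1-\eps/c_1}^2$; your soft argument gives the same qualitative conclusion but sacrifices the explicit constants the paper emphasizes. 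Second, you apply the improved inequality only for $\pi=\id$ and bound all other permutations trivially, which forces you to argue that the $\pi=\id$ block of the denominator is a definite fraction of the whole sum and costs you a factor of order $1/|\Pi|$. The paper avoids this step entirely: since a label invariant permutation merely permutes the values $\gm_x(g)$ (the unperturbed $\Gm_x$ depend only on the label), the failure of visibility transfers to every $\pi$, and the refined Jensen inequality is applied permutation by permutation with the argmin/argmax chosen for that $\pi$, so each numerator term is dominated by $(1-\de_p)$ times its own denominator term and $\de$ is independent of $|\Pi|$. One small caveat on your comparability step: the weights $p_x$ are not constant on a label class (they differ between $S_o$ and $S_{o'}$), so the $\pi$-blocks of the denominator are not literally rearrangements of one another; to make ``no permutation's contribution is negligible'' precise you need the uniform two-sided bounds on the $p_x$ over $I+i[0,1]$ (which you do have by compactness and continuity, in the spirit of the constant $c_1$ and of the constant $\eps'$ in Lemma~\ref{l:using_c<1}).
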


For the proof we employ a refinement of Jensen's inequality for monomials. To this end, we take a closer look at the error term in Jensen's inequality.
\medskip

\begin{lemma}Let $f:\R\to\R$ be twice continuously differentiable, $\lm\in[0,1]$ and $r,s\in\R$. Then
\begin{align*}
f(\lm r+(1-\lm)s)=\lm f(r)+(1-\lm)f(s)-E_f(\lm,r,s),
\end{align*}
where
\begin{align*}
E_f(\lm,r,s)=(r-s)^2\!\int_0^1\!\! \ap{\lm t \mathds{1}_{[0,1-\lm]}(t)+(1-\lm)(1-t)\mathds{1}_{[1-\lm,1]}(t)}f''((1-t)r+ts)dt    \end{align*}
and $\mathds{1}_A$ is the characteristic function of  a set $A$.
\begin{proof} We take the Taylor expansion of $f$ in $r_0=\lm r+(1-\lm)s$  at the points $r$ and $s$ with integral error term. Inserting this into $\lm f(r)+(1-\lm)f(s)$ yields the statement.
\end{proof}
\end{lemma}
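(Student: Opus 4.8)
The plan is to apply Taylor's theorem with integral remainder, expanding $f$ around the convex combination $r_0:=\lm r+(1-\lm)s$ and evaluating at the two endpoints $r$ and $s$. Since $f\in C^{2}$, for $a\in\{r,s\}$ one has
\[
f(a)=f(r_0)+f'(r_0)(a-r_0)+(a-r_0)^{2}\int_{0}^{1}(1-t)\,f''\ap{r_0+t(a-r_0)}\,dt ,
\]
and here $r-r_0=(1-\lm)(r-s)$ while $s-r_0=-\lm(r-s)$.

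Next I would form the combination $\lm f(r)+(1-\lm)f(s)$. The zeroth-order terms add up to $f(r_0)$, and the first-order terms contribute $f'(r_0)(r-s)\big(\lm(1-\lm)-(1-\lm)\lm\big)=0$; this cancellation is precisely what the centering at $r_0$ is for. Denoting by $R_a$ the integral remainder in the expansion above, this gives
\[
E_f(\lm,r,s)=\lm f(r)+(1-\lm)f(s)-f(r_0)=\lm R_r+(1-\lm)R_s ,
\]
so it remains only to put the right-hand side into the stated form.

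For that I would parametrize the segment from $r$ to $s$ by $t\mapsto(1-t)r+ts$, under which $r_0$ corresponds to $t=1-\lm$. Substituting in $R_r$ so that $r_0+t(r-r_0)$ becomes $(1-\si)r+\si s$ with $\si=(1-\lm)(1-t)$ running over $[0,1-\lm]$, and keeping track of the factors $(r-r_0)^{2}=(1-\lm)^{2}(r-s)^{2}$, $1-t=\si/(1-\lm)$ and $dt=-d\si/(1-\lm)$, yields
\[
\lm R_r=(r-s)^{2}\int_{0}^{1-\lm}\lm\si\; f''\ap{(1-\si)r+\si s}\,d\si .
\]
The analogous substitution in $R_s$, with $\si=1-\lm(1-t)$ running over $[1-\lm,1]$, gives
\[
(1-\lm)R_s=(r-s)^{2}\int_{1-\lm}^{1}(1-\lm)(1-\si)\; f''\ap{(1-\si)r+\si s}\,d\si .
\]
Adding these and writing the sum as a single integral over $[0,1]$ with the indicator weights $\mathds{1}_{[0,1-\lm]}$ and $\mathds{1}_{[1-\lm,1]}$ reproduces exactly the claimed formula for $E_f$ (after renaming $\si$ back to $t$).

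The whole argument is elementary; the only thing that needs a little care is the bookkeeping in the two changes of variables, namely the orientation of the interval and the Jacobian factors, together with the observation that the linear Taylor terms cancel, which is forced by taking $r_0$ to be the $\lm$-convex combination of $r$ and $s$. I do not expect any genuine obstacle here.
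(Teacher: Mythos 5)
Your proof is correct and follows exactly the paper's route: Taylor expansion with integral remainder around $r_0=\lm r+(1-\lm)s$ evaluated at $r$ and $s$, cancellation of the first-order terms in the convex combination, and a change of variables to the parametrization $(1-t)r+ts$. The paper only sketches this in one line; your write-up supplies the same argument with the bookkeeping made explicit.
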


Jensen's inequality for twice continuously differentiable functions is a direct corollary. However,  if we know more about the function $f$ we can use the error term to get finer estimates. In our case $f$ is the function $r\mapsto r^{p}$, $p> 1$.\medskip

\begin{lemma}\label{l:Jensen2}Let $p\geq1$, $\lm\in[0,1]$ and $r,s\in[0,\infty)$, $r> s$. Then
\begin{align*}
(\lm r+(1-\lm)s)^p\leq(1-\de_p)\ap{\lm r^p+(1-\lm)s^p},
\end{align*}
where
\begin{align*}
\de_p:=\de_p(\lm,s/r):=\ap{1-{s}/{r}}^2\left\{\begin{array}{ll}
p(p-1)\lm(1-\lm)/2&: p\in[1,2), \\
\lm(1-\lm^{p-1})&: p\geq2.
\end{array}\right.
\end{align*}
\begin{proof}
The statement is trivial for $p=1$ and $\lm\in\{0,1\}$. Therefore,  we assume $p>1$ and $\lm\not\in\{0,1\}$.

Assume first that $r=1$ and $s\in[0,1)$. Then, the error term $E_p:=E_{(\cdot)^p}$ from the previous lemma can be estimated for $p\geq2$, using $s\geq0$, as follows
\begin{align*}
\frac{E_p(\lm,1,s)}{(1-s)^2}&=p(p-1)\int_0^1 \ap{\lm t \mathds{1}_{[0,1-\lm]} +(1-\lm)(1-t)\mathds{1}_{[1-\lm,1]}}((1-t)+ts)^{p-2}dt\\
&\geq p(p-1)\ap{\lm\int_{0}^{1-\lm}t(1-t)^{p-2}dt+ (1-\lm) \int_{1-\lm}^{1}(1-t)^{p-1}dt}\\
&=\lm\ap{1-\lm^{p-1}}.
\end{align*}
On the other hand, for $p\in(1,2)$ we get by estimating $((1-t)+ts)^{(p-2)}\geq1$ that
\begin{align*}
\frac{E_p(\lm,1,s)}{(1-s)^2}&\geq p(p-1)\int_0^1 \ap{\lm t \mathds{1}_{[0,1-\lm]}+(1-\lm)(1-t)\mathds{1}_{[1-\lm,1]}}dt =p(p-1)\frac{\lm(1-\lm)}{2}.
\end{align*}
Employing the previous lemma  for $r=1$, $s\in[0,1]$, we get since $\lm +(1-\lm) s^p\leq 1$,
\begin{align*}
(\lm +(1-\lm)s)^p&\leq{\lm +(1-\lm)s^p}-\de_p\leq(1-\de_p)\ap{\lm +(1-\lm)s^p}.
\end{align*}
Now, let $r> s\geq0$ be arbitrary. By the previous inequality, we obtain
\begin{align*}
(\lm r+(1-\lm)s)^p&=r^p\ap{\lm +(1-\lm)\frac{s}{r}}^p\leq (1-\de_p)\ap{\lm r^p+(1-\lm)s^p}.
\end{align*}
\end{proof}
\end{lemma}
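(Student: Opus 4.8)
The plan is to reduce to the normalized case $r=1$ by homogeneity and then to extract the claimed factor directly from the exact integral remainder of the preceding lemma. Since both sides of the asserted inequality are homogeneous of degree $p$ in $(r,s)$ and $\de_p$ depends on $(r,s)$ only through the ratio $t:=s/r\in[0,1)$, it suffices to prove
\[ (\lm+(1-\lm)t)^{p}\leq(1-\de_p)\ap{\lm+(1-\lm)t^{p}} ; \]
the degenerate cases $p=1$ and $\lm\in\{0,1\}$, where $\de_p=0$ and the inequality is just an equality instance of Jensen's inequality, can be set aside.

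Now I would apply the preceding lemma with $f(x)=x^{p}$ and $(r,s)=(1,t)$, using $f''(x)=p(p-1)x^{p-2}$ and writing the integration variable as $\tau$, to obtain
\[ (\lm+(1-\lm)t)^{p}=\lm+(1-\lm)t^{p}-E_f(\lm,1,t) , \]
so that the task is reduced to bounding $E_f(\lm,1,t)$ from below by $\de_p$. The two regimes of $p$ are governed by the opposite monotonicity of $x\mapsto x^{p-2}$. For $p\geq2$ one has $((1-\tau)+\tau t)^{p-2}\geq(1-\tau)^{p-2}$, since $(1-\tau)+\tau t=1-\tau(1-t)\geq 1-\tau$, and a routine antiderivative computation gives
\[ \lm\!\int_0^{1-\lm}\!\!\tau(1-\tau)^{p-2}\,d\tau+(1-\lm)\!\int_{1-\lm}^{1}\!\!(1-\tau)^{p-1}\,d\tau=\frac{\lm(1-\lm^{p-1})}{p(p-1)} , \]
whence $E_f(\lm,1,t)\geq(1-t)^{2}\lm(1-\lm^{p-1})$. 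For $p\in(1,2)$ instead $((1-\tau)+\tau t)^{p-2}\geq 1$ because $(1-\tau)+\tau t\leq1$, and the elementary identity $\lm\int_0^{1-\lm}\tau\,d\tau+(1-\lm)\int_{1-\lm}^{1}(1-\tau)\,d\tau=\tfrac12\lm(1-\lm)$ yields $E_f(\lm,1,t)\geq(1-t)^{2}\,p(p-1)\lm(1-\lm)/2$. In each case the bound is precisely $\de_p$.

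To turn this additive gap into the claimed multiplicative factor, I would use that $t^{p}\leq1$, so $\lm+(1-\lm)t^{p}\leq1$ and hence $\de_p\geq\de_p\ap{\lm+(1-\lm)t^{p}}$; therefore
\[ (\lm+(1-\lm)t)^{p}\leq\lm+(1-\lm)t^{p}-\de_p\leq(1-\de_p)\ap{\lm+(1-\lm)t^{p}} . \]
Multiplying through by $r^{p}$ restores the general statement for $r>s\geq0$. I do not expect a genuine obstacle: the only computational step is the evaluation of the two $p$-dependent integrals in the case $p\geq2$, which is a direct antiderivative calculation, and the only point requiring care is choosing the correct direction of the monotonicity bound on $x^{p-2}$ in each of the two ranges of $p$.
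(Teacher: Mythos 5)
Your proposal is correct and follows essentially the same route as the paper: normalize to $r=1$ by homogeneity, bound the exact Jensen remainder $E_{(\cdot)^p}$ from the preceding lemma below by $\de_p$ using the monotonicity of $x\mapsto x^{p-2}$ separately in the ranges $p\geq2$ and $p\in(1,2)$, and convert the additive gap into the multiplicative factor via $\lm+(1-\lm)s^p\leq1$. The integral evaluations and the final scaling step match the paper's argument.
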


With these preparations we are ready to prove Proposition~\ref{p:invisgm}.

\begin{proof}[Proof of Proposition~\ref{p:invisgm}]
Let  first $x_0\in S_{o,o'}$, $\pi\in \Pi$ be arbitrary. We apply the previous lemma with $\lm=(1-p_{x_0})$, $r=\sum_{x\neq x_0}\tfrac{p_{x}}{(1-p_{x_0})}\gm^{(\pi)}_{x}(g) $ and $s={\gm_{x_0}^{(\pi)}(g)}$ to obtain
\begin{align*}
\Big({\sum_{x\in S_{o,o'}}p_{x}\gm_{x}^{(\pi)}(g)}\Big)^p
\leq(1-\de_p)\sum_{x\in S_{o,o'}}p_{x}{\gm_{x}^{(\pi)}(g)}^p,
\end{align*}
where we also applied the Jensen inequality to estimate $\lm r^{p}\leq\sum_{x\neq x_{0}}p_{x}\gm_{x}^{(\pi)}(g)^{p}$. By estimating $c_{x}^{(\pi)}(g)\leq1$ for all $x\in S_{o,o'}$, we get
\begin{align*}
\ka_o^{(p)}(g) \leq\frac{\sum_{\pi\in\Pi}\ap{\sum_{x\in S_{o,o'}} p_{x}\gm_{x}^{(\pi)}(g)}^p}{\sum_{\pi\in\Pi} \sum_{x\in S_{o,o'}}p_{x}{\gm_{x}^{(\pi)}(g)}^p}\leq 1-\de_p.
\end{align*}
So, far we have not said anything about the value of $\de_{p}$. This will be done to finish the proof. The constant $\de_{p}$ is a product of a term involving $s/r$ and one involving $p$ and $\lm$, see Lemma~\ref{l:Jensen2}. Recalling that we chose $\lm=1-p_{x_0}$ above, we estimate the part involving $\lm$ and $p$ by the quantity $c_2:=c_2(I,p)$ defined as
\begin{align*}
c_{2}&:=\min_{z\in I+i[0,1]}\min_{x\in S_{o,o'}}(1-p_{x}) \min\set{\frac{p(p-1)}{2}p_{x},\ap{1-(1-p_{x})^{p-1}}}.
\end{align*}
As the $p_x$'s are uniformly larger than zero and $p>1$, we have $c_2>0$. For given $\pi\in\Pi$ let now $x_0,\oh x\in S_{o,o'}$ be chosen such that $\gm_{x_0}^{(\pi)}(g)=\min_{x\in S_{o,o'}}\gm_{x}^{(\pi)}(g)$, $\gm_{\oh x}^{(\pi)}(g)=\max_{x\in S_{o,o'}}\gm_{x}^{(\pi)}(g)$. This is the part where the assumption $ \Vis_{\gm}(g,\eps)\neq S_{o,o'}$ comes into play. By assumption  $x_0$ must be in $S_{o,o'}\setminus \Vis_{\gm}(g\circ\pi,\eps)$ and, hence, $\gm_{x_0}^{(\pi)}(g)/\gm_{\oh x}^{(\pi)}(g)\leq \eps$.
Now, with $r=\sum_{x\neq x_{0}} \frac{p_{x}}{(1-p_{x_0})}\gm_{x}^{(\pi)}(g)$ and $s=\gm_{x_0}^{(\pi)}(g)$ as above,
we obtain by the definition of $c_1$
\begin{align*}
\frac{r}{s}=\frac{(1- p_{x_0})\gm_{x_0}^{(\pi)}(g)} {\sum_{x} p_{x}\gm_{x}^{(\pi)}(g)}
= \frac{\gm_{x_0}^{(\pi)}(g)}{\gm_{\oh x}^{(\pi)}(g)}\frac{(1- p_{x_0})} {\sum_{x} p_{x}\gm_{x}^{(\pi)}(g)/\gm_{\oh x}^{(\pi)}(g) }
\leq \eps\frac{(1-p_{x_{0}})}{p_{\oh x}}\leq \frac{\eps}{c_1}.
\end{align*}
Hence, $
\de_{p}\geq c_2\ap{1-{\eps}/c_1}^2>0$ by Lemma~\ref{l:Jensen2} and we finished the proof.
\end{proof}

\subsection{Case 2: Geometric and arithmetic means}\label{ss:Case2}

This subsection deals with Case~2 as discussed in Subsection~\ref{ss:kaformula}.  Define
\begin{equation*}\label{e:eps0}
\eps_0:=\min\set{\frac{\Im \Gm_x}{\Im \Gm_{y}}\mid {z\in I+i[0,1]}, {x,y\in S_{o,o'}}}.
\end{equation*}
As $I\subset\Sigma$ is  compact, we have $\eps_0>0$.
\medskip

\begin{prop}\label{p:invisIm}
Let $\eps>0$, $\eps'\in(0,\eps \eps_0)$. There is $\de=\de(\eps,\eps')>0$ such if $\pi\in\Pi$ and $g\in \h^{S_{o,o'}}$  satisfy
\begin{align*}
    \Vis_{\gm}(g,\eps)=S_{o,o'}
\qqand
\Vis_{\Im}^{o'}(g\circ\pi,\eps')\neq S_{o'},
\end{align*}
then it follows that there exists $x\in S_{o'}$ such that
\begin{align*}
c_{x}^{(\pi)}(g)\leq (1-\de).
\end{align*}
\begin{proof}
Since $Q_{x,y}(g)$ is the ratio of a geometric and an arithmetic mean it is equal to one if and only if the quantities that enter are equal. For a finer analysis we introduce, for  $x,y\in S_{o'}$,
\begin{align*}
\varrho_{x,y}(g) :=\frac{\Im g_x\Im \Gm_{y}\gm_y(g)}{\Im g_y\Im \Gm_{x}\gm_x(g)}.
\end{align*}
Observe that $\varrho_{x,y}=1/\varrho_{y,x}$. Then
\begin{align*}
Q_{x,y}(g)= \frac{2}{\ap{\sqrt{\varrho_{x,y}(g)} +\sqrt{\varrho_{y,x}(g)}}}.
\end{align*}
We have that $Q_{x,y}(g)<1$ if and only if $\varrho_{x,y}(g)\neq 1$.\\
Let $\pi\in\Pi$ be such that $\Vis_{\Im}^{o'}(g\circ\pi,\eps')\neq S_{o'}$. This implies the existence of $x,y\in S_{o'}$ with ${\Im g_{y}^{(\pi)}}/{\Im g_{x}^{(\pi)}}\leq\eps'$.
Since by assumption $x, y\in \Vis_{\gm}(g\circ\pi,\eps)$, we have $\gm_{y}^{(\pi)}(g)/\gm_{x}^{(\pi)}(g)>\eps$.
Hence,
\begin{align*}
\varrho_{x,y}(g\circ\pi)>\frac{\eps_{0}\eps}{\eps'}.
\end{align*}
Employing the inequality $r/(1+r^2)\leq s/(1+s^2)$ for $r\geq s\geq1$, we get
\begin{align*}
Q_{x,y}(g\circ\pi) =\frac{2\sqrt{\varrho_{x,y}(g\circ\pi)}}{1+\varrho_{x,y}(g\circ\pi)} \leq \frac{2\sqrt{\eps_0\eps \eps'}}{\eps_0 \eps+\eps'}<1.
\end{align*}
By the choice of $\eps'\in(0,\eps_0\eps)$, we get  $Q_{x,y}(g\circ\pi)<c$ with $c=2\sqrt{\eps_{0}\eps\eps'}/(\eps_{0}\eps+\eps')<1$.
The statement now follows from Lemma~\ref{l:using_alQ<1}.
\end{proof} \end{prop}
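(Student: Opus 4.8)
The plan is to isolate a single pair $x,y$ of forward neighbours of $o'$ at which the quotient-of-means factor $Q_{x,y}$, evaluated at $g\circ\pi$, is quantitatively below $1$, and then to turn this into a bound on a contraction quantity by invoking Lemma~\ref{l:using_alQ<1}. First I would unpack the hypotheses. By label-invariance of the unperturbed truncated Green functions, $\gm$-visibility is stable under $\pi$, so $\Vis_{\gm}(g,\eps)=S_{o,o'}$ forces $\Vis_{\gm}(g\circ\pi,\eps)=S_{o,o'}$; in particular any two of the numbers $\gm_v^{(\pi)}(g)$, $v\in S_{o'}$, differ by at most the factor $\eps$. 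On the other hand $\Vis_{\Im}^{o'}(g\circ\pi,\eps')\neq S_{o'}$ yields a vertex $y\in S_{o'}$ which is not $\Im$-visible at level $\eps'$, and I would pair it with the vertex $x\in S_{o'}$ realising $\max_{u\in S_{o'}}\Im g_u^{(\pi)}$; then $\Im g_y^{(\pi)}/\Im g_x^{(\pi)}\le\eps'$ while $x\in\Vis_{\Im}^{o'}(g\circ\pi,\eps)$ automatically (note $\eps<1$, as otherwise $\Vis_{\gm}(g,\eps)=S_{o,o'}$ would be impossible). Taking $x$ to be this maximiser rather than an arbitrary partner of $y$ is the one bookkeeping point requiring care, because Lemma~\ref{l:using_alQ<1} needs the \emph{second} slot of the pair it is fed to be $\Im$-visible at level $\eps$.

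The second step is the elementary estimate on $Q_{x,y}$. Encoding the two quantities entering $Q_{x,y}$ in the single ratio $\varrho_{x,y}(g):=\Im g_x\,\Im\Gm_y\,\gm_y(g)\,/\,(\Im g_y\,\Im\Gm_x\,\gm_x(g))$ and using $\varrho_{x,y}=1/\varrho_{y,x}$, a one-line computation gives $Q_{x,y}(g)=2\sqrt{\varrho_{x,y}(g)}\,/\,(1+\varrho_{x,y}(g))$, which is strictly decreasing in $\varrho_{x,y}$ once $\varrho_{x,y}\ge1$. Plugging the three bounds $\Im g_x^{(\pi)}/\Im g_y^{(\pi)}\ge1/\eps'$, $\Im\Gm_y/\Im\Gm_x\ge\eps_0$ (by definition of $\eps_0$), and $\gm_y^{(\pi)}(g)/\gm_x^{(\pi)}(g)>\eps$ into the product defining $\varrho_{x,y}(g\circ\pi)$ gives $\varrho_{x,y}(g\circ\pi)\ge\eps_0\eps/\eps'>1$, and hence $Q_{x,y}^{(\pi)}(g)\le 2\sqrt{\eps_0\eps\eps'}\,/\,(\eps_0\eps+\eps')=:c$, where $c<1$ by the strict AM--GM inequality applied to the distinct numbers $\eps'$ and $\eps_0\eps$.

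Finally, since $\cos\al\le1$ and $Q\ge0$, one has $Q_{x,y}^{(\pi)}(g)\cos\al_{x,y}^{(\pi)}(g)\le c$, and by the symmetry $Q_{x,y}=Q_{y,x}$, $\al_{x,y}=\al_{y,x}$ the same bound holds for the ordered pair $(y,x)$, whose second entry $x$ lies in $\Vis_{\Im}^{o'}(g\circ\pi,\eps)$. Lemma~\ref{l:using_alQ<1}, applied with $i=o'$ and first slot $y$, then returns $c_y^{(\pi)}(g)\le1-\de$ with $\de=\de(\eps,c)>0$; since $c$ depends only on $\eps$, $\eps'$ and $\eps_0$, and $\eps_0$ is determined by $I$, this $\de$ is a constant of the required form $\de(\eps,\eps')$. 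I do not anticipate a genuine obstacle here: every inequality after the pair $(x,y)$ has been chosen is one-variable calculus, and the only delicate issue is making the visibility thresholds line up — $\eps$ for $\gm$-visibility, and $\eps'$ versus $\eps$ for $\Im$-visibility — so that Lemma~\ref{l:using_alQ<1} is genuinely applicable with the maximiser $x$ in its second slot.
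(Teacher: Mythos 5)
Your proposal is correct and takes essentially the same route as the paper's proof: the auxiliary ratio $\varrho_{x,y}$, the identity $Q_{x,y}=2\sqrt{\varrho_{x,y}}/(1+\varrho_{x,y})$, the three-factor lower bound $\varrho_{x,y}(g\circ\pi)>\eps_0\eps/\eps'$, the resulting bound $Q\leq 2\sqrt{\eps_0\eps\eps'}/(\eps_0\eps+\eps')<1$, and the appeal to Lemma~\ref{l:using_alQ<1}. Your explicit choice of $x$ as the maximizer of $\Im g_u^{(\pi)}$ over $S_{o'}$ (so that the $\Im$-visible vertex occupies the second slot of that lemma, using $\eps<1$) only makes precise a bookkeeping step the paper leaves implicit in its final sentence.
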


\subsection{Case 3: A general bound on the relative arguments}\label{ss:Case3}

Until this point we used the definition of the argument of a complex number only in combination with the cosine. Now, we want to consider it as a function itself, so a little more care has to be taken in its particular definition.

We consider the argument of a non zero complex number as the continuous group homomorphism $\arg:\C\setminus\{0\}\to\Sp^{1}\cong\R/2\pi\Z$,  where in this context $\pi$ denotes of course the number $\pi$. So, whenever we speak in the following about permutations we indicate them by saying explicitly that they are elements of $\Pi$.
Moreover, we denote by $d_{\Sp^{1}}(\cdot,\cdot)$ the translation invariant metric in $\Sp^{1}$ which is normalized by $d_{\Sp^{1}}(0,\pi)=\pi$.

We define a quantity, related to the `minimal angle' of the unperturbed Green function with the real axis, by
\begin{align*} \de_0:=\frac{1}{4}\min\set{d_{\Sp^{1}}(\arg \Gm_x,\be)\mid \be\in\{0,\pi\}, z\in I+i[0,1], x\in S_{o,o'}}.
\end{align*}
Since $I\subset\Sigma$ is chosen compact, the minimum exists and we have $\de_0>0$. Recall also the definition $B_r=\{g\in\h^{S_{o,o'}}\mid \gm(g_{x},\Gm_{x})\leq r, \mbox{ for all } x\in S_{o,o'}\}$ for $r\geq0$ and also the definition of $\eps_{0}$ from the previous section.
\medskip

\begin{prop}\label{p:al>0}
There is $c=c(\de_0)<1$, $\lm_0=\lm_0(\de_0,\eps_0)>0$ and $R:[0,\lm_0)\to[0,\infty)$ with $\lim_{\lm\to0}R(\lm)=0$ such that for all $\lm\in[0,\lm_0)$, $w\in[-\lm,\lm]$, $g\in \h^{S_{o,o'}}\setminus B_{R(\lm)}$ there is $\pi\in \Pi$ with
\begin{align*}
Q_{x,y}^{(\pi)}\leq c\qquad\mbox{or}\qquad
\cos    \al_{x,y}^{(\pi)}\leq c,
\end{align*}
either for  some  $x,y\in S_{o'}$ or for $x=o'$ and all $y\in S_{o}$.
\end{prop}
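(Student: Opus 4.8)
The plan is to work with the perturbed truncated Green functions in their polar-type description near the unperturbed values and to exploit the recursion relation \eqref{e:rec} to control arguments. First I would fix a vertex $o$ and its $\mathrm{(M1^*)}$-neighbor $o'$, and recall that for $g\in\h^{S_{o,o'}}\setminus B_{R(\lm)}$ at least one coordinate $g_x$ is $\gm$-far from $\Gm_x$; I would quantify the size of $R(\lm)$ at the very end so that it absorbs the linear perturbation terms of size $O(\lm)$ coming from $w,w',\vartheta$. The starting point is the identity used in the proof of Lemma~\ref{l:OSE}: writing $-1/g_{o'}=z+\xi_{o'}$ and $-1/\Gm_{o'}=z+\zeta_{o'}$ with $\xi_{o'}=\sum_{x\in S_{o'}}(1+\lm\te_x)g_x-v^{\per}(o')-w'$ (and similarly at $o$), the relative argument $\arg(g_x-\Gm_x)\overline{(g_y-\Gm_y)}$ is invariant under the Möbius map $\zeta\mapsto -1/(z+\zeta)$ up to a correction that is continuous in $z$ and vanishes as all data approach the unperturbed configuration. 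So $\al_{x,y}$ and $Q_{x,y}$ at $o$, $o'$ are, up to errors that I fold into $c$ and $R(\lm)$, the corresponding quantities one sphere down.

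Next I would split into the relevant subcases. If for some $i\in\{o,o'\}$ there are $x,y\in S_i$ with $g_x-\Gm_x$ and $g_y-\Gm_y$ of genuinely different magnitudes in the $\Im g\,\Im\Gm\,\gm$-weighted sense, then $Q_{x,y}$ (a geometric-over-arithmetic mean, as in Proposition~\ref{p:invisIm}) is bounded away from $1$ and we are done with the first alternative. So assume all these weighted magnitudes on each sphere are comparable. Then I would look at the arguments: the key geometric input is that $\arg\Gm_x$ stays at distance at least $4\de_0$ from $\{0,\pi\}$ (the definition of $\de_0$), so $\Im\Gm_x$ is bounded below and the unperturbed vectors are `generic'. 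If all the $\arg(g_x-\Gm_x)$ on a sphere $S_i$ were within $2\de_0$ of each other, then $\cos\al_{x,y}^{(\pi)}\ge c$ for all $x,y\in S_i$, and I would then argue that the combined perturbation $g_{o'}-\Gm_{o'}$ cannot itself have argument aligned with all of $S_o$: passing $\xi_{o'}-\zeta_{o'}=\sum_{x\in S_{o'}}(g_x-\Gm_x)+O(\lm)$ through the Möbius map $\zeta\mapsto-1/(z+\zeta)$ rotates the argument by a definite nonzero amount determined by $\Im z\in[0,1]$ and the lower bound on $\Im\Gm_{o'}$, forcing $\cos\al_{o',y}^{(\pi)}\le c$ for all $y\in S_o$ (the ``$x=o'$ and all $y\in S_o$'' alternative). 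Finally, if the $\arg(g_x-\Gm_x)$ on some sphere are \emph{not} all within $2\de_0$, then the spread itself gives two of them with $\cos\al_{x,y}\le c$ directly. The role of $\Pi$ is minor here: any choice of $\pi$ works once a bad pair exists, but I would keep the freedom to permute same-label coordinates so that the bad pair sits in a position that later combines well with Lemmas~\ref{l:using_alQ<1} and~\ref{l:using_c<1}.

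The main obstacle I anticipate is the bookkeeping of the Möbius correction terms: the argument invariance $\arg(g_x-\Gm_x)\overline{(g_y-\Gm_y)}$ under $\zeta\mapsto-1/(z+\zeta)$ is only exact when $z$ is real and the two preimages are small, so I must show that for $z\in I+i[0,1]$ and $g\notin B_{R(\lm)}$ the accumulated error in the argument is smaller than, say, $\de_0/2$, uniformly. This requires a lower bound on $|z+\xi_{o'}|$ and $\Im(z+\xi_{o'})$ that does not degenerate — here $\mathrm{(M1^*)}$ is essential, since it guarantees $|S_{o'}|\ge 1$ and, together with $\mathrm{(M0)}$, that $|S_o|\ge 2$, so the sum $\sum_{x\in S_{o'}}g_x$ keeps $\Im$ bounded below even when some $g_x$ have small imaginary part (that degenerate situation being exactly Case~2, handled separately). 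I would organize the estimate as: (i) a deterministic bound showing $\Im g_{o'},\Im\Gm_{o'}$ are bounded below on the relevant region; (ii) a Lipschitz estimate for $\zeta\mapsto\arg(-1/(z+\zeta))$ on that region; (iii) choosing $R(\lm)\to0$ and $\lm_0$ so small that the $O(\lm)$ and $O(R(\lm))$ corrections are dominated by $\de_0$. Once the argument-transport estimate is in place, the case analysis above closes the proposition with $c=c(\de_0)<1$ obtained as the worst of $\cos(2\de_0)$-type and $Q$-mean bounds.
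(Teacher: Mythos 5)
Your overall mechanism for the second alternative (the Möbius step at $o'$ rotates the summed difference, and this rotation is bounded below because $\Gm_{o'}$ stays away from the real axis) is the right one, but the proposal has genuine gaps. The most serious is your dismissal of the permutations: the role of $\Pi$ is not minor, it is the heart of the matter. The paper's proof rests on the exact identity $g_{o'}^{(\pi)}-\Gm_{o'}=g_{o'}^{(\pi)}\Gm_{o'}\,(\tau_{o'}^{(\pi)}-w)$ with $\tau_{o'}^{(\pi)}=\sum_{y\in S_{o'}}(g_y^{(\pi)}-\Gm_y)$, and for the factor $\arg(\tau_{o'}^{(\pi)}-w)$ to carry any information one must first guarantee $|\tau_{o'}^{(\pi)}|\gg\lm$. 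Since $g\notin B_{R(\lm)}$ only forces \emph{some} coordinate of $S_{o,o'}$ to be far from its unperturbed value, that coordinate may lie in $S_o\setminus\{o'\}$ while every coordinate in $S_{o'}$ is tiny and perfectly aligned; then neither of your alternatives holds for the identity permutation, and the $-w$ term scrambles $\arg(\tau_{o'}-w)$ completely. The cure is exactly $\mathrm{(M1^*)}$ plus a label-invariant permutation that moves the far coordinate into $S_{o'}$ (Claim~1 in the paper), combined with the alignment hypothesis on $S_{o'}$ (so the aligned sum is at least as large as its largest term). This also dictates the calibration of $R(\lm)$: it must be chosen \emph{large} relative to $\lm$, namely $R(\lm)=\eta_1^{-1}\big(\tfrac{1+\de_0}{\de_0}\lm\big)$, so that outside $B_{R(\lm)}$ the signal dominates the $w$-noise, while still $R(\lm)\to0$; your step (iii), which treats $O(R(\lm))$ as an error to be made small, has this backwards.

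Two further points. First, your case analysis produces conclusions not allowed by the statement: a pair with large relative argument or disparate magnitudes \emph{inside $S_o$} is not one of the admissible outputs (only pairs in $S_{o'}$, or $o'$ against all of $S_o$, are), so those cases must be converted via label-invariant permutations and a "compare over a third element of $S_{o'}$" argument (Lemma~\ref{l:sum}, which is where $\mathrm{(M1^*)}$ enters again); likewise, to get $\cos\al_{o',y}^{(\pi)}\leq c$ for \emph{all} $y\in S_o$ you need $\arg(\tau_{o'}^{(\pi)}-w)$ close to $\arg(g_y-\Gm_y)$ for every $y\in S_o$, i.e.\ alignment \emph{across} the two spheres, which again only follows from the "for all $\pi\in\Pi$ and all pairs in $S_{o'}$" hypothesis together with Lemma~\ref{l:sum}; your sketch only assumes alignment within single spheres. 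Second, your premise that relative arguments are "invariant under the Möbius map up to corrections that vanish near the unperturbed configuration" is wrong, and the Lipschitz/lower-bound bookkeeping you anticipate is unnecessary: the identity above is exact, and the point is precisely that the rotation $\arg(g_{o'}^{(\pi)}\Gm_{o'})$ does \emph{not} vanish --- it is at distance more than $4\de_0$ from $0$ by the definition of $\de_0$ (not by a lower bound on $\Im z$, which may be $0$ on $I+i[0,1]$). The proof then closes by the triangle inequality in $\Sp^1$: $4\de_0$ minus the $3\de_0$ alignment error leaves $\de_0$, whence $c=\cos\de_0$. Also note the trivial case $g_x=\Gm_x$ (handled via $Q_{x,y}=0$) and the magnitude-disparity discussion are not needed here at all; magnitudes are dealt with in Cases~1 and~2 of the proof of Proposition~\ref{p:ka}, and importing them into this proposition only creates the $S_o$-pair problem described above.
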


Note that the $Q_{x,y}\leq c$ is in the statement of the proposition only to deal with the trivial cases $g_{x}=\Gm_{x}$ or $g_{y}=\Gm_{y}$. Before we come to the proof we need some basic geometric observations. The first one is about perturbations of arguments. \medskip

\begin{lemma}\label{l:arg} Let $\xi,\zeta\in\C$, such that $d_{\Sp^{1}}(\arg\xi,0)\leq \pi/2$ and $\mo{\zeta}<1$.
Then
\begin{align*}
d_{\Sp^{1}}(\arg\ap{1+ \xi+\zeta},0)\leq
\left\{
  \begin{array}{ll}
d_{\Sp^{1}}(\arg \xi,0) +\frac{|\zeta|}{1-|\zeta|}    &:\xi\neq0, \\
\frac{|\zeta|}{1-|\zeta|}&: \xi=0 .\\
  \end{array}
\right.
\end{align*}
\begin{proof}
We write $\xi$, $\zeta$ in polar coordinates $\xi=r e^{i\de}$ and $\zeta=\eps e^{i\te}$. We denote the left hand side of the inequality by $\be$, i.e., $\be=d_{\Sp^{1}}(\arg\ap{1+ r e^{i\de}+\eps e^{i\te}},0)$. Assume without loss of generality that $\de\ge0$. Since we have for $z\in \C$ with $\Re z,\Im z\geq0$ and $u\geq0$ that $d_{\Sp^{1}}({\arg{(z+iu)},0})\geq d_{\Sp^{1}}({\arg(z-iu ),0})$ we conclude $\be\leq d_{\Sp^{1}}({\arg{(1+ r e^{i\de}+\eps e^{i|\te|})},0})$. Hence, we may assume without loss of generality that $\te\ge0$. We  use $\arg z=\arctan (\Im z/\Re z)$, subadditivity and monotonicity of $\arctan$ on $[0,\infty)$  and  $0\leq\arctan'\leq1$ to calculate
\begin{align*}
\be&=\arctan\ap{\frac{r\sin\de+\eps\sin\te}{1+r\cos\de+\eps\cos\te}}\\
&\leq\arctan\ap{\frac{r\sin\de}{1+r\cos\de+\eps\cos\te}}+
\arctan\ap{\frac{\eps\sin\te}{1+r\cos\de+\eps\cos\te}}\\
&\leq \arctan\ap{\frac{\sin\de}{\cos\de}}+ \arctan\ap{\frac{\eps}{1-\eps}} \leq\de+ \frac{\eps}{1-\eps}.
\end{align*}
The statement about $\xi=0$ is a direct consequence from the first statement.
\end{proof}
\end{lemma}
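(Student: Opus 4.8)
The plan is to pass to polar coordinates, make two symmetry reductions so that everything takes place in the closed first quadrant, and then reduce the claimed bound to the subadditivity of $\arctan$ on $[0,\infty)$.

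\emph{Setup and reductions.} First I would write $\xi=re^{i\de}$ and $\zeta=\eps e^{i\te}$ with $r\geq0$, $\eps=|\zeta|\in[0,1)$, and $\de\in[-\pi/2,\pi/2]$; the hypothesis $d_{\Sp^{1}}(\arg\xi,0)\leq\pi/2$ is precisely the condition on $\de$, and the case $\xi=0$ is $r=0$. Replacing $(\xi,\zeta)$ by $(\ov\xi,\ov\zeta)$ conjugates $1+\xi+\zeta$ and leaves both $d_{\Sp^{1}}(\arg(\cdot),0)$ and $d_{\Sp^{1}}(\arg\xi,0)$ unchanged, so I may assume $\de\geq0$. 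To also arrange $\te\geq0$ I would record the elementary monotonicity fact that for $z$ with $\Re z>0$, $\Im z\geq0$ and $u\geq0$ one has $d_{\Sp^{1}}(\arg(z+iu),0)\geq d_{\Sp^{1}}(\arg(z-iu),0)$ — immediate from $|\Im z-u|\leq\Im z+u$ and the monotonicity of $\arctan$ on $[0,\infty)$ — and apply it to $z=(1+r\cos\de+\eps\cos\te)+i\,r\sin\de$ and $u=\eps|\sin\te|$, which is legitimate since $\Re z\geq1-\eps>0$ and $\Im z\geq0$. After these reductions $\de\in[0,\pi/2]$ and $\te\in[0,\pi]$.

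\emph{The one-variable estimate.} In this range, $\Re(1+\xi+\zeta)=1+r\cos\de+\eps\cos\te\geq1-\eps>0$ and $\Im(1+\xi+\zeta)=r\sin\de+\eps\sin\te\geq0$, hence $\be:=d_{\Sp^{1}}(\arg(1+\xi+\zeta),0)=\arg(1+\xi+\zeta)=\arctan\big(\tfrac{r\sin\de+\eps\sin\te}{1+r\cos\de+\eps\cos\te}\big)\in[0,\pi/2)$. Using subadditivity, $\arctan(a+b)\leq\arctan a+\arctan b$ for $a,b\geq0$ (a consequence of concavity and $\arctan 0=0$), I would split $\be$ into a $\de$-part and an $\eps$-part. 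For the $\de$-part, $1+r\cos\de+\eps\cos\te\geq r\cos\de$ (since $1+\eps\cos\te\geq1-\eps\geq0$) gives $\arctan\big(\tfrac{r\sin\de}{1+r\cos\de+\eps\cos\te}\big)\leq\arctan(\tan\de)=\de=d_{\Sp^{1}}(\arg\xi,0)$ when $\de<\pi/2$, while this term is $<\pi/2=\de$ when $\de=\pi/2$ and is $0$ when $r=0$. For the $\eps$-part, $\eps\sin\te\leq\eps$ together with $1+r\cos\de+\eps\cos\te\geq1-\eps$ bounds the argument of $\arctan$ by $\eps/(1-\eps)$, and $\arctan t\leq t$ for $t\geq0$ yields $\eps/(1-\eps)=|\zeta|/(1-|\zeta|)$. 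Summing the two bounds gives the asserted inequality when $\xi\neq0$, and the $\xi=0$ statement is the same computation with $r=0$.

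\emph{Main obstacle.} Everything above is elementary; the only point that needs attention is that after the reduction to $\te\geq0$ the quantity $\eps\cos\te$ appearing in the denominators can still have either sign, so one cannot simply discard it. The two splittings are arranged exactly so that the relevant denominator is bounded below by $r\cos\de$ (for the $\de$-part) and by $1-\eps$ (for the $\eps$-part), both of which are valid regardless of the sign of $\cos\te$; apart from this bookkeeping, the proof is a one-line trigonometric inequality plus the two symmetry reductions.
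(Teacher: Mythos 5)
Your proof is correct and follows essentially the same route as the paper: polar coordinates, the two symmetry reductions to $\de\ge0$ and $\te\ge0$ via conjugation and the monotonicity fact, then subadditivity of $\arctan$ with the denominator bounded below by $r\cos\de$ for the $\de$-part and by $1-\eps$ for the $\eps$-part. Your explicit handling of the edge case $\de=\pi/2$ is a minor refinement of the paper's argument, but the approach is the same.
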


The second auxiliary lemma deals with sums of complex numbers.\medskip

\begin{lemma}\label{l:sum} Let $\de\in[0,\pi/2]$, $\xi\in\C^{S_{o,o'}}$ with $\xi_x\neq0$, $x\in S_{o,o'}$,  and $d_{\Sp^{1}} \big(\arg(\xi_x^{(\pi)}),\arg(\xi_y^{(\pi)})\big)\leq\de$ for all $\pi\in\Pi$, $x,y\in S_{o'}$. Then,
\begin{itemize}
\item [(1.)] $d_{\Sp^{1}}(\arg(\xi_x),\arg(\xi_y))\leq2\de$ for all $x,y\in S_{o,o'}$,
\item [(2.)]  $\big|{\sum_{y\in S_{o'}}\xi_{y}^{(\pi)}}\big|\geq\big|{\xi_{x}^{(\pi)}}\big|$ for all $\pi\in\Pi$ and $x\in S_{o'}$,
\item [(3.)] $d_{\Sp^{1}}\big({\arg\big({\sum_{y\in S_{o'}}\xi_{y}^{(\pi)}}\big), \arg(\xi_x^{(\pi)})}\big)\leq2\de$ for all $\pi\in\Pi$ and $x\in S_{o,o'}$.
\end{itemize}
\begin{proof}
The numbers $\xi_{x}$, $x\in S_{o,o'}$ can be thought as non zero vectors in the complex plane. Then, the assumption about the arguments means that they  point approximately in the same direction. The first statement follows as we can compare two elements $\xi_{x}$, $\xi_{y}$ always over a third one. The existence of such an element is guaranteed by $\mathrm{(M1^*)}$. The second and the third statement can easily be  seen by a direct calculation (using Lemma~\ref{l:arg} for (3.)) or  simply by drawing a picture.
\end{proof}
\end{lemma}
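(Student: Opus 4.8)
The starting point is a simple planar fact: since $\de\le\pi/2<\pi$, any finite subset of $\Sp^{1}$ of diameter at most $\de$ is contained in a closed arc of length at most $\de$ (it lies in the $\de$‑ball around any of its points, an arc of length $\le\pi$ on which $d_{\Sp^{1}}$ coincides with Euclidean arc length). Applying the hypothesis to an arbitrary $\pi\in\Pi$, the nonzero numbers $\xi_{y}^{(\pi)}=\xi_{\pi(y)}$, $y\in S_{o'}$, therefore all have argument in one common sector $A_{\pi}\subset\Sp^{1}$ of opening at most $\de$. Statement~(2) now follows by projecting onto the direction of $\xi_{x}^{(\pi)}$: for $x\in S_{o'}$,
\[
\Big|\sum_{y\in S_{o'}}\xi_{y}^{(\pi)}\Big|\;\ge\;\Re\Big(e^{-i\arg\xi_{x}^{(\pi)}}\!\!\sum_{y\in S_{o'}}\xi_{y}^{(\pi)}\Big)\;=\;\sum_{y\in S_{o'}}\big|\xi_{y}^{(\pi)}\big|\cos\big(\arg\xi_{y}^{(\pi)}-\arg\xi_{x}^{(\pi)}\big)\;\ge\;\big|\xi_{x}^{(\pi)}\big|,
\]
since each summand is nonnegative (the angles differ by at most $\de\le\pi/2$) and the term $y=x$ contributes $|\xi_{x}^{(\pi)}|$. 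For~(3) with $x\in S_{o'}$: a sum of nonzero vectors whose arguments all lie in $A_{\pi}$ has its own argument in $A_{\pi}$ — equivalently, write $\sum_{y\in S_{o'}}\xi_{y}^{(\pi)}=\xi_{x}^{(\pi)}\big(1+\sum_{y\neq x}\xi_{y}^{(\pi)}/\xi_{x}^{(\pi)}\big)$ and apply Lemma~\ref{l:arg} with $\zeta=0$ — hence $d_{\Sp^{1}}\big(\arg\sum_{y\in S_{o'}}\xi_{y}^{(\pi)},\arg\xi_{x}^{(\pi)}\big)\le\de\le2\de$.

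For~(1) I would argue by comparison through a third vertex, using that $\mathrm{(M0)}$ and $\mathrm{(M1^*)}$ guarantee both $|S_{o'}|\ge2$ and that every label occurring in $S_{o,o'}$ already occurs in $S_{o'}$. Fix $x,y\in S_{o,o'}$. If $x\in S_{o}\setminus\{o'\}$, pick $w\in S_{o'}$ with the same label as $x$ and let $\tau=(x\,w)\in\Pi$ be the transposition; since $S_{o'}$ and $S_{o}\setminus\{o'\}$ are disjoint, $\tau$ fixes every vertex of $S_{o'}$ other than $w$, so $\xi_{w}^{(\tau)}=\xi_{x}$ while $\xi_{u}^{(\tau)}=\xi_{u}$ for $u\in S_{o'}\setminus\{w\}$, and the hypothesis applied to $\tau$ yields $d_{\Sp^{1}}(\arg\xi_{x},\arg\xi_{u})\le\de$ for all $u\in S_{o'}\setminus\{w\}$, a nonempty set. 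Doing this for $x$ and for $y$ (no move being needed when the vertex already lies in $S_{o'}$) produces a common intermediate $u_{0}\in S_{o'}$ with $d_{\Sp^{1}}(\arg\xi_{x},\arg\xi_{u_{0}})\le\de$ and $d_{\Sp^{1}}(\arg\xi_{y},\arg\xi_{u_{0}})\le\de$, whence $d_{\Sp^{1}}(\arg\xi_{x},\arg\xi_{y})\le2\de$. (When $x$ and $y$ can be placed over two distinct vertices of $S_{o'}$ by a single $\pi\in\Pi$ — in particular whenever they carry distinct labels — no intermediate is needed and the bound improves to $\de$.)

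The only genuinely delicate point, which I expect to be the main obstacle, is~(3) when $x\in S_{o}\setminus\{o'\}$: here a label‑invariant permutation need not preserve the subset $S_{o'}$, so $\arg\xi_{x}^{(\pi)}=\arg\xi_{\pi(x)}$ is not controlled by the sector $A_{\pi}$. The plan is to pass to a \emph{companion permutation}. Choose $s\in S_{o'}$ with the same label as $x$ (possible by $\mathrm{(M1^*)}$), put $\tau=(x\,s)\in\Pi$ and $\pi'=\pi\circ\tau\in\Pi$. Then $\pi'(S_{o'})=\big(\pi(S_{o'})\setminus\{\pi(s)\}\big)\cup\{\pi(x)\}$ contains $\pi(x)$ and, since $|S_{o'}|\ge2$, still shares a vertex $c_{0}$ with $\pi(S_{o'})$. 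Applying the hypothesis to $\pi'$ places $\arg\xi_{\pi(x)}$ and $\arg\xi_{c_{0}}$ in a common sector of opening at most $\de$, so $d_{\Sp^{1}}(\arg\xi_{\pi(x)},\arg\xi_{c_{0}})\le\de$; and because $c_{0}\in\pi(S_{o'})$, the first paragraph shows that $\arg\xi_{c_{0}}$ and $\arg\big(\sum_{y\in S_{o'}}\xi_{y}^{(\pi)}\big)$ both lie in $A_{\pi}$, so $d_{\Sp^{1}}\big(\arg\sum_{y\in S_{o'}}\xi_{y}^{(\pi)},\arg\xi_{c_{0}}\big)\le\de$. The triangle inequality then gives $d_{\Sp^{1}}\big(\arg\sum_{y\in S_{o'}}\xi_{y}^{(\pi)},\arg\xi_{x}^{(\pi)}\big)\le2\de$, completing~(3). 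Apart from this companion‑permutation trick, the proof is just convexity of angular sectors together with the triangle inequality.
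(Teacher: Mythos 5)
Your proposal is correct and follows essentially the same route as the paper's (sketched) proof: comparison over a third element in $S_{o'}$ via label-invariant transpositions supplied by $\mathrm{(M1^*)}$ for (1.), and the sector/projection argument together with Lemma~\ref{l:arg} for (2.) and (3.), with your companion permutation $\pi\circ(x\,s)$ being exactly the ``compare over a third element'' device needed to get $2\de$ rather than $3\de$ when $x\in S_{o}\setminus\{o'\}$. The only expository caveat is that your common intermediate $u_{0}$ need not exist when $|S_{o'}|=2$ and $x,y$ carry distinct labels forcing $w\neq w'$, but precisely that case is covered by your parenthetical simultaneous placement (and for equal labels one simply takes $w'=w$), so the argument is complete.
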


We define the function $\eta_{1}:[0,\infty)\to[0,\infty)$ that measures the Euclidean distance of hyperbolic balls $B_{r}=\{g\in\h^{\A}\mid \gm(g_{x},\Gm_{x})\leq r,x\in S_{o,o'}\}$ to the real axis, by
\begin{align*}
\eta_{1}(r):=\inf\{|g_{x}-\Gm_{x}|\mid z\in I+i[0,1], {g\in \h^{S_{o,o'}}\setminus B_r}, x\in S_{o,o'}\}.
\end{align*}
The formula in the next lemma below describes the inverse function of $\eta_{1}$. Moreover we define $\eps_{1}$ by
\begin{equation*}
\eps_1:=\min\set{{\Im \Gm_x}\mid {z\in I+i[0,1]}, {x\in S_{o,o'}}}.
\end{equation*}

\begin{lemma}\label{l:eps1} The function $\eta_1$ takes values in $[0,\eps_1)$ and its inverse function $\eta_{1}^{-1}:[0,\eps_{1})\to[0,\infty)$ is given by
\begin{align*}
    \eta_{1}^{-1}(s)=\frac{s^{2}}{(\eps_{1}-s)\eps_{1}}
\end{align*}
\begin{proof}The proof is rather direct but involves some background on the geometry of hyperbolic balls in the upper half  plane. For all $x\in S_{o,o'}$, $z\in I+i[0,1]$ and $r\geq0$ there is a unique  $\xi\in\h$ with
\begin{align*}
|\xi-\Gm_{x}| =\min_{\zeta\in\h,\gm(\zeta,\Gm_x)=r}|\zeta-\Gm_{x}|,\qquad \Im \xi=\min_{\zeta\in\h,\gm(\zeta,\Gm_x)=r}\Im \zeta
\end{align*}
and
\begin{align*}
    \Re\xi=\Re \Gm_{x}\qand \Im \xi=\Im \Gm_{x}-|\xi-\Gm_{x}|.
\end{align*}
(For details on the proof of these simple facts see for example Section~2.3.4 in \cite{Kel}.)
Let $\eta_{2}:[0,\infty)\to(0,\infty)$ be given by $$\eta_2(r)=\min\{\Im g_{x}\mid z\in I+i[0,1], {g\in B_r}, x\in S_{o,o'}\}.$$
By the considerations above and compactness of $I+i[0,1]$, there are $x_0$, $z$ and $g$ for each $r\geq0$ such that  $\eta_1(r)=|g_{x_0}-\Gm_{x_0}|$ and $\eta_{2}(r)=\Im g_{x_0}$.
We conclude
$$\eta_{1}(r)+\eta_{2}(r)=\eps_{1}$$
which shows that $\eta_1$ takes values in $[0,\eps_1)$ since $\eta_{2}>0$.
Resolving the equation $\gm(g_{x_0},\Gm_{x_{0}})=r$ yields, using the definition of $\gm$,
\begin{align*}
r=\gm(g_{x_{0}},\Gm_{x_{0}})=\frac{|g_{x_{0}}-\Gm_{x_{0}}|^{2}}{\Im g_{x_{0}}\Im \Gm_{x_{0}}}=\frac{\eta_1(r)^{2}}{\eps_1\eta_{2}(r)} =\frac{\eta_1(r)^{2}}{(\eps_{1}-\eta_{1}(r))\eps_1}.
\end{align*}
\end{proof}
\end{lemma}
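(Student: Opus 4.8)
The plan is to recognise $\eta_1$ as being, up to the additive constant $\eps_1$, the Euclidean depth in $\h$ of the balls $B_r$, and then to read the formula off directly from the definition of $\gm$.

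First I would recall the elementary geometry of $\gm$-balls in $\h$ (this is worked out in \cite[Section~2.3.4]{Kel}): for a centre $w\in\h$ and $r\ge0$ the set $\{\zeta\in\h\mid\gm(\zeta,w)\le r\}$ is a closed Euclidean disc, and on the sphere $\gm(\zeta,w)=r$ the point $\xi$ of smallest imaginary part coincides with the point of smallest Euclidean distance to $w$; it lies directly below $w$, so that
\begin{align*}
\Re\xi=\Re w,\qquad \Im\xi=\Im w-|\xi-w|.
\end{align*}
Taking $w=\Gm_x$ and solving the quadratic equation for $|\xi-\Gm_x|$ that results from $\gm(\xi,\Gm_x)=r$ shows that both $|\xi-\Gm_x|$ and $\Im\xi$ are increasing in $\Im\Gm_x$ at fixed $r$, and that $|\xi-\Gm_x|$ is increasing in $r$ at fixed $\Im\Gm_x$. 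Hence, when we minimise $|g_x-\Gm_x|$ over $g$ outside $B_r$, jointly in $x\in S_{o,o'}$ and $z\in I+i[0,1]$, the optimal $x$ is one with $\Im\Gm_x$ as small as possible, i.e.\ $\Im\Gm_x=\eps_1$; the same is true when we minimise $\Im g_x$ over $g\in B_r$.

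Next I would introduce the companion function
\begin{align*}
\eta_2(r):=\min\{\Im g_x\mid z\in I+i[0,1],\ g\in B_r,\ x\in S_{o,o'}\}.
\end{align*}
Compactness of $I+i[0,1]$ together with the continuity of $z\mapsto\Gm_x(z,\Lp)$ on $\h\cup\Sigma$ guarantees that this minimum is attained, and by the previous paragraph it is attained at a triple $(x_0,z,g)$ for which $g_{x_0}$ sits at the bottom of its $\gm$-ball and $\Im\Gm_{x_0}=\eps_1$; the very same triple realises $\eta_1(r)=|g_{x_0}-\Gm_{x_0}|$ (on the boundary sphere $\gm(g_{x_0},\Gm_{x_0})=r$). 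Since the bottom point satisfies $\Im g_{x_0}+|g_{x_0}-\Gm_{x_0}|=\Im\Gm_{x_0}$, this yields
\begin{align*}
\eta_1(r)+\eta_2(r)=\eps_1 .
\end{align*}
As $\eta_2(r)>0$, this shows that $\eta_1$ takes values in $[0,\eps_1)$; combined with the (evident) monotonicity of $\eta_1$ in $r$, with $\eta_1(0)=0$ and with $\eta_1(r)\to\eps_1$ as $r\to\infty$, the map $\eta_1:[0,\infty)\to[0,\eps_1)$ is a continuous increasing bijection, so $\eta_1^{-1}$ is well defined on $[0,\eps_1)$.

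Finally, for the closed form I would just plug the realising configuration into the definition of $\gm$: since $\gm(g_{x_0},\Gm_{x_0})=r$,
\begin{align*}
r=\frac{|g_{x_0}-\Gm_{x_0}|^{2}}{\Im g_{x_0}\,\Im\Gm_{x_0}} =\frac{\eta_1(r)^{2}}{\eta_2(r)\,\eps_1} =\frac{\eta_1(r)^{2}}{(\eps_1-\eta_1(r))\,\eps_1},
\end{align*}
and writing $s=\eta_1(r)$ this is exactly $\eta_1^{-1}(s)=s^{2}/\big((\eps_1-s)\eps_1\big)$. The only point that is not a one-line computation — and the one I expect to need the most care — is the joint minimisation in the first two paragraphs: one must be sure that the coordinate $x_0$ and the energy $z$ making $\Im g_{x_0}$ smallest are the same ones making $|g_{x_0}-\Gm_{x_0}|$ smallest. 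This is precisely what the explicit description of the bottom point of a $\gm$-ball delivers, since it expresses both quantities as monotone functions of the single parameter $\Im\Gm_x$; everything else follows from the definition of $\gm$ and compactness of $I+i[0,1]$.
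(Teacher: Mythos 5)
Your proposal is correct and follows essentially the same route as the paper: the bottom-point description of $\gm$-balls, the companion function $\eta_2$, the identity $\eta_1(r)+\eta_2(r)=\eps_1$ realized by a single minimizing triple with $\Im\Gm_{x_0}=\eps_1$, and then reading the formula off the definition of $\gm$. Your extra care in checking (via the quadratic for the bottom point) that the same $(x_0,z,g)$ realizes both minima, and the remark on monotonicity of $\eta_1$, only makes explicit what the paper leaves implicit.
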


Before we  prove Proposition~\ref{p:al>0}, let us remark some simple facts about calculating with arguments. We have the following relation for $\xi,\zeta\in\C\setminus\{0\}$
\begin{align*}
    \arg(\xi\zeta)=\arg(\xi)+\arg(\zeta)=\arg(\xi)-\arg(\ov\zeta),
\end{align*}
where $\ov\zeta$ denotes the complex conjugate and the sum is of course considered in $\R\setminus 2\pi\Z$.
Moreover, for $\al,\be,\gm,\de\in \Sp^{1}$
\begin{align*}
    d_{\Sp^{1}}(\al+\be,\gm+\de)\leq d_{\Sp^{1}}(\al,\gm)+d_{\Sp^{1}}(\be,\de).
\end{align*}
Finally, let us mention that since $\al_{x,y}(g)=\arg((g_{x}-\Gm_{x})\ov{(g_{y}-\Gm_{y})})$ we have
\begin{align*}
d_{\Sp^{1}}(\al_{x,y}(g),0) =d_{\Sp^{1}}(\arg(g_{x}-\Gm_{x}),\arg(g_{y}-\Gm_{y})).
\end{align*}

\begin{proof}[Proof of Proposition~\ref{p:al>0}]
Let $R:[0,\lm_0)\to[0,\infty)$ with $\lm_{0}:={\eps_1\de_0}/\ap{1+\de_0}$ be defined as
\begin{align*}
R(\lm)=\eta_{1}^{-1}(\de(\lm)),\quad\mbox{with } \de(\lm)=\frac{(1+\de_0)}{\de_0}\lm.
\end{align*}

Take $g\in \h^{S_{o,o'}}\setminus B_{R(\lm)}$ and $\lm\in[0,\lm_{0})$. If there is $x\in S_{o,o'}$ such that $g_{x}=\Gm_{x}$, then $Q_{x,y}=0$ by definition for all $y\in S_{o,o'}$. In this case we are done. Therefore,  assume that $g_{x}\neq \Gm_{x}$ for all $x\in S_{o,o'}$.
Moreover, assume $d_{\Sp^{1}}({\al_{x,y}^{(\pi)},0})\leq\de_0$ for all $\pi\in\Pi$ and $x,y\in S_{o'}$ since otherwise we are also done. So, our aim is to show $d_{\Sp^{1}}({\al_{x,y}^{(\pi)},0})>\de_0$ for some $\pi\in\Pi$ and all $y\in S_{o}$.

We start with two claims. Set $\tau_{o'}^{(\pi)}:=\sum_{y\in S_{o'}}({g_{y}^{(\pi)}-\Gm_{y}})$.

\emph{Claim 1: There is  $\pi\in\Pi$ such that
$|\tau_{o'}^{(\pi)}| \geq \frac{(1+\de_0)}{\de_0}\lm.$}\\
Proof of Claim~1.
We assumed that $g\not\in B_{R(\lm)}$, so there is $\oh x\in S_{o,o'}$ such that $\gm_{\oh x}(g)\geq R(\lm)$. By the choice of $o'$ with respect to $\mathrm{(M1^*)}$ there is $\pi$ such that $\pi(\oh x)\in S_{o'}$.
By Lemma~\ref{l:sum}~(2.), the definitions of  $\eta_{1}$ and $R$ above and Lemma~\ref{l:eps1}, we obtain
\begin{align*}
|\tau_{o'}^{(\pi)}| =\Big|{\sum_{y\in S_{o'}}({g_{y}^{(\pi)}-\Gm_{y}}})\Big| \geq\mo{g_{\oh x}-\Gm_{\oh x}}\geq \eta_{1}(R(\lm))=\de(\lm)=\frac{(1+\de_0)}{\de_0}\lm.
\end{align*}

\emph{Claim 2:  There is  $\pi\in\Pi$ such that for all $x\in S_{o,o'}$}
\begin{align*}
d_{\Sp^{1}}\big({\arg(\tau_{o'}^{(\pi)}-w) ,\arg{(g_{x}^{(\pi)}-\Gm_{x})}}\big) \leq 3\de_0.
\end{align*}
Proof of Claim~2.  Note that $|w/\tau_{o'}^{(\pi)}|\leq \de_0/(1+\de_0)$  by the assumption $w\in[-\lm,\lm]$ and Claim~1.
We put $\xi_{x}^{(\pi)}={g_{x}^{(\pi)}-\Gm_{x}}$ and get
\begin{align*}
d_{\Sp^{1}}\big({\arg({\tau_{o'}^{(\pi)}-w}), \arg(\xi_{x})}\big) &\leq d_{\Sp^{1}}\big({\arg({1-w/\tau_{o'}^{(\pi)}}),0} \big) +d_{\Sp^{1}}\big({\arg{(\tau_{o'}^{(\pi)})}, \arg(\xi_{x})}\big)
\leq3\de_0,
\end{align*}
where we used the formulas $\arg(\xi+\zeta)=\arg(\xi)+\arg(1+\zeta/\xi)$ and $d_{\Sp^{1}}(\al+\be,\gm)\leq d_{\Sp^{1}}(\al,0)+d_{\Sp^{1}}(\be,\gm)$  in the first step and  Lemma~\ref{l:arg} (applied with $\xi=0$ and $\zeta=w/\tau_{o'}^{(\pi)}$)  and Lemma~\ref{l:sum}~(3.)  in the second step.

By the definitions $g_{o'}^{(\pi)}$ and \eqref{e:rec}, we get  for $\pi\in \Pi$
\begin{align*}
\arg\big({g_{o'}^{(\pi)}-\Gm_{o'}} \big) &=\arg\ab{\frac{-1}{z-v^{\per}(o')-w+\sum_{x\in S_{o'}} g_{x}^{(\pi)}}-\frac{-1}{z-v^{\per}(o')+\sum_{x\in S_{o'}} \Gm_{x}}}\\
&=\arg\big({g_{o'}^{(\pi)}\Gm_{o'}
({\tau_{o'}^{(\pi)} -w})}\big).
\end{align*}
By definition of $\de_0$ we have that
$d_{\Sp^{1}}(\arg \Gm_{o'},\be)\geq4\de_0$ for $\be\in\{0,\pi\}$.
Moreover, since $g_{o'}^{(\pi)}\in\h$, we also have $d_{\Sp^{1}}(\arg g_{o'}^{(\pi)},\be)>0$ for $\be\in\{0,\pi\}$. Hence, $d_{\Sp^{1}}({\arg({g_{o'}^{(\pi)}\Gm_{o'}})},0)>4\de_0$.
Combining this with  the equation above and Claim~2,  we get for the permutation $\pi\in\Pi$ taken from Claim~2 and all $y\in S_{o}$
\begin{align*}
d_{\Sp^{1}}\big({{\al_{o',y}^{(\pi)}}(g),0}\big)&= d_{\Sp^{1}}\ab{\arg({g_{o'}^{(\pi)}\Gm_{o'}}(\tau_{o'}^{(\pi)} -w)),\arg(g_{y}^{(\pi)}-\Gm_{y})}\\
&\geq d_{\Sp^{1}}\ab{\arg({g_{o'}^{(\pi)}\Gm_{o'}}),0}
-d_{\Sp^{1}}\ab{\arg({\tau_{o'}^{(\pi)}-w}), \arg{(g_{y}-\Gm_{y})}}>\de_0,
\end{align*}
where we used $d_{\Sp}(\al+\be,\gm)\geq d_{\Sp}(\al,0)-d_{\Sp}(\be,\gm)$.
The assertion follows by letting $c:=\cos\de_0$.
\end{proof}

\subsection{Proof of the uniform contraction estimate}\label{ss:kaproof}
In this  subsection we finally put the pieces together in order to prove Proposition~\ref{p:ka}.

\begin{proof}[Proof of Proposition~\ref{p:ka}]
Let $I\subset\Sigma$ be compact. Let $\eps_0=\eps_0(I)$, $\de_0=\de_0(I)$ be as defined in Subsection~\ref{ss:Case2} and~\ref{ss:Case3}. Moreover, let $R:[0,\lm_0)\to[0,\infty)$ and $\lm_0>0$ be given by Proposition~\ref{p:al>0}. Choose $\de_1\in(0,c_1)$, where $c_1$ is defined in Subsection~\ref{ss:Case1} and pick $\de_2\in(0,\eps_0 \de_1)$.

Let $\lm\in[0,\lm_0)$, $g\in \h^{S_{o,o'}}\setminus B_{R(\lm)}$, $w\in[-\lm,\lm]$  and $z\in I+i[0,1]$. We now consider the three cases which we already distinguished above:

\emph{Case~1: $\Vis_{\gm}(g,\de_1)\neq S_{o,o'}$.}
The statement  follows directly from Proposition~\ref{p:invisgm}.

\emph{Case~2: $\Vis_{\gm}(g,\de_1)=S_{o,o'}$ but $\Vis_{\Im}^{o'}(g\circ\pi,\de_2)\neq S_{o'}$ for some $\pi\in\Pi$.}
The statement  follows by  combining of Proposition~\ref{p:invisIm} and Lemma~\ref{l:using_c<1}.

\emph{Case~3: $\Vis_{\gm}(g,\de_1)=S_{o,o'}$ and $\Vis_{\Im}^{o'}(g\circ\pi,\de_2)= S_{o'}$ for all $\pi\in\Pi$.} We find  $\pi$, $i$, $\ow x$, $\ow y$ which satisfy the assumptions of Lemma~\ref{l:using_alQ<1} as follows: By Proposition~\ref{p:al>0}  there is $\pi\in\Pi$ and $c<1$ such that  $Q_{x,y}^{(\pi)}\cos\al_{x,y}^{(\pi)}\leq c$ either for some $x,y\in S_{o'}$ or for some $x=o'$ and all $y\in S_{o}$.\\
In the first case we have $y\in\Vis_{\Im}^{o'}(g\circ\pi,\de_2)$ by assumption. We choose consequently $i=o'$, $\ow x=x$ and $\ow y=y$. \\
For the second case let $u\in \Vis_{\Im}^{o}(g\circ\pi,\de_2)$. We set $i=o$, $\ow y=u$ and  pick $\ow x\in S_{o}\setminus\{u\}$ arbitrary.\\
With these choices we see by Lemma~\ref{l:using_alQ<1} that the assumptions of Lemma~\ref{l:using_c<1} are satisfied and the statement follows.
\end{proof}


\section{A vector inequality}\label{s:VI}

In this section we prove the crucial inequality from which we deduce Theorem~\ref{main3}.

Let $\A$ be a finite set and $M:\A\times\A\to\N_0$ satisfying $\mathrm{(M0)}$ and $\mathrm{(M1^*)}$. We denote by $\T_j$ the tree generated by $M$ whose root $o(j)$ carries label $j\in\A$ and by $V_j$ the vertex set of $\T_{j}$.

Let the stochastic matrix $P=P(z):{\A\times\A}\to[0,\infty)$ for $z\in \h\cup\Sigma$ be given by
\begin{align*}
P_{j,k}:=\sum_{{x\in S_{o(j),o(j)'},\,}\atop{x\mbox{\scriptsize{ carries label }}k}}p_{x},\qquad j,k\in\A,
\end{align*}
where $p_x$ was defined in Section~\ref{s:TSE} as functions of $\Gm_{x}(z,\Lp)$. Therefore, the matrix $P$ depends continuously on $z$.
In the case of regular trees there is only one label, so the matrix $P$ is then a number. In this sense, the proposition below can be considered as a higher dimensional analogue of \cite[Theorem~6]{FHS2}.

Denote  $V=\bigcup_{j\in \A} V_{j}$. Then, $\ell^{2}(V)=\bigoplus_{j\in \A}\ell^{2}(V_{j})$.
Let now random variables $(v,\te):V\to(-1,1)\times(-1,1)$ satisfying $\mathrm{(P1)}$ and $\mathrm{(P2)}$ be given. We get random operators $H^{\lm,\om}$ on $\ell^{2}(V)$ which decompose into a direct sum of operators on $\ell^{2}(V_{j})$.
For $p>1$, $z\in\h$ and $\lm\geq0$ denote
\begin{align*}
\EE\gm:=\left(\EE\left( \gm(\Gm_{o(j)}(z,H^{\lm}),\Gm_{o(j)}(z,\Lp))^{p} \right)\right)_{j\in\A}.
\end{align*}

The goal of this section is to prove the following vector inequality. Its proof relies on the two step expansion estimate, Proposition~\ref{p:expansion}, and the uniform contraction estimate, Proposition~\ref{p:ka}.

\begin{prop}\label{p:VI}(Vector inequality) Let $I\subset \Sigma$ be  compact and $p>1$. Then there are $\de= \de(I, p) > 0$, $\lm_{0}=\lm(I,p)>0$ and $C : [0,\lm_0) \to [0,1)$ with $\lim_{\lm\to 0}C(\lm)= 0 $ such that
\begin{align*}
    \EE\gm\leq(1-\de) P\EE\gm + C(\lm)
\end{align*}
for all $z\in I+i[0,1]$, $\lm\in[0,\lm_0)$ and all $(v,\te)$ satisfying $\mathrm{(P1)}$ and $\mathrm{(P2)}$.
\end{prop}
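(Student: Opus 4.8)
The plan is to prove the $j$-th component of the inequality separately for each label $j\in\A$, taking in the end $\de:=\tfrac12\min_{j\in\A}\de_{o(j)}>0$, where $\de_{o(j)}$ is the contraction gap furnished by the uniform contraction estimate. So fix $j\in\A$, put $o=o(j)$, and let $o'\in S_o$ be a forward neighbour chosen with respect to $\mathrm{(M1^*)}$. I abbreviate $g^{\om}:=\ow\Gm_{S_{o,o'}}^{\lm,\om}$; by the recursion \eqref{e:rec} the associated $g^{\om}_o$, $g^{\om}_{o'}$ of Section~\ref{ss:OSE} are exactly $\Gm_{o}(z,H^{\lm,\om})$, $\Gm_{o'}(z,H^{\lm,\om})$, with random parameters $w=\lm v_o^{\om}$, $w'=\lm v_{o'}^{\om}$ and $\vartheta=\lm\te_{o'}^{\om}$, all lying in $(-\lm,\lm)$. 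The two step expansion estimate (Proposition~\ref{p:expansion}) then gives, pointwise in $\om$, for $z\in I+i(0,1]$, $\lm\in[0,1]$,
\begin{align*}
\gm\big(\Gm_{o(j)}(z,H^{\lm,\om}),\Gm_{o(j)}(z,\Lp)\big)\leq(1+c(\lm))\sum_{x\in S_{o,o'}}p_x\,c_x(g^{\om})\,\gm_x(g^{\om})+c(\lm),
\end{align*}
where $\gm_x(g^{\om})=\gm\big(\ow\Gm_x^{\lm,\om},\Gm_x(z,\Lp)\big)$ and $c(\lm)\to0$. For $\Im z>0$ all Green functions here have imaginary parts bounded below and moduli bounded above uniformly on $I+i(0,1]$, so the left-hand side is bounded; in particular $\EE\gm$ is finite, which makes all rearrangements below legitimate (the case $\Im z=0$ being recovered afterwards by continuity).

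Next I would raise this inequality to the $p$-th power, take $\EE$, and use $(a+b)^p\leq(1+\eps)^{p-1}a^p+(1+\eps^{-1})^{p-1}b^p$ with $\eps=c(\lm)$ to absorb the additive $c(\lm)$ at the cost of errors vanishing as $\lm\to0$. By $\mathrm{(P1)}$ the random variables $(\Gm_x(z,H^{\lm}),\te_x)$, $x\in S_{o,o'}$, are independent, since the forward trees of distinct vertices of $S_{o,o'}$ are pairwise disjoint, and by $\mathrm{(P2)}$ they are identically distributed for vertices of equal label; hence the law of $g^{\om}$ is invariant under every $\pi\in\Pi$ and I may replace the sum over $x$ by its average $\tfrac1{|\Pi|}\sum_{\pi\in\Pi}(\,\cdot\circ\pi)$ inside $\EE$. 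Recognising the averaged contraction coefficient $\ka_o^{(p)}$, then splitting $\EE$ over $\{g^{\om}\notin B_{R_o(\lm)}\}$ and its complement --- on the first set using $\ka_o^{(p)}\leq1-\de_o$ (Proposition~\ref{p:ka}) together with $c_x^{(\pi)}\leq1$, on the second using $\ka_o^{(p)}\leq1$, $c_x^{(\pi)}\leq1$ (Lemma~\ref{l:Z}) and $\gm_x(g^{\om})\leq R_o(\lm)$ for all $x$ --- one obtains
\begin{align*}
(\EE\gm)_j\leq(1+o(1))\Big[(1-\de_o)\,\EE\Big(\tfrac1{|\Pi|}\textstyle\sum_{\pi\in\Pi}\sum_{x\in S_{o,o'}}p_x\,\gm_x^{(\pi)}(g^{\om})^p\Big)+R_o(\lm)^p\Big]+o(1).
\end{align*}

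It remains to identify the inner expectation with $(P\EE\gm)_j$. Because $\pi\in\Pi$ preserves labels and the law of $g^{\om}$ is $\Pi$-invariant, $\EE\big(\gm_x^{(\pi)}(g^{\om})^p\big)=\EE\big(\gm(\ow\Gm_{o(k)}^{\lm,\om},\Gm_{o(k)}(z,\Lp))^p\big)$ whenever $x$ carries label $k$, independently of $\pi$; hence the average over $\Pi$ collapses and, grouping the vertices of $S_{o,o'}$ by their labels,
\begin{align*}
\EE\Big(\tfrac1{|\Pi|}\textstyle\sum_{\pi\in\Pi}\sum_{x\in S_{o,o'}}p_x\,\gm_x^{(\pi)}(g^{\om})^p\Big)=\sum_{k\in\A}P_{j,k}\,\EE\big(\gm(\ow\Gm_{o(k)}^{\lm,\om},\Gm_{o(k)}(z,\Lp))^p\big).
\end{align*}
The linear perturbation estimate (Lemma~\ref{l:ti}) with $a=\te_{o(k)}^{\om}$, $b=0$, together with one further $(a+b)^p$ splitting, bounds each summand by $(1+o(1))(\EE\gm)_k+o(1)$, so that --- using $\sum_k P_{j,k}=1$ --- the right-hand side is $\leq(1+o(1))(P\EE\gm)_j+o(1)$. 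Chaining this with the previous display, every multiplicative error multiplies $(1-\de_o)$; since there are only finitely many labels we can choose $\lm_0>0$ so small that these products are all $\leq1-\de$ for $\lm<\lm_0$, with $\de=\tfrac12\min_{j}\de_{o(j)}>0$, and we absorb the additive remainders into a single $C(\lm)\to0$, shrinking $\lm_0$ once more so that $C(\lm)<1$. This yields $(\EE\gm)_j\leq(1-\de)(P\EE\gm)_j+C(\lm)$ for every $j$, which is the asserted vector inequality.

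I expect the main difficulty to be the bookkeeping in the last two steps: one has to interleave the symmetrisation over $\Pi$ with the split inside/outside $B_{R_o(\lm)}$ so that the genuine contraction $1-\de_o$ survives, and simultaneously keep every $1+o(1)$ and $o(1)$ --- coming from the two step expansion, the linear perturbation estimate and the repeated elementary power inequalities --- uniform in $z\in I+i(0,1]$ and in the label $j$, having also checked at the outset that the moments $\EE\gm$ are finite so that the inequalities may be rearranged at all.
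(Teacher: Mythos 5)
Your proposal is correct and follows essentially the same route as the paper: two step expansion applied to $g^{\om}=\ow\Gm^{\lm,\om}_{S_{o,o'}}$, symmetrisation over $\Pi$ to recognise $\ka_o^{(p)}$, the split at $B_{R_o(\lm)}$ with the uniform contraction estimate on the complement and the trivial bound $R_o(\lm)^p$ inside, then regrouping by labels to produce $P$ and removing the factor $(1+\lm\te_x)$ via Lemma~\ref{l:ti} with $b=0$, finally shrinking $\lm_0$ so the accumulated $(1+o(1))$ factors stay below $(1-\de_o)^{-1}$. The only differences are cosmetic (you apply the two step expansion pointwise before taking $p$-th powers, whereas the paper first derives the averaged estimate \eqref{e:EGm}), so nothing further is needed.
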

\begin{proof}
Let $I\subset \Sigma$ be compact and $p>1$. Denote $o=o(j)$ for $j\in\A$.  We let $z\in I+i[0,1]$ and $g\in\h^{S_{o,o'}}$ be given by the random variables   $g_x^{\om}=(1+\lm\te_{x}^{\om})\Gm_{S_{x}}(z,H^{\lm,\om})$, $x\in S_{o,o'}$.
By \eqref{e:rec} and the definition of $g_{o'}$, we have $g_{o'}(z,\lm v_{o}^{\om},g)=\Gm_{o'}(z,H^{\lm,\om})$. Moreover, by $\mathrm{(P1)}$ and $\mathrm{(P2)}$, the random variables $g_x$ and $g_{y}$ are identically distributed for all vertices $x$, $y$ that carry the same label and independent for all $x,y\in S_{o,o'}$. This gives, in particular, $\EE(\sum_{x\in S_{o,o'}} p_{x}c_{x}(g)\gm_{x}(g)^p)=\EE\big(\sum_{x\in S_{o,o'}} p_{x}c_{x}^{(\pi)}(g)\gm_{x}^{(\pi)}(g)^p\big)$ for all $\pi\in\Pi$.
We use this to compute
\begin{align*}
\EE\Big(\Big(\sum_{x\in S_{o,o'}}p_{x}c_{x}(g)\gm_{x}(g)\Big)^p\Big) &=\frac{1}{|\Pi|}\EE\Big(\sum_{\pi\in\Pi}\Big(\sum_{x\in S_{o,o'}}p_{x}c_{x}^{(\pi)}(g)\gm_{x}^{(\pi)}(g)\Big)^p\Big) \\
& =\frac{1}{|\Pi|}\EE\Big({\ka_{o}^{(p)}(z,\lm v_{o'}^{\om},g){\sum_{\pi\in\Pi}\sum_{x\in S_{o,o'}}p_{x}\gm_{x}^{(\pi)}(g)^p}}\Big) .
\end{align*}
In order to apply the uniform contraction estimate, Proposition~\ref{p:ka}, we split up the expectation value. Let $\lm_0':=\max_{k\in\A}\lm_{o(k)}(I)>0$ and $R:=\max_{k\in\A}R_{o(k)}:[0,\lm_0']\to[0,\infty)$ where $\lm_{o(k)}$ and $R_{o(k)}$, $k\in\A$, are given by Proposition~\ref{p:ka}. Hence, $\lim_{\lm\to0}R(\lm)=0$.
For $\lm\in[0,\lm_0']$ let $\mathds{1}_{R}$ be the characteristic function of $B_{R(\lm)}=\{g\in \h^{ S_{o,o'}}\mid \gm(g_x,\Gm_{x}(z,\Lp))\leq R(\lm)\mbox{ for all } x\in S_{o,o'}\}$ and $\mathds{1}_{R}^{c}$ be the characteristic function of its complement $B_{R(\lm)}^{c}=\h^{S_{o,o'}}\setminus B_{R(\lm)}$. We proceed   using  $\ka_o^{(p)}=\ka_{o}^{(p)}(z,\lm v_{o'}^{\om},g)\leq1$ in the second term
\begin{align*}
\ldots&\leq\frac{1}{|\Pi|}\EE\Big({\ka_{o}^{(p)} \sum_{\pi\in\Pi}\sum_{x\in S_{o,o'}}p_{x}{\gm_{x}^{(\pi)}(g)}^p\mathds{1}_{R}^{c}(g)} \Big)+\frac{1}{|\Pi|} \EE\Big({\sum_{\pi\in\Pi}{\sum_{x\in S_{o,o'}}p_{x}{\gm_{x}^{(\pi)}(g)}^p} \mathds{1}_{R}(g)}\Big),\\
&=\frac{1}{|\Pi|}\EE\Big({\ka_{o}^{(p)} \sum_{\pi\in\Pi}{\sum_{x\in S_{o,o'}}p_{x}{\gm_{x}^{(\pi)}(g)}^p}\mathds{1}_{R}^{c}(g)} \Big) +\EE\Big({{\sum_{x\in S_{o,o'}}p_{x}{\gm_{x}^{(\pi)}(g)}^p}\mathds{1}_{R}(g)}\Big),
\end{align*}
where we used $\EE\ap{\sum_\pi {\sum_{x\in S_{o,o'}}p_{x}\gm_{x}^{(\pi)}(g)^p}\mathds{1}_{R}(g)} =|\Pi|\EE\ap{{\sum_{x\in S_{o,o'}}p_{x}\gm_{x}(g)^p}\mathds{1}_{R}(g)}$ for the second term, as $B_{R(\lm)}$ is $\pi$ invariant. We now apply the uniform contraction estimate, to the first term with $\de_0:=\max_{k\in\A}\de_{o(k)}(I,p)$ and $\de_{o(k)}$  taken from  Proposition~\ref{p:ka}. For the second term, note that ${\sum_{x}p_{x}\gm_{x}(g)^p}\leq \max_{x}\gm(g_{x},\Gm_{x}(z,\Lp))^{p}\leq R(\lm)^{p}$ for $g\in B_{R(\lm)}$ by Lemma~\ref{l:Z}. This gives
\begin{align*}
\ldots&\leq (1-\de_0)\frac{1}{|\Pi|}\EE\Big({ \sum_{\pi\in\Pi}{\sum_{x\in S_{o,o'}} p_{x}\gm_{x}^{(\pi)}(g)^p} \mathds{1}_{R}^{c}(g)}\Big) +R(\lm)^p\\
&\leq(1-\de_0){{\sum_{x\in S_{o,o'}}p_{x}\EE\big(\gm_{x}(g)^p\big)}}+R(\lm)^p,
\end{align*}
where we used the $\pi$ invariance of $B_{R(\lm)}^{c}$ and $\EE({\gm_{x}(g)^{p}\mathds{1}_{R}^{c}})\leq \EE({\gm_{x}(g)^{p}})$ in the second step.
In summary, this yields
\begin{align}\label{e:EGm}\tag{$\spadesuit$}
\EE\Big(\Big({\sum_{x\in S_{o,o'}}p_{x}c_{x}(g)\gm_{x}(g)}\Big)^p\Big)  \leq (1-\de_0) \sum_{x\in S_{o,o'}}p_{x}\EE\ap{\gm_{x}(g)^{p}}+R(\lm)^p.
\end{align}
We want to combine this with the two step expansion estimate, Proposition~\ref{p:expansion}. Let us first observe the  following consequence of
Jensen's inequality
\begin{align}\label{e:jensen}\tag{$\heartsuit$}
(r+s)^{p}=\Big({\frac{1}{1+s}(1+s)r+\frac{s}{1+s}(1+s)}\Big)^{p} \leq (1+s)^{p-1}r^{p}+(1+s)^{p}s,
\end{align}
for $ r,s\geq0$.
We denote $\Gm_{x}^{\lm,\om}=\Gm_{x}(z,H^{\lm,\om})$, $\om\in\Om$, and $\Gm_{x}=\Gm_{x}(z,\Lp)$.
For each $j\in\A$, we apply Proposition~\ref{p:expansion}, inequality \eqref{e:jensen} and the inequality above to obtain
\begin{align*}
\EE\Big(\gm(\Gm_{o(j)}^{\lm},\Gm_{o(j)})^p\Big) &\leq\EE\Big(\Big((1+c(\lm)){\sum_{x\in S_{o,o'}}p_{x}c_{x}(g)\gm_{x}(g)} +c(\lm)\Big)^p\Big)\\
&\leq(1+c(\lm))^{2p-1}\EE\Big(\Big(\sum_{x\in S_{o,o'}}p_{x}c_{x}(g)\gm_{x}(g)\Big)^p\Big) +(1+c(\lm))^{p-1}c(\lm)
\end{align*}
Recall that $g_{x}^{\om}=(1+\lm \te_{x}^{\om})\Gm_{x}(z,H^{\lm,\om})$, $\om\in\Om$.
We now apply Lemma~\ref{l:ti}  to $\gm_{x}(g)$ with $a=(1+\lm \te_{x}^{\om})$ and $b=0$
to get with inequality \eqref{e:jensen}
\begin{align*}
\gm_{x}(g)^{p} \leq\ab{(1+c_0'(\lm))\gm(\Gm_{x}^{\lm},\Gm_{x})+c_{0}'(\lm)}^{p} \leq(1+c_{0}'(\lm))^{2p-1} \gm(\Gm_{x}^{\lm},\Gm_{x})^{p}+(1+c_{0}'(\lm))^{p}c_{0}'(\lm).
\end{align*}
Since $\Gm_x^{\lm}$ is equal to $\Gm_{o(k)}^{\lm}$ in distribution whenever $x$ carries label $k$, we get  by the definition  of $P$ and the estimate \eqref{e:EGm} that
\begin{align*}
\EE\ap{\gm(\Gm_{o(j)}^{\lm},\Gm_{o(j)})^p}
\leq(1+c_{1}(\lm))(1-\de_0) \sum_{k\in\A}P_{j,k}\EE\ap{\gm_{o(k)}(\Gm_{o(k)}^{\lm})^{p}}+C(\lm).
\end{align*}
where $(1+c_1(\lm))=(1+c(\lm))^{2p-1}(1+c_0'(\lm))^{2p-1}$
and $C(\lm)=(1+c(\lm))^{p-1}c(\lm)+(1+c'_{0}(\lm))^{p-1}c_{0}'(\lm) +R(\lm)$ satisfy $c_{1}(\lm), C(\lm)\to 0$ since  $c(\lm),c_{0}'(\lm),R(\lm)\to0$ for $\lm\to0$  by Proposition~\ref{p:expansion}, Lemma~\ref{l:ti} and Proposition~\ref{p:ka}. As $c_{1}(\lm)\to0$ for $\lm\to0$, there is $\lm_0>0$ and $\de>0$ that satisfy the assertion of the proposition. Thus, we finished the proof.
\end{proof}

\section{Proof of the theorems}\label{s:proofs}
Let $\T$ be a rooted tree with root  $o$ and $x_0$ be an arbitrary vertex. Considering $x_0$ as the new root of $\T$, we denote the truncated Green functions of a self adjoint operator $H$ with respect to the root $x_0$ by $\Gm_{x}^{(x_0)}(z,H)$. With this notation we have $\Gm_{x}(z,H)=\Gm^{(o)}_{x}(z,H)$.

\begin{lemma}\label{l:G} Let $x_0$ be a vertex in a rooted tree $\T$ with root $o$ and $\Sigma_{0}\subset\R$ be such that $\h\to\h$, $z\mapsto\Gm_{x}(z,H)=\Gm_{x}^{(o)}(z,H)$ extends to a continuous function $\h\cup\Sigma_{0}\to\h$. Then $\h\to\h$, $z\mapsto\Gm_{x}^{(x_{0})}(z,H)$ extends to a continuous function $\h\cup \Sigma_{0}\to\h$.
\end{lemma}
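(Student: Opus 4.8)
The plan is to reduce to the case where $x_0$ is a neighbour of $o$ and then iterate along the geodesic. Let $o=y_0,y_1,\dots,y_m=x_0$ be the geodesic from $o$ to $x_0$ in $\T$, and for $0\le k\le m$ write $\Gm_x^{(y_k)}$ for the truncated Green functions with respect to the root $y_k$. I would prove by induction on $k$ the statement $(\mathrm{I}_k)$: for \emph{every} vertex $x$ the map $z\mapsto \Gm_x^{(y_k)}(z,H)$ extends continuously from $\h$ to $\h\cup\Sigma_0$ as a map into $\h$. (We read the hypothesis of the lemma as $(\mathrm{I}_0)$, i.e.\ as a statement over all vertices, which is how $\Sigma$ is used elsewhere.) Then $(\mathrm{I}_m)$ is the assertion, so everything reduces to the induction step, which involves only the two adjacent roots $r:=y_k$ and $r':=y_{k+1}$ and the inductive hypothesis $(\mathrm{I}_k)$.

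The first ingredient of the induction step is the bookkeeping of forward trees. Since $r$ and $r'$ are adjacent, for any vertex $x\notin\{r,r'\}$ the first edge of the geodesic from $x$ to $r$ equals the first edge of the geodesic from $x$ to $r'$ — the two geodesics differ only by appending, or deleting, the edge $\{r,r'\}$ at the far end. Hence the predecessor of $x$, and therefore its forward tree (the component of $\T$ left after removing the edge from $x$ to its predecessor), is the same with respect to $r$ and to $r'$, so $\Gm_x^{(r')}(z,H)=\Gm_x^{(r)}(z,H)$; for these $x$ the claim follows from $(\mathrm{I}_k)$. Only $x=r$ and $x=r'$ remain.

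For $x=r$, with respect to the root $r'$ the predecessor of $r$ is $r'$, so the forward neighbours of $r$ are the $y\sim r$ with $y\neq r'$, and for each such $y$ we have $\Gm_y^{(r')}=\Gm_y^{(r)}$ by the previous step; thus \eqref{e:rec} reads
\begin{align*}
-\frac{1}{\Gm_r^{(r')}(z,H)}=z-v(r)+\sum_{y\sim r,\,y\neq r'}|t(r,y)|^2\,\Gm_y^{(r)}(z,H).
\end{align*}
By $(\mathrm{I}_k)$ the right-hand side extends continuously to $\h\cup\Sigma_0$, and for $z\in\h\cup\Sigma_0$ its imaginary part is $\Im z+\sum_y|t(r,y)|^2\Im\Gm_y^{(r)}(z,H)$, which is strictly positive since each $\Im\Gm_y^{(r)}>0$ and — this is the one place where the standing assumptions enter — $r$ has at least one forward neighbour other than $r'$. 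Hence $z\mapsto -1/\Gm_r^{(r')}(z,H)$ has a continuous $\h$-valued extension, and since $w\mapsto -1/w$ maps $\h$ to $\h$, so does $z\mapsto\Gm_r^{(r')}(z,H)$. For $x=r'$ I would note $\Gm_{r'}^{(r')}(z,H)=G_{r'}(z,H_{\T})=G_{r'}(z,H)$ and apply \eqref{e:rec} at the root $r'$, whose forward neighbours are all its neighbours:
\begin{align*}
-\frac{1}{\Gm_{r'}^{(r')}(z,H)}=z-v(r')+\sum_{y\sim r',\,y\neq r}|t(r',y)|^2\,\Gm_y^{(r)}(z,H)+|t(r,r')|^2\,\Gm_r^{(r')}(z,H);
\end{align*}
the finite sum extends continuously by $(\mathrm{I}_k)$ and the last term by the previous step, and on $\h\cup\Sigma_0$ the imaginary part of the right-hand side is at least $|t(r,r')|^2\Im\Gm_r^{(r')}(z,H)>0$, so inverting once more yields the desired continuous $\h$-valued extension. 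This closes the induction and proves the lemma.

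I expect the main point — and the only place any care is needed — to be the forward-tree bookkeeping together with the observation that one must know $r$ has a second forward neighbour to keep the boundary value of $\Gm_r^{(r')}$ in the \emph{open} upper half plane (without it the limit can be real or even singular, e.g.\ when $r$'s only forward neighbour is $r'$); this is exactly guaranteed in the setting of Theorem~\ref{main3}. Everything else is the elementary fact that $w\mapsto -1/w$ preserves $\h$ and that adding a term with nonnegative imaginary part does too, applied to the recursion \eqref{e:rec}.
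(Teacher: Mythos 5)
Your proof is correct and takes essentially the same route as the paper: an induction driven by the recursion \eqref{e:rec}, the fact that re-rooting leaves the forward trees of all but finitely many vertices unchanged, and the strict positivity of the imaginary parts of the already-controlled truncated Green functions on $\h\cup\Sigma_{0}$. The differences are only organizational --- you re-root edge by edge along the geodesic and obtain the $\h$-valued extension by directly inverting the continuously extended right-hand side of \eqref{e:rec}, whereas the paper re-roots at $x_{0}$ in one step, inducts downward on the distance to $x_{0}$, and excludes the boundary limits $0$ and $\infty$ by a short contradiction argument; the nondegeneracy you flag (each vertex other than the new root must retain a forward neighbour) is exactly the assumption the paper's proof uses implicitly.
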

\begin{proof} If $N$ is the distance of $x_0$ to the root $o$, then we have for all vertices $x$ of distance $N+1$ from $x_0$ that $\Gm_{x}^{(x_0)}(z,H)=\Gm_{x}(z,H)$. Hence, they have continuous extensions from $\h\cup \Sigma_{0} $ to $\h$. Suppose we have shown the statement for all vertices of distance $n\leq N$ from $x_0$. Let $x$ be a vertex with distance $n-1$ from $x_0$. If $\Gm^{(x_0)}_{x}(E+i\eta,H)$ converges neither to $0$ nor to $\infty$ as $\eta\to0$, then continuity follows from \eqref{e:rec}.
Taking the modulus  in \eqref{e:rec} gives
\begin{align*}\frac{1}{|\Gm_{x}^{(x_0)}(z,H)|}\geq {\sum_{y }|t(x,y)|^{2}\Im \Gm_{y}^{(x_0)}(z,H)} \end{align*}
where the sum is over all forward neighbors of $x$ with respect to $x_0$. Since $\Im\Gm_{y}^{(x_0)}(z,H)$ are assumed to stay positive in the limit $\eta\to0$, it follows that $\Gm_{x}^{(x_0)}$ stays bounded. On the other hand, taking imaginary parts in \eqref{e:rec} and multiplying by $|\Gm_{x}^{(x_0)}(z,H)|^{2}$ yields
\begin{align*}     \Im \Gm_{x}^{(x_0)}(z,H) \geq{\sum_{y}|t(x,y)|^{2}\Im \Gm_{y}^{(x_0)}(z,H)}|\Gm_{x}^{(x_0)}(z,H)|^{2}, \end{align*}
If $\Im \Gm_{x}^{(x_0)}(z,H)$ is zero, then $|\Gm_{x}^{(x_0)}(z,H)|$ must be zero as $\Im \Gm_{y}^{(x_0)}(z,H)>0$ by assumption. However, this is impossible as \eqref{e:rec} then yields that at least one of the  $ \Gm_{y}^{(x_0)}(z,H)$ tends to $\infty$ as $\eta\to0$.
\end{proof}

Before we come to the proof of Theorem~\ref{main1}, we deduce from the vector inequality that all moments of the $\gm$-distances  of the Green functions are bounded.

\begin{prop}\label{p:EEG}Let $\T$ be a  rooted tree such that the forward trees of all vertices from a certain sphere on are generated by substitution matrices that satisfy $\mathrm{(M0)}$, $\mathrm{(M1^*)}$ and $\mathrm{(M2)}$.
For all  $I\subset\Sigma$ compact and  $p>1$, there is $\lm_0=\lm_0(I,p)>0$ and $C_{x}:[0,\lm_0)\to[0,\infty)$ for $x\in V$ with $\lim_{\lm\to0}C(\lm)=0$ such that
\begin{align*}
\sup_{z\in I+i(0,1]} \EE\ab{\gm(G_{x}(z,H^{\lm}),G_{x}(z,\Lp))^{p}} \leq C(\lm)
\end{align*}
 for all $\lm\in[0,\lm_0)$ and  all $(v,\te)$ satisfying $\mathrm{(P1)}$ and $\mathrm{(P2)}$.
\end{prop}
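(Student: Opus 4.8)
The plan is to feed the vector inequality, Proposition~\ref{p:VI}, into a Perron--Frobenius argument in order to control the $\gm$-distance of the Green functions at the roots $o(j)$, $j\in\A$, and then to propagate this bound first to arbitrary vertices and finally to the untruncated Green functions $G_x$.

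\emph{Step 1 (Perron--Frobenius).} Fix $z\in I+i(0,1]$. Since $H^{\lm}$ and $\Lp$ are bounded, for each such $z$ the recursion \eqref{e:rec} shows that $|\Gm_x(z,\cdot)|$ is bounded and $\Im\Gm_x(z,\cdot)$ is bounded below by a positive constant, uniformly in $x$; hence every component of $\EE\gm$ appearing in Proposition~\ref{p:VI} is finite. The stochastic matrix $P(z)$ depends continuously on $z$ and is irreducible (by $\mathrm{(M2)}$), so $1$ is its simple Perron eigenvalue, with right eigenvector the all-ones vector, and the associated positive left eigenvector $u(z)$, normalized by $\sum_j u_j(z)=1$, depends continuously on $z$; by compactness of $I+i[0,1]$ there is $u_{\min}>0$ with $u_j(z)\ge u_{\min}$ for all $j$ and all $z$. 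Pairing $\EE\gm\le(1-\de)P\EE\gm+C(\lm)$ with $u(z)$, using $\langle u,P\EE\gm\rangle=\langle P^{\top}u,\EE\gm\rangle=\langle u,\EE\gm\rangle$ and the finiteness of $\langle u,\EE\gm\rangle$, gives $\de\,\langle u,\EE\gm\rangle\le C(\lm)$, hence
\[
\sup_{z\in I+i[0,1]}\ \max_{j\in\A}\ \EE\bigl(\gm(\Gm_{o(j)}(z,H^{\lm}),\Gm_{o(j)}(z,\Lp))^{p}\bigr)\ \le\ \frac{C(\lm)}{\de\,u_{\min}}=:\widetilde C(\lm),
\]
with $\widetilde C(\lm)\to0$ as $\lm\to0$, uniformly over all $(v,\te)$ satisfying $\mathrm{(P1)}$ and $\mathrm{(P2)}$.

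\emph{Step 2 (arbitrary vertices, truncated).} Let $N$ be a sphere beyond which all forward trees are of finite cone type. If $d(o,x)\ge N$, then $\T_x$ is isomorphic to the tree $\T_{j}$ with root label $j$ the label of $x$, and by $\mathrm{(P1)}$ and $\mathrm{(P2)}$ the pair $(\Gm_x(z,H^{\lm}),\Gm_x(z,\Lp))$ has the same law as $(\Gm_{o(j)}(z,H^{\lm}),\Gm_{o(j)}(z,\Lp))$, so the bound $\widetilde C(\lm)$ applies verbatim. For the finitely many $x$ with $d(o,x)<N$ I would argue by downward induction on the sphere: the one-step expansion (Lemma~\ref{l:OSE}, taken with $i$ equal to the vertex in question), combined with Lemma~\ref{l:ti} applied to the rescaling $\Gm_w\mapsto(1+\lm\te_w)\Gm_w$, gives $\gm(\Gm_x(z,H^{\lm}),\Gm_x(z,\Lp))\le(1+c(\lm))\sum_{w\in S_x}p_w\,\gm(\Gm_w(z,H^{\lm}),\Gm_w(z,\Lp))+c(\lm)$ with non-random $p_w\ge0$, $\sum_w p_w=1$, and $c(\lm)\to0$ uniformly on $I+i[0,1]$; raising this to the $p$-th power via Jensen's inequality and the splitting \eqref{e:jensen} and then taking $\EE$ bounds $\EE(\gm(\Gm_x^{\lm},\Gm_x)^{p})$ by a convex combination of the (already controlled) quantities at the next sphere plus a term $\to0$. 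After finitely many steps this produces, for every vertex $x$, a function $C_x(\lm)\to0$ bounding $\sup_{z\in I+i[0,1]}\EE(\gm(\Gm_x^{\lm},\Gm_x)^{p})$.

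\emph{Step 3 (untruncated Green functions) and the main obstacle.} Re-rooting $\T$ at $x$ and applying \eqref{e:rec} at $x$ expresses $-1/G_x(z,H^{\lm})=z-v^{\per}(x)-\lm v_x+\sum_{y\sim x}(1+\lm\te(x,y))^{2}\,\Gm_y^{(x)}(z,H^{\lm})$, and likewise for $\Lp$, where $\Gm_y^{(x)}$ is the truncated Green function of the branch at $y$ pointing away from $x$ (which extends continuously to $I$ by Lemma~\ref{l:G}); for the forward neighbours $y$ of $x$ this branch is $\T_y$, already controlled in Step~2, and for the predecessor $\dot x$ it is $\T\setminus\T_x$ rooted at $\dot x$, which I reduce by iterating \eqref{e:rec} along the finite geodesic from $x$ to the root, each step peeling off finitely many genuine forward trees and a small error via Lemma~\ref{l:ti}. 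Combining this with the contraction estimate for $g\mapsto-1/(z+\sum g_y)$ contained in the proof of Lemma~\ref{l:OSE} and the same $p$-th power manipulations as in Step~2 yields $\sup_{z\in I+i(0,1]}\EE(\gm(G_x(z,H^{\lm}),G_x(z,\Lp))^{p})\le C_x(\lm)$ with $C_x(\lm)\to0$, which is the assertion. The one genuinely delicate point is the uniformity in $z$ in Step~1 --- that the Perron left eigenvector $u(z)$ stays bounded away from $0$ uniformly over the compact energy strip --- which rests on continuity of $z\mapsto P(z)$, irreducibility of $P(z)$, and simplicity of its Perron eigenvalue; the remaining steps are finite-range bookkeeping carried out with tools already established.
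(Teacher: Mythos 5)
Your proposal is correct and follows essentially the same route as the paper: the vector inequality combined with the Perron--Frobenius left eigenvector (bounded away from zero by compactness of $I+i[0,1]$) controls the roots $o(j)$, and finitely many applications of the one-step expansion, Lemma~\ref{l:ti}, Jensen's inequality and Lemma~\ref{l:G} propagate the bound to every $G_{x}$. The only difference is organizational: the paper re-roots at $x_{0}$ and expands $G_{x_{0}}$ through $(N+N_{0})$ spheres in a single pass, whereas you first bound all original truncated Green functions and then peel off the backward branch along the geodesic to the root; the two bookkeepings are equivalent.
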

\begin{proof}
Let $N$ be the sphere from which on all forward trees are generated by substitution matrices satisfying $\mathrm{(M0)}$, $\mathrm{(M1^*)}$ and $\mathrm{(M2)}$. Let $\T_{x}$ be such a forward tree generated by a substitution matrix $M$ over a label set $\A$. Let $P$ be the stochastic matrix defined before the vector inequality Proposition~\ref{p:VI}. By $\mathrm{(M1^*)}$ the entries of $P$ are positive whenever the entries of $M$ are. Therefore, by $\mathrm{(M2)}$ the matrix $P$ is irreducible and by the Perron Frobenius theorem there is a positive left eigenvector $u\in\R^{\A}$, such that $P^{\top}u=u$. Hence, letting $I\subset\Sigma$ be compact and $p>1$, we obtain by the vector inequality, Proposition~\ref{p:VI},
\begin{align*}
   \as{u, \EE\gm} \leq(1-\de) \as{u,P\EE\gm} + C(\lm)=(1-\de) \as{u,\EE\gm} + C(\lm)
\end{align*}
for all $\lm\in[0,\lm_0)$ and all $z\in I+i[0,1]$ and $\as{\cdot,\cdot}$ denotes the standard scalar product in $\R^{\A}$. This implies $\EE(\gm_{o(j)}^{p})\leq {C(\lm)}/({u_{j}\de})$, $j\in\A'$ and $u_{j}$ is uniformly bounded away from zero since $I+i[0,1]$ is compact. Thus, for all vertices $x$ that have distance larger or equal to $N$ from the root, it follows
\begin{align}\label{e:EE}\tag{$\diamondsuit$}
\EE(\gm(\Gm_{x}(z,H^{\lm})\Gm_{x}(z,\Lp))^{p})\leq {C_0(\lm)}
\end{align}
with $\lim_{\lm\to0}C_0(\lm)=0$ which can be chosen independently of $x$.

Let $x_0$ be an arbitrary vertex in the tree $\T$ with distance $N_0$ from the root. We reorder the tree $\T$ with respect to $x_0$ as a root.  We figure that the forward trees of all vertex in the $(N+N_0)$ sphere $S_{N+N_0}(x_0)$ of $x_0$ are generated by substitution matrices satisfying $\mathrm{(M0)}$, $\mathrm{(M1^*)}$, $\mathrm{(M2)}$ as $\T_{x}$ above. By \eqref{e:rec} we can expand $G_{x}(z,H^{\lm,\om})$ and $G_{x}(z,\Lp)$  in terms of its forward neighbors in $\T$ and note that after $(N+N_0)$ steps the Green functions of all forward neighbors are such that they satisfy \eqref{e:EE}. Applying the one step expansion estimate, Lemma~\ref{l:OSE}, and the linear perturbation estimate, Lemma~\ref{l:ti}, $(N+N_0)$ times, we obtain, by  estimating all terms $\Im \Gm_{x_i}^{(x_0)}/\sum \Im \Gm_{y}^{(x_0)}\leq 1$ and
$\sum_{y\in S_{x_i}}q_{y}Q_{x,y}\cos\al_{x,y}\leq 1$ ,
\begin{align*}
\gm(G_{x_{0}}(z,H^{\lm,\om}),G_{x_{0}}(z,\Lp))\leq (1+ c(\lm))\sum_{x\in S_{N+N_0}(x_0)} \gm(\Gm_{x}(z,H^{\lm,\om}),\Gm_{x}(z,\Lp))+ c(\lm).
\end{align*}
Here, $c(\lm)$ is a finite sum of products of the $c_{0}(\lm,h)$ from the linear perturbation estimate, Lemma~\ref{l:ti}, which can be chosen independently of $z$ by Lemma~\ref{l:G} above and, furthermore, such that $\lim_{\lm\to0}c(\lm)=0$.
Combining this with \eqref{e:jensen},  we get
\begin{align*}
{\EE(\gm(G_{x_{0}}(z,H^{\lm}),G_{x_{0}}(z,\Lp))^{p})}
\leq c'(\lm)\sum_{x\in S_{N+N_0}(x_0)} \EE(\gm(\Gm_{x}(z,H^{\lm}),\Gm_{x}(z,\Lp))^{p})+ c''(\lm),
\end{align*}
with $c'(\lm)=2^{|S_{N+N_0}(x_0)|}(1+ c(\lm))^{2p-1}$ and $c''(\lm)=(1+c(\lm))^{p-1}c(\lm)$. By \eqref{e:EE} the statement follows.
\end{proof}

\begin{proof}[Proof of Theorem~\ref{main3}]
Put $g_{x}^{\om}=G_{x}(z,H^{\lm,\om})$, $h_{x}=G_{x}(z,\Lp)$ and let $I\subset\Sigma$ be compact.
By the Cauchy-Schwarz inequality and Proposition~\ref{p:EEG} we get
\begin{align*}
\EE\ap{|g_{x}-h_{x}|^{p}}^2\leq\EE\ap{\gm(g_{x},h_{x})^{p}} \EE\ap{\ap{\Im{g_{x}}\Im{h_{x}}}^{p}}\leq C(\lm)\EE\ap{\mo{g_{x}}^{p}}|h_{x}|^{p}
\end{align*}
for all $\lm\in[0,\lm_{0})$.
We now  take the supremum over all $z\in I+i[0,1]$. Thus, it remains to check that $\sup_{z\in I+i[0,1]}\EE\ap{\mo{g_{x}}^{p}}|h_{x}|^{p}< \infty$.
Firstly, by Proposition~3 of \cite{KLW}, the Green functions $G_x$ are uniformly bounded, whenever the truncated Green functions $\Gm_{x}$ are.
Therefore, we conclude $\sup_{z\in I+i[0,1]}|h_{x}|<\infty$. Secondly,
we employ the inequality
\begin{align*}
\mo{\xi}\leq 4\gm(\xi,\zeta)\Im \zeta+2|\zeta|,\qquad \xi,\zeta\in\h,
\end{align*}
from \cite{FHS2}. (This inequality is obvious for $|\xi|\leq 2| \zeta|$ and follows from
$|\xi|\Im \xi\leq|\xi|^{2}\leq 2|\xi-\zeta|^2+2|\zeta|^2\leq 4|\xi-\zeta|^2$ for $|\xi|\geq2|\zeta|$.)
This inequality applied with $\xi=g_{x}$ and $\zeta=h_{x}$ combined with Proposition~\ref{p:EEG} and Jensen's inequality gives
\begin{align*}
\sup_{z\in I+i(0,1]} \EE\ap{|g_{x}|^{p}}<\infty.
\end{align*}
This finishes the proof of the first statement of Theorem~\ref{main3}.

For the second statement, we write $z=E+i\eta$ and denote by $\Leb(I)$ the Lebesgue measure of $I$.
The previous inequality also implies by Fatou's lemma and Fubini's theorem that
\begin{align*}
\EE\ap{\liminf_{\eta\to 0}\int_{I}|g_{x}|^{p}dE}\leq \liminf_{\eta\to0}\int_{I}\EE\ap{|g_{x}|^{p}}dE
\leq\sup_{z\in{I+i[0,1)}}\EE\ap{|g_{x}|^{p}} \Leb(I) <\infty.
\end{align*}
for all $\lm\in[0,\lm_{0})$. Hence, we have
\begin{align*}
{\liminf_{\eta\downarrow0}\int_{I}|g_{x}^{\om}|^{p}dE}<\infty.
\end{align*}
almost surely. This yields the absence of singular spectrum by a theorem of Klein \cite[Theorem~4.1]{Kl1}. Moreover, $\Im G_{x}(E+i\eta,H^{\lm,\om})$ can only tend to zero as $\eta\to0$ on subsets of $I\times\Om$ of $(\Leb\otimes\PP)$-measure zero. Otherwise, this would lead to a contradiction to Proposition~\ref{p:EEG}. Hence, we conclude  $I\subseteq \si(H^{\lm,\om})$ almost surely.
\end{proof}

\begin{proof}[Proof of Theorem~\ref{main1}] By Proposition~\ref{p:T}  the set $\Sigma$ consists of finitely many intervals and $\clos\Sigma=\si(\Lp)$. Hence, there is a finite set $\Sigma_0$ such that $\Sigma=\si(\Lp)\setminus\Sigma_0$. Therefore, the statement follows from Theorem~\ref{main3}.
\end{proof}

\textbf{Acknowledgements.} This work was started while M.K. was visiting Princeton University and finished during his visit at the Hebrew University where he was supported by the Israel Science Foundation (Grant no. 1105/10). He would like to thank the Department of Mathematics in both places for their hospitality.

\end{document}